\begin{document}

\title{Fidelity-Based Smooth Min-Relative Entropy: Properties and Applications}

\author{Theshani~Nuradha,~\IEEEmembership{Graduate Student Member,~IEEE,} and~
       Mark~M.~Wilde,~\IEEEmembership{Fellow,~IEEE}%
\thanks{T. Nuradha and M.~M. Wilde are with the school of Electrical and Computer Engineering, Cornell University, Ithaca, New York 14850, USA. (e-mail: pt388@cornell.edu, wilde@cornell.edu)}%
\thanks{Authors acknowledge the support from the National Science Foundation under Grant No.~2315398}%
}

\maketitle

\begin{abstract}
The fidelity-based smooth min-relative entropy is a distinguishability measure that has appeared in a variety of contexts in prior work on quantum information, including resource theories like thermodynamics and coherence. Here we provide a comprehensive study of this quantity. First we prove that it satisfies several  basic properties, including the  data-processing inequality. We also establish connections between the fidelity-based smooth min-relative entropy and other widely used information-theoretic quantities, including smooth min-relative entropy and smooth sandwiched R\'enyi relative entropy, of which the sandwiched R\'enyi relative entropy and smooth max-relative entropy are special cases. After that, we use these connections to establish the second-order asymptotics of the fidelity-based smooth min-relative entropy and all smooth sandwiched R\'enyi relative entropies, finding that the first-order term is the quantum relative entropy and the second-order term involves the quantum relative entropy variance. Utilizing the properties derived, we also show how the fidelity-based smooth min-relative entropy provides one-shot bounds for operational tasks in general resource theories in which the target state is mixed, with a particular example being randomness distillation. The above observations then lead to second-order expansions of the upper bounds on distillable randomness, as well as the precise second-order asymptotics of the distillable randomness of particular classical--quantum states. 
Finally, we establish semi-definite programs for smooth max-relative entropy and smooth conditional  min-entropy, as well as a bilinear program for the fidelity-based smooth min-relative entropy, which we subsequently use to explore the tightness of a bound relating the last to the first.
\end{abstract}

\begin{IEEEkeywords}
Fidelity based smoothing, quantum resource theories, randomness distillation, second-order asymptotics, smoothed R\'enyi divergences, smooth min-relative entropy.
\end{IEEEkeywords}

\tableofcontents

\section{Introduction}

\subsection{Background}

Distinguishability plays a fundamental role across all fields of sciences. The core toolbox in this regard involves distinguishability measures. In quantum information theory, these distinguishability measures then lead to information measures including mutual information and conditional entropy, as well as entanglement measures (see \cite[Chapters 4 and 5]{khatri2020principles} for a review). Furthermore, they also appear in resource theories as conversion rates \cite{chitambar2019quantum}.

The min-relative entropy is one such distinguishability measure \cite{datta2009min}, defined for a pure state $\psi\coloneqq | \psi\rangle\! \langle \psi| $ and a positive semi-definite operator $\sigma$  as
\begin{align}
     D_{\min}(\psi \Vert \sigma) & \coloneqq -\log_2 \operatorname{Tr}[ \psi \sigma ] \label{eq:min-rel-ent-pure-1}\\
     & = -\log_2 F(\psi, \sigma ). \label{eq:min-rel-ent-pure-2}  
\end{align}
In \eqref{eq:min-rel-ent-pure-2} above, we made use of the fidelity \cite{uhlmann1976transition}, defined generally for two positive semi-definite operators $\omega$ and $\tau$ as
\begin{equation}
F(\omega, \tau) \coloneqq \left \Vert \sqrt{\omega} \sqrt{\tau }\right \Vert_1^2.  
\label{eq:fidelity-def}
\end{equation}
In the case that $\sigma$ is a state, we can interpret the expression $\operatorname{Tr}[ \psi \sigma ]$ in \eqref{eq:min-rel-ent-pure-1} as the probability that the first measurement outcome occurs when performing the measurement $\{\psi,I-\psi\}$ on the state $\sigma$. Alternatively, we can interpret the expression $F(\psi, \sigma )$ in \eqref{eq:min-rel-ent-pure-2} as the fidelity between the states $\psi$ and $\sigma$. Thus, in the first case, we are interpreting $\psi$ as a measurement operator, and in the second case, we are interpreting $\psi$ as a state.

Given the above, there are at least two ways of generalizing the min-relative entropy when $\rho$ is a general state. The first approach, originally introduced in \cite{datta2009min} as the min-relative entropy, defines it as
\begin{equation}
\label{eq:min-rel-ent-def}
    D_{\min}(\rho \Vert \sigma) \coloneqq - \log_{2} \Tr[\Pi_\rho \sigma],
\end{equation}
 where $\Pi_\rho$ denotes the projection onto the support of $\rho$. Clearly, this definition generalizes the expression in \eqref{eq:min-rel-ent-pure-1}, interpreting $\Pi_\rho$ as a measurement operator.
As discussed in \cite{datta2009min}, this interpretation is directly linked with the operational meaning of the min-relative entropy in asymmetric hypothesis testing, as the minimum Type II error exponent if the Type I error probability is constrained to be equal to zero. This quantity has been further interpreted in a resource-theoretic manner as the maximum number of exact bits of asymmetric distinguishability that can be distilled from the pair $(\rho,\sigma)$ \cite{wang2019resource}. Given the strong link between hypothesis testing and information theory \cite{blahut74}, the min-relative entropy finds further use as the basic quantity underlying optimal rates at which zero-error distillation is possible \cite{wang2019resource,wang19channels}.

The second generalization of min-relative entropy to a general state $\rho$ employs the formula in \eqref{eq:min-rel-ent-pure-2}, and is defined as follows \cite{dupuis2014generalizedDmin}:
\begin{equation}
    D_{\min,F}(\rho\Vert\sigma) \coloneqq -\log_2 F(\rho,\sigma).
    \label{eq:FB-min-rel-ent-def}
\end{equation}
In the above definition and throughout, we use the extra subscript $F$ to indicate that this generalization is based on fidelity, and we refer to it as the $F$-min-relative entropy. This quantity is also equal to the sandwiched R\'enyi relative entropy of order 1/2 \cite{muller2013quantum,wilde2014strong}. It is not known to have an operational meaning in hypothesis testing; however, it has appeared in a variety of previous works in quantum information theory \cite{dupuis2014generalizedDmin,faist2016quantumDmin,zhao2019oneDmin,ramakrishnan2023moderate,RT22}.

In realistic experimental scenarios, it is pertinent to allow for approximations in terms of a smoothing parameter \cite{RW04,RennerThesis}, which characterizes the error that can occur in an experiment. We can then consider smoothed versions of the quantities in~\eqref{eq:min-rel-ent-def} and \eqref{eq:FB-min-rel-ent-def}.

Let us first consider smoothing the quantity in \eqref{eq:min-rel-ent-def}. The approach employed finds its roots naturally in asymmetric hypothesis testing, given the operational scenario discussed above. With this in mind, if we relax the aforementioned Type~I error probability constraint, such that it is allowed to be larger than zero, then we arrive at the smooth min-relative entropy with smoothing parameter~$\varepsilon \in [0,1]$ \cite{BD10,brandao2011one,WR12}:
\begin{equation}
\label{eq:hypothesis-testing-relative-entropy}
    D^\varepsilon_{\min}(\rho \Vert \sigma) \coloneqq -\log_{2} \inf_{0 \leq \Lambda \leq I} \left\{ \Tr[\Lambda \sigma]: \Tr[\Lambda \rho] \geq 1- \varepsilon \right\}, 
\end{equation}
with $\{ \Lambda, I- \Lambda\}$ being the measurement that distinguishes between $\rho$ and $\sigma$. This quantity is referred to as the smooth min-relative entropy in \cite{bu2018max,LiuZi-WenPhd18, wang2019resource,wang19channels} and as the hypothesis testing relative entropy in \cite{WR12} and many other papers, including \cite{tomamichel2013hierarchy,li2014second,dupuis2014generalizedDmin,datta2014second,DTW14,datta2016second,khatri2020principles}. Considering that \cite{dupuis2014generalizedDmin, wang2019resource}
\begin{equation}
\lim_{\varepsilon \to 0}  D^\varepsilon_{\min}(\rho \Vert \sigma)=D_{\min} (\rho \Vert \sigma),    
\end{equation}
it is clear that $D^\varepsilon_{\min}(\rho \Vert \sigma)$ is a smoothed version of $D_{\min}(\rho\Vert\sigma)$.

Let us now consider smoothing the quantity in \eqref{eq:FB-min-rel-ent-def}. Since $\rho$ is a state, the idea when smoothing is to search for a nearby subnormalized state $\widetilde{\rho}$, such that it satisfies $F(\widetilde{\rho}, \rho) \geq 1- \varepsilon$ for a fixed smoothing parameter $\varepsilon \in [0,1]$, and then replace $\rho$ with $\widetilde{\rho}$ when comparing with $\sigma$. This reasoning naturally leads to the fidelity-based smooth min-relative entropy:
\begin{equation}
\label{eq:smooth-min-relative-entropy}
    D_{\min,F}^{\varepsilon}(\rho\Vert\sigma)\coloneqq-\log_{2}\inf_{\widetilde
{\rho}\in\mathcal{D}_{\leq}}\left\{  F(\widetilde{\rho},\sigma):F(\widetilde
{\rho},\rho)\geq1-\varepsilon\right\}, 
\end{equation}
where $\mathcal{D}_{\leq}$ denotes the set of sub-normalized states (see \cref{def:smooth-min-relative-fidelity}, as well as \cref{rem:subnorm-choice} for the choice of sub-normalized states). In what follows, we refer to it more simply as the smooth $F$-min-relative entropy. This quantity has been considered in several prior works \cite{dupuis2014generalizedDmin,faist2016quantumDmin,zhao2019oneDmin,ramakrishnan2023moderate,RT22}. It is interesting to compare the expressions involved in \eqref{eq:hypothesis-testing-relative-entropy} and \eqref{eq:smooth-min-relative-entropy}, where we observe that the main difference is that $D^{\varepsilon}_{\min}$ compares $\rho$ and $\sigma$ to a measurement operator $\Lambda$ via a trace overlap, whereas $D^{\varepsilon}_{\min,F}$ compares $\rho$ and $\sigma$ to a subnormalized state $\widetilde{\rho}$ via the fidelity.

By building on the recent observations of \cite{lami2022upper}, one contribution of the present paper is that the smooth $F$-min-relative entropy finds use in operational tasks such as randomness distillation. In this task, the goal is to distill a state that is close in fidelity to a maximally classically correlated state, which is a mixed state. As such, the approach to smoothing taken in \eqref{eq:smooth-min-relative-entropy} is more relevant in this scenario than that in \eqref{eq:hypothesis-testing-relative-entropy} and can be used to obtain upper bounds on the one-shot distillable randomness of a bipartite state. More generally, and as discussed in \cite{lami2022upper}, we suspect that the ideas put forward here will find use in quantum resource transformations in which the target state is a mixed state, and we provide some evidence in \cref{sec:apps-gen-res} that this is the case.

More broadly, the main goal of the present paper is to provide a comprehensive study of the fidelity-based smooth min-relative entropy in \eqref{eq:smooth-min-relative-entropy}, which, as indicated above, could be useful for understanding the fundamental limits of resource transformations in general resource theories.

\subsection{Contributions}

In this paper, we first derive several properties of the fidelity-based smooth min-relative entropy. In particular, we prove that it satisfies data processing (\cref{thm:data-processing}), as well as scaling, super-additivity, monotonicity, etc.~(\cref{thm:other-properties}).
We note here that the data-processing inequality was already established in \cite[Theorem~3]{RT22}, but here we provide an independent proof.
Then we proceed to establish its connections with other quantum information-theoretic quantities including sandwiched R\'enyi relative entropy and its smooth variants, smooth max-relative entropy, and smooth min-relative entropy. 

Next, with the assistance of the derived connections, we provide a second-order asymptotic analysis for the smooth $F$-min-relative entropy (\cref{thm:second-order-Dmin}).  There we find that the first-order term is the quantum relative entropy and the second-order term involves the quantum relative entropy variance. In addition, we derive the second-order behaviour of the smooth sandwiched R\'enyi relative entropy (\cref{cor:second-order-smooth-sandwiched-Renyi}). This corollary indicates that, in the asymptotic i.i.d.~setting and up to the second order, there is no difference between all of the smooth sandwiched R\'enyi relative entropies for all $\alpha >1$: they are all equivalent to the smooth max-relative entropy in this setting. Similarly, in the asymptotic i.i.d.~setting and up to the second order, there is no difference between all of them for $\alpha \in [1/2,1)$: they are all equivalent to the smooth min-relative entropy in this setting.

Furthermore, we show how the smooth $F$-min-relative entropy provides one-shot bounds for operational tasks in general resource-theoretic settings, with a particular analysis focusing on randomness distillation from bipartite states. We derive second-order expansions of the upper bounds on the LOCC-assisted distillable randomness (\cref{thm:Upper-Bound-Second-Order-general}), as well as the precise second-order asymptotics of the distillable randomness of particular classical--quantum states (\cref{thm:second-order-expansion-CR-some-cq}).

Moreover, we provide a method to compute the smooth $F$-min-relative entropy by means of a bilinear program (\cref{prop:compute-smooth-min-relative-entropy}). We also provide semi-definite programs (SDPs) for smooth max-relative entropy and smooth conditional min-entropy (Propositions~\ref{prop: SDP for smooth max normalized} and \ref{prop:SDP-for-smooth-max-with-sub-normalized}), which may be of independent interest.

\subsection{Organization}

The rest of our paper is organized as follows.  In \cref{Sec:background}, we introduce notation and preliminaries. The focus of \cref{Sec:Properties} is on deriving some basic properties of the smooth $F$-min-relative entropy, including data processing. Connections to other quantum information-theoretic quantities are established in \cref{Sec:Connections}. In \cref{Sec:Second-order-asymptotic}, we study the second-order asymptotics of the smooth $F$-min-relative entropy. We explore how smooth $F$-min-relative entropy provides bounds in operational tasks related to general resource theories in \cref{Sec: Randomness Distillation}. In \cref{Sec:Computational-analysis} we provide methods to compute the smooth $F$-min-relative entropy and related quantities, including the smooth max-relative entropy. 
Finally, \cref{Sec:Conclusion} provides concluding remarks and future directions.

\section{Preliminaries} \label{Sec:background}

\subsection{Basic Concepts and Notation}

We begin by reviewing basic concepts from quantum information theory and refer the reader to \cite{khatri2020principles} for more details. A quantum system~$R$ is identified with a finite-dimensional Hilbert space~$\mathcal{H}_R$. We denote the set of linear operators acting on $\mathcal{H}_R$ by $\mathcal{L}(\mathcal{H}_R)$. The support of a linear operator $X \in \mathcal{L}(\mathcal{H}_R)$ is defined to be the orthogonal complement of its kernel, and we denote it by $\operatorname{supp}(X)$.
Let $\operatorname{T}(X)$ 
denote the transpose of $X$. 
The partial transpose of $C_{AB} \in \mathcal{L}(\mathcal{H}_A \otimes \mathcal{H}_B)$ on the system~$A$ is represented as $\T_A(C_{AB})$. Let $\Tr\!\left[C_{AB} \right]$ denote the trace of~$C_{AB}$, and let $\Tr_A \!\left[C_{AB}\right]$ denote the partial trace of $C$ over the system $A$. We use the standard notation $C_A \equiv \Tr_{B}[C_{AB}]$ and $C_B \equiv \Tr_{A}[C_{AB}]$ to denote the marginals of $C_{AB}$. The trace norm  of an operator $B$ is defined as $\left\|B\right\|_1 \coloneqq \Tr[\sqrt{B^\dagger B} ]$. For Hermitian operators $A$ and $B$, the notation $A \geq B$ indicates that $A-B$ is a positive semi-definite (PSD) operator, while $A > B$ indicates that $A-B$ is a positive definite operator.

A quantum state $\rho_R\in\mathcal{L}(\mathcal{H}_R)$ of system $R$ is a  PSD, unit-trace operator acting on $\mathcal{H}_R$. We denote the set of all density operators acting on $\mathcal{H}_R$ as $\mathcal{D}(\mathcal{H}_R)$ (we also refer to the set of density operators by $\cD$ when there is no ambiguity regarding the underlying Hilbert space). A rank-one state $\rho_R$  is called pure, and in this case there exists a state vector $| \psi \rangle \in \mathcal{H}_R$ such that $\rho_R= | \psi \rangle\!\langle \psi | $. Otherwise,
$\rho_R$ is called a mixed state. By the spectral decomposition theorem, every state can be written as a convex combination
of pure, orthogonal states. 
A quantum channel $\mathcal{N}: \mathcal{L}(\mathcal{H}_A ) \to \mathcal{L}(\mathcal{H}_B)$ is a linear, 
completely positive and trace-preserving (CPTP) map from $\mathcal{L}(\mathcal{H}_A)$ to $\mathcal{L}(\mathcal{H}_B)$. We denote the Hilbert--Schmidt adjoint of $\mathcal{N}$ by $\mathcal{N}^\dagger$. A measurement of a quantum system $R$ is described by a
positive operator-valued measure (POVM) $\{M_y\}_{y \in \mathcal{Y}}$, which is defined to be a collection of PSD operators  satisfying $\sum_{y \in \mathcal{Y}} M_y= I_{R}$, where $I_{R}$ is the identity operator and $\mathcal{Y}$ is a finite alphabet. The Born rule dictates that, when applying the above POVM to a state $\rho$, the probability of observing the outcome $y$ is given by $\Tr\!\left[M_y \rho \right]$.

\subsection{Divergences}

First, let us recall the definition of the  fidelity-based smooth
min-relative entropy, which is the main distinguishability measure of interest in our paper. 

\begin{definition}[Fidelity-based smooth min-relative entropy] \label{def:smooth-min-relative-fidelity}
Fix $\varepsilon\in\left[  0,1\right]  $. The fidelity-based smooth
min-relative entropy of a state $\rho$ and a PSD operator
$\sigma$ is defined as
\begin{equation}
D_{\min,F}^{\varepsilon}(\rho\Vert\sigma)\coloneqq-\log_{2}\inf_{\widetilde
{\rho}\in\mathcal{D}_{\leq}}\left\{  F(\widetilde{\rho},\sigma):F(\widetilde
{\rho},\rho)\geq1-\varepsilon\right\}  , \label{eq:smooth-dmin}%
\end{equation}
where the fidelity of PSD 
operators $\omega$ and $\tau$ is
defined in  \eqref{eq:fidelity-def}
and $\mathcal{D}_{\leq}$ denotes the set of subnormalized states; i.e.,%
\begin{equation}
\mathcal{D}_{\leq}\coloneqq\left\{  \omega:\omega\geq0,\operatorname{Tr}%
[\omega]\leq1\right\}  .
\end{equation}
\end{definition}
Hereafter, we simply abbreviate this quantity as the smooth $F$-min-relative entropy.
Recalling the definition in \cite[Eq.~(8)]{dupuis2014generalizedDmin}, observe that when $\rho$ is a state and $\sigma$ is a PSD operator, there is no difference between  \cref{def:smooth-min-relative-fidelity} and the definition given in \cite[Eq.~(8)]{dupuis2014generalizedDmin}.

We call a distinguishability measure $\boldsymbol{D}(\cdot \Vert \cdot)$ a generalized divergence \cite{SW12} if it satisfies the data-processing inequality; i.e., for every channel $\cN$, state $\rho$, and PSD operator $\sigma$, 
\begin{equation}\label{eq:generalized-divergence}
    \boldsymbol{D}(\rho \Vert \sigma) \geq \boldsymbol{D}\!\left(\cN(\rho) \Vert \cN(\sigma) \right).
\end{equation}

Fix $\alpha \in (0,1) \cup (1, \infty)$. The sandwiched R\'enyi relative entropy of a state $\rho$ and a PSD operator $\sigma$ is defined as \cite{muller2013quantum, wilde2014strong}
\begin{multline}
  \widetilde{D}_\alpha(\rho\Vert \sigma)  \coloneqq  \\
  \begin{cases}\frac{1}{\alpha-1} \log_{2} \widetilde{Q}_\alpha(\rho\Vert \sigma)  & \mbox{if} \ \alpha \in (0,1), \mbox{or} \\ & \alpha \in (1, \infty), \ \supp(\rho)\subseteq\supp(\sigma),\\
+\infty & \mbox{otherwise},
\end{cases}\label{eq:sandwiched-renyi-def}
\end{multline}
where
\begin{equation}
    \widetilde{Q}_\alpha(\rho\Vert \sigma) \coloneqq \Tr\!\left[ \left( \sigma^{\frac{1-\alpha}{2\alpha}}\rho \sigma^{\frac{1- \alpha}{2\alpha}} \right)^\alpha \right].
\end{equation}
It is a generalized divergence for $\alpha \in [1/2,1)\cup(1,\infty)$ \cite{FL13} (see also \cite{W18opt,W18optISIT}), and satisfies the following $\alpha$-monotonicity property \cite{muller2013quantum}:
\begin{equation}
    0 < \alpha \leq \beta \quad \Rightarrow \quad \widetilde{D}_\alpha(\rho\Vert \sigma) \leq \widetilde{D}_\beta(\rho\Vert \sigma) .
    \label{eq:sandwiched-alpha-mono}
\end{equation}
For $\alpha=1/2$, observe that
\begin{equation} 
  \widetilde{D}_{1/2}(\rho\Vert \sigma) = -\log_{2} F(\rho,\sigma) = D_{\min,F}(\rho\Vert \sigma).
  \label{eq:alpha-1/2-D_min,F}
\end{equation}
The special case of $\alpha\to 1$ reduces to the quantum relative entropy \cite{muller2013quantum, wilde2014strong}:
\begin{equation}
\lim_{\alpha \to 1} \widetilde{D}_\alpha(\rho \Vert \sigma) = D(\rho \| \sigma) \label{eq:limit-alpha-1},
\end{equation}
the latter defined as \cite{U62}
\begin{equation}
    D(\rho \| \sigma) \coloneqq  \Tr\!\left [\rho (\log_2 \rho - \log_2 \sigma) \right]
    \label{eq:relative-entropy-def}
\end{equation}
if $\supp(\rho)\subseteq\supp(\sigma)$ and as $+\infty$ otherwise.
The relative entropy variance 
$V(\rho\Vert \sigma)$ is defined as \cite{tomamichel2013hierarchy,li2014second}
\begin{equation}
   V(\rho\Vert \sigma) \coloneqq \Tr\! \left[ \rho \left( \log_{2} \rho - \log_{2} \sigma \right)^2 \right] -\left( D(\rho \Vert \sigma) \right)^2. 
   \label{eq:rel-ent-var}
\end{equation}

The Petz--R\'enyi relative entropy is defined for $\alpha\in(0,1)\cup(1,\infty)$, a state $\rho$, and a PSD operator $\sigma$ as \cite{P85,P86}
\begin{multline}
  D_\alpha(\rho\Vert \sigma)  \coloneqq  \\
  \begin{cases}\frac{1}{\alpha-1} \log_{2} Q_\alpha(\rho\Vert \sigma)  & \mbox{if} \ \alpha \in (0,1), \mbox{or} \\ & \alpha \in (1, \infty), \ \supp(\rho)\subseteq\supp(\sigma),\\
+\infty & \mbox{otherwise},
\end{cases}
\end{multline}
where
\begin{equation}
    Q_\alpha(\rho\Vert \sigma) \coloneqq \Tr[  \rho^\alpha \sigma^{1- \alpha}  ].
\end{equation}
It is a generalized divergence for $\alpha \in (0,1)\cup(1,2]$ \cite{P85,P86}.

The max-relative entropy of a state $\rho$ and a PSD operator~$\sigma$ is defined as \cite{datta2009min}
\begin{equation}
D_{\max}(\rho\Vert\sigma)\coloneqq \log_{2}\inf_{\lambda\geq0}\left\{
\lambda:\rho\leq\lambda\sigma\right\}  .
\end{equation}
It is known from  \cite{muller2013quantum} that
\begin{equation} \label{eq: dmax infity}
D_{\max}(\rho\Vert\sigma)=\lim_{\alpha\rightarrow\infty}\widetilde{D}_{\alpha
}(\rho\Vert\sigma).
\end{equation}
The smooth max-relative entropy of a state $\rho$ and a PSD
operator $\sigma$ is defined for $\varepsilon\in\left[  0,1\right]  $ as \cite{datta2009min} (see also \cite{tomamichel2015quantum})
\begin{equation}
D_{\max}^{\varepsilon}(\rho\Vert\sigma)\coloneqq\inf_{\widetilde{\rho}%
\in\mathcal{D}_{\leq}}\left\{  D_{\max}(\widetilde{\rho}\Vert\sigma
):F(\widetilde{\rho},\rho)\geq1-\varepsilon\right\}  .
\label{eq:smooth-dmax-def}%
\end{equation}
We also define the following variant
\begin{equation} \label{eq:smooth-max-normalized-to-hypothesis}
    \widehat{D}_{\max}^{\varepsilon}(\rho\Vert\sigma)\coloneqq\inf_{\widetilde{\rho}%
\in\mathcal{D}}\left\{  D_{\max}(\widetilde{\rho}\Vert\sigma
):F(\widetilde{\rho},\rho)\geq1-\varepsilon\right\} , 
\end{equation} 
and note that
\begin{equation}
    D^\varepsilon_{\max}(\rho \Vert \sigma) \leq \widehat{D}^\varepsilon_{\max}(\rho \Vert \sigma), \label{eq:smooth-max-with-sub-and-not-normalized}
\end{equation}
which follows since $\mathcal{D} \subseteq \cD_{\leq}$.

\section{Properties of Smooth \texorpdfstring{$F$}{F}-Min-Relative Entropy} \label{Sec:Properties}

In this section, we derive some basic properties satisfied by the smooth $F$-min-relative entropy, including data processing, scaling, super-additivity, monotonicity, etc. Then in subsequent sections, we utilize these properties to obtain bounds on operational quantities arising in general resource-theoretic settings, in particular on the net rate of the distillable randomness of a bipartite state.

Before deriving these properties, let us first observe that we can always restrict the constraint in the definition of $D_{\min,F}^{\varepsilon}$ to be an equality constraint, by following the same line of reasoning from \cite[Appendix~B]{kaur2017upper}. 
\begin{remark}[Inequality constraint in the definition of smooth $F$-min-relative entropy]
\label{rem:ineq-to-eq-constraint}
For $\varepsilon \in [0,1)$,
the smooth $F$-min-relative entropy in \eqref{eq:smooth-dmin} can be rewritten as 
\begin{equation}
D_{\min,F}^{\varepsilon}(\rho\Vert\sigma)=-\log_{2}\inf_{\widetilde{\rho}%
\in\mathcal{D}_{\leq}}\left\{  F(\widetilde{\rho},\sigma):F(\widetilde{\rho
},\rho)=1-\varepsilon\right\}  .
\end{equation}
Indeed, if $\widetilde{\rho}$ is such that $F(\widetilde{\rho},\rho
)>1-\varepsilon$, then we can choose a positive constant $c=\left(
1-\varepsilon\right)  /F(\widetilde{\rho},\rho)\in\left(  0,1\right)  $ such
that $F(\rho^{\prime},\rho)=1-\varepsilon$, where $\rho^{\prime}%
=c\widetilde{\rho}$ and $\rho^{\prime}\in\mathcal{D}_{\leq}$. Furthermore, we
also have that $F(\widetilde{\rho},\sigma)>F(\rho^{\prime
},\sigma)$, so that the objective function only decreases under this change.
\end{remark}

\subsection{Data Processing}

In this section, we prove the data-processing inequality for the smooth $F$-min relative entropy. As noted previously, this finding was already established in \cite[Theorem~3]{RT22}, but here we provide an independent proof.
Before establishing the data-processing inequality, we prove the unitary invariance of the smooth $F$-min-relative entropy, which assists in proving the data-processing inequality.

\begin{lemma}[Unitary Invariance] \label{lem: unitary invariance}
   The smooth $F$-min-relative entropy $D_{\min,F}^{\varepsilon}$ is invariant under the action
of a unitary channel $\mathcal{U}$, i.e.,%
\begin{equation}
D_{\min,F}^{\varepsilon}(\rho\Vert\sigma)=D_{\min,F}^{\varepsilon}(\mathcal{U}%
(\rho)\Vert\mathcal{U}(\sigma)). \label{eq:unitary-inv}%
\end{equation} 
for all $\varepsilon\in\left[  0,1\right)  $, every state $\rho$, and PSD operator $\sigma$.
\end{lemma}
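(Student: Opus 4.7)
The plan is to reduce this to the well-known unitary invariance of the fidelity, namely $F(\mathcal{U}(\omega),\mathcal{U}(\tau)) = F(\omega,\tau)$ for any unitary channel $\mathcal{U}$ and PSD operators $\omega,\tau$. This holds because a unitary channel acts as $\mathcal{U}(X) = U X U^\dagger$, and one can commute $U$ and $U^\dagger$ through the square roots and trace norm in the definition \eqref{eq:fidelity-def} using $\sqrt{U X U^\dagger} = U \sqrt{X} U^\dagger$ and the unitary invariance of $\|\cdot\|_1$.

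The second ingredient is that the map $\widetilde{\rho} \mapsto \mathcal{U}(\widetilde{\rho})$ is a bijection on the set $\mathcal{D}_{\leq}$ of subnormalized states, with inverse $\mathcal{U}^\dagger$. Indeed, $\mathcal{U}$ preserves positivity and the trace, so it sends $\mathcal{D}_{\leq}$ into itself, and the same holds for $\mathcal{U}^\dagger$.

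Putting these together, I would perform the change of variables $\widetilde{\tau} \coloneqq \mathcal{U}(\widetilde{\rho})$ in the optimization defining $D_{\min,F}^{\varepsilon}(\mathcal{U}(\rho)\Vert \mathcal{U}(\sigma))$. The feasibility condition $F(\widetilde{\tau},\mathcal{U}(\rho)) \geq 1-\varepsilon$ becomes $F(\widetilde{\rho},\rho) \geq 1-\varepsilon$ by unitary invariance of fidelity, and the objective $F(\widetilde{\tau},\mathcal{U}(\sigma))$ becomes $F(\widetilde{\rho},\sigma)$ for the same reason. Since the bijection takes the constraint set to itself, the infima coincide and \eqref{eq:unitary-inv} follows after taking $-\log_2$.

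There is no real obstacle here; the argument is essentially a one-line change of variables, and the proof is included mainly as a warm-up and stepping stone to the full data-processing inequality (where the bijection argument no longer works and one must exploit Stinespring dilations together with monotonicity of fidelity under partial trace, or another structural property).
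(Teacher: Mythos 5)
Your proof is correct and follows essentially the same route as the paper: the paper's two-inequality argument (mapping feasible points forward with $\mathcal{U}$ and backward with $\mathcal{U}^\dagger$, using unitary invariance of fidelity each time) is exactly your bijective change of variables written out in both directions. No gaps.
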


\begin{IEEEproof}
Let $\widetilde{\rho}\in\mathcal{D}_{\leq}$ satisfy
$F(\widetilde{\rho},\rho)\geq1-\varepsilon$. Then, by the unitary invariance
of fidelity, we conclude that
\begin{equation}
F(\mathcal{U}(\widetilde{\rho}),\mathcal{U}(\rho))=F(\widetilde
{\rho},\rho).     
\end{equation}
Thus, $F(\mathcal{U}(\widetilde{\rho}),\mathcal{U}(\rho
))\geq1-\varepsilon$, and%
\begin{align}
-\log_{2} F(\widetilde{\rho},\sigma)  &  =-\log_{2} F(\mathcal{U}(\widetilde{\rho
}),\mathcal{U}(\sigma))\\
&  \leq D_{\min,F}^{\varepsilon}(\mathcal{U}(\rho)\Vert\mathcal{U}(\sigma)).
\end{align}
Since the inequality holds for every $\widetilde{\rho}\in\mathcal{D}_{\leq}$
satisfying $F(\widetilde{\rho},\rho)\geq1-\varepsilon$, we conclude that%
\begin{equation} \label{eq: one direction unitary}
D_{\min,F}^{\varepsilon}(\rho\Vert\sigma)\leq D_{\min,F}^{\varepsilon}%
(\mathcal{U}(\rho)\Vert\mathcal{U}(\sigma)).
\end{equation} 
To prove the opposite inequality, let $\widehat{\rho}\in\mathcal{D}_{\leq}$
satisfy $F(\widehat{\rho},\mathcal{U}(\rho))\geq1-\varepsilon$. Then, by
unitary invariance of fidelity,%
\begin{align}
F(\widehat{\rho},\mathcal{U}(\rho))  &  =F(\mathcal{U}^{\dag}(\widehat{\rho
}),(\mathcal{U}^{\dag}\circ\mathcal{U})(\rho))\\
&  =F(\mathcal{U}^{\dag}(\widehat{\rho}),\rho).
\end{align}
Thus, $F(\mathcal{U}^{\dag}(\widehat{\rho}),\rho)\geq1-\varepsilon$, and%
\begin{align}
-\log_{2} F(\widehat{\rho},\mathcal{U}(\sigma))  &  =-\log_{2} F(\mathcal{U}^{\dag
}(\widehat{\rho}),(\mathcal{U}^{\dag}\circ\mathcal{U})(\sigma))\\
&  =-\log_{2} F(\mathcal{U}^{\dag}(\widehat{\rho}),\sigma)\\
&  \leq D_{\min,F}^{\varepsilon}(\rho\Vert\sigma).
\end{align}
Since the inequality holds for every $\widehat{\rho}\in\mathcal{D}_{\leq}$
satisfying $F(\widehat{\rho},\mathcal{U}(\rho))\geq1-\varepsilon$, we conclude
that%
\begin{equation} \label{eq: other direction unitary}
D_{\min,F}^{\varepsilon}(\mathcal{U}(\rho)\Vert\mathcal{U}(\sigma))\leq D_{\min,F}^{\varepsilon}(\rho\Vert\sigma).
\end{equation}
Then \eqref{eq:unitary-inv} follows from \eqref{eq: one direction unitary} and \eqref{eq: other direction unitary}.
\end{IEEEproof}

\medskip

Now, we are ready to present and prove the data-processing inequality for the smooth $F$-min-relative entropy.

\begin{theorem} \label{thm:data-processing}
The smooth $F$-min-relative entropy obeys the data-processing inequality:%
\begin{equation}
D_{\min,F}^{\varepsilon}(\rho\Vert\sigma)\geq D_{\min,F}^{\varepsilon}%
(\mathcal{N}(\rho)\Vert\mathcal{N}(\sigma)),\label{eq:DP-ineq-channels}%
\end{equation}
for all $\varepsilon\in\left[  0,1\right)  $, every state $\rho$, PSD operator $\sigma$, and channel~$\mathcal{N}$.
\end{theorem}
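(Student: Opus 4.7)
The plan is to reduce the data-processing inequality for an arbitrary channel to the case of the partial trace via a Stinespring dilation, and then to handle the partial trace case via a purification argument based on Uhlmann's theorem. To implement the first step, I would write $\mathcal{N}(X) = \Tr_E[V X V^\dagger]$ for some isometry $V \colon \mathcal{H}_A \to \mathcal{H}_B \otimes \mathcal{H}_E$, and extend $V$ to a unitary $U$ by adjoining a pure ancilla $|0\rangle\!\langle 0|_F$ so that $V \rho V^\dagger = U(\rho \otimes |0\rangle\!\langle 0|_F) U^\dagger$. \cref{lem: unitary invariance} then gives
\begin{equation}
    D_{\min,F}^\varepsilon(V\rho V^\dagger \Vert V\sigma V^\dagger) = D_{\min,F}^\varepsilon(\rho \otimes |0\rangle\!\langle 0|_F \Vert \sigma \otimes |0\rangle\!\langle 0|_F),
\end{equation}
and a short direct calculation based on the identity $F(\widetilde{\tau}_{AF}, \rho_A \otimes |0\rangle\!\langle 0|_F) = F(\langle 0|_F \widetilde{\tau}_{AF} |0\rangle_F, \rho_A)$ shows that tensoring with a pure ancilla leaves the smooth $F$-min-relative entropy unchanged (the optimization over $\widetilde{\tau}_{AF} \in \cD_{\leq}$ effectively reduces to optimization over its $(0,0)$-block on the ancilla). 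This reduces the problem to the case $\mathcal{N} = \Tr_E$ of partial trace.

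For the partial trace case, the strategy is to show that every feasible downstream optimizer $\widetilde{\omega}_B \in \mathcal{D}_\leq$ (i.e., one with $F(\widetilde{\omega}_B, \rho_B) \geq 1 - \varepsilon$) can be lifted to some $\widetilde{\rho}_{BE} \in \mathcal{D}_\leq$ satisfying $\Tr_E[\widetilde{\rho}_{BE}] = \widetilde{\omega}_B$, $F(\widetilde{\rho}_{BE}, \rho_{BE}) \geq 1-\varepsilon$, and $F(\widetilde{\rho}_{BE}, \sigma_{BE}) \leq F(\widetilde{\omega}_B, \sigma_B)$. Taking the infimum over $\widetilde{\omega}_B$ on both sides then yields \eqref{eq:DP-ineq-channels}.

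To construct such an extension, I would fix a purification $|\rho\rangle_{BER}$ of $\rho_{BE}$, which simultaneously serves as a purification of $\rho_B$ with joint ancilla $ER$. Uhlmann's theorem, applied to the subnormalized states $\rho_B$ and $\widetilde{\omega}_B$, yields a (subnormalized) purification $|\widetilde{\omega}\rangle_{BER}$ of $\widetilde{\omega}_B$ satisfying $|\langle \widetilde{\omega}|\rho\rangle|^2 = F(\widetilde{\omega}_B, \rho_B) \geq 1-\varepsilon$. Setting $\widetilde{\rho}_{BE} \coloneqq \Tr_R[|\widetilde{\omega}\rangle\!\langle \widetilde{\omega}|]$, the feasibility bound $F(\widetilde{\rho}_{BE}, \rho_{BE}) \geq 1-\varepsilon$ follows from data processing of the fidelity under partial trace over $R$, while the objective bound $F(\widetilde{\rho}_{BE}, \sigma_{BE}) \leq F(\widetilde{\omega}_B, \sigma_B)$ follows from data processing of the fidelity under partial trace over $E$ (since fidelity is non-decreasing under partial trace).

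The main obstacle I foresee is the correct handling of Uhlmann's theorem for subnormalized states and the careful verification that the Stinespring-based reduction (in particular the tensoring-with-a-pure-ancilla step) truly preserves $D_{\min,F}^\varepsilon$. Both are standard in spirit, but require some care to make the chain of reductions fully rigorous; once they are in place, the construction above is a direct application of the data-processing inequality of the fidelity in its two natural directions.
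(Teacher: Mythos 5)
Your proposal is correct and follows essentially the same route as the paper's proof: a Stinespring decomposition into appending a pure ancilla, a unitary, and a partial trace; invariance under the first two steps via unitary/isometric invariance and the identity $F(\widetilde{\tau}_{AF},\rho_A\otimes|0\rangle\!\langle0|_F)=F(\langle0|_F\widetilde{\tau}_{AF}|0\rangle_F,\rho_A)$ (the paper's \cref{lem:fid-overlap-0-lem}); and an Uhlmann-based lift of each feasible marginal optimizer, combined with data processing of the fidelity, for the partial-trace step. Your explicit purification construction for the lift and your remark that the subnormalized smoothing set is what makes the pure-ancilla reduction work both match the paper's treatment (cf.~\cref{rem:subnorm-choice}).
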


\begin{IEEEproof}
First, let us show that data processing holds under the partial trace channel.
That is, for every bipartite state $\rho_{AB}$ and bipartite positive
semi-definite operator $\sigma_{AB}$:%
\begin{equation}
D_{\min,F}^{\varepsilon}(\rho_{AB}\Vert\sigma_{AB})\geq D_{\min,F}^{\varepsilon
}(\rho_{A}\Vert\sigma_{A}). \label{eq:dp-partial-trace}%
\end{equation}
To this end, let $\widetilde{\rho}_{A}\in\mathcal{D}_{\leq}$ satisfy
$F(\widetilde{\rho}_{A},\rho_{A})\geq1-\varepsilon$. Then, by Uhlmann's
theorem \cite{uhlmann1976transition} there exists $\widetilde{\rho}_{AB} \in \mathcal{D}_{\leq}$ such that $F(\widetilde{\rho
}_{AB},\rho_{AB})=F(\widetilde{\rho}_{A},\rho_{A})$. Thus, $F(\widetilde{\rho
}_{AB},\rho_{AB})\geq1-\varepsilon$, and%
\begin{align}
-\log_{2} F(\widetilde{\rho}_{A},\sigma_{A})  &  \leq-\log_{2} F(\widetilde{\rho}%
_{AB},\sigma_{AB})\\
&  \leq D_{\min,F}^{\varepsilon}(\rho_{AB}\Vert\sigma_{AB}).
\end{align}
The first inequality follows from the data-processing inequality for fidelity, and the second follows from the definition of $D_{\min,F}^{\varepsilon}$.
Since this inequality holds for every $\widetilde{\rho}_{A}\in\mathcal{D}%
_{\leq}$ satisfying $F(\widetilde{\rho}_{A},\rho_{A})\geq1-\varepsilon$, we
conclude \eqref{eq:dp-partial-trace}.

Note that the same argument used for $D_{\min,F}^{\varepsilon}(\rho\Vert
\sigma)\leq D_{\min,F}^{\varepsilon}(\mathcal{U}(\rho)\Vert\mathcal{U}(\sigma))$ (\cref{lem: unitary invariance})
holds whenever $\mathcal{U}$ is an isometric channel, due to the isometric invariance of fidelity. Since the embedding
$\omega\rightarrow\omega\otimes|0\rangle\!\langle0|$ is an isometric channel,
we conclude that%
\begin{equation}
D_{\min,F}^{\varepsilon}(\rho\Vert\sigma)\leq D_{\min,F}^{\varepsilon}(\rho
\otimes|0\rangle\!\langle0|\Vert\sigma\otimes|0\rangle\!\langle
0|).\label{eq:ancilla-ineq-1}%
\end{equation}

Now let $\widetilde{\rho}_{SE}\in\mathcal{D}_{\leq}$ satisfy $F(\widetilde
{\rho}_{SE},\rho_{S}\otimes|0\rangle\!\langle0|_{E})\geq1-\varepsilon$. By
applying Lemma~\ref{lem:fid-overlap-0-lem}, we conclude that%
\begin{equation}
F(\widetilde{\rho}_{SE},\rho_{S}\otimes|0\rangle\!\langle0|_{E})=F(\widetilde
{\rho}_{S}^{0},\rho_{S}),
\end{equation}
where $\widetilde{\rho}_{S}^{0}:=\langle0|_{E}\widetilde{\rho}_{SE}%
|0\rangle_{E}$. Thus, $\widetilde{\rho}_{S}^{0}$ is a subnormalized state that
satisfies%
\begin{equation}
F\!\left(  \widetilde{\rho}^{0}_S,\rho_S\right)  =F(\widetilde{\rho}_{SE}%
,\rho_S\otimes|0\rangle\!\langle0|)\geq1-\varepsilon.
\end{equation}
Furthermore, again applying Lemma~\ref{lem:fid-overlap-0-lem}, we conclude
that%
\begin{equation}
F(\widetilde{\rho}_{SE},\sigma_{S}\otimes|0\rangle\!\langle0|_{E}%
)=F(\widetilde{\rho}_{S}^{0},\sigma_{S}).
\end{equation}
Thus, it follows that
\begin{align}
 \!\!\!\!\!\!\!\! -\log_{2}F(\widetilde{\rho}_{SE},\sigma_{S}\otimes|0\rangle\!\langle
0|_{E})
&  =-\log_{2}F\!\left(  \widetilde{\rho}^{0},\sigma\right)  \\
&  \leq D_{\min,F}^{\varepsilon}(\rho\Vert\sigma).
\end{align}
Since the inequality holds for every $\widetilde{\rho}_{SE}\in\mathcal{D}%
_{\leq}$ satisfying $F(\widetilde{\rho}_{SE},\rho\otimes|0\rangle
\!\langle0|)\geq1-\varepsilon$, we conclude that%
\begin{equation}
D_{\min,F}^{\varepsilon}(\rho\otimes|0\rangle\!\langle0|\Vert\sigma
\otimes|0\rangle\!\langle0|)\leq D_{\min,F}^{\varepsilon}(\rho\Vert
\sigma).\label{eq:ancilla-ineq-2}%
\end{equation}
Putting together \eqref{eq:ancilla-ineq-1} and \eqref{eq:ancilla-ineq-2}, we conclude
that%
\begin{equation}
D_{\min,F}^{\varepsilon}(\rho\otimes|0\rangle\!\langle0|\Vert\sigma
\otimes|0\rangle\!\langle0|)=D_{\min,F}^{\varepsilon}(\rho\Vert\sigma
).\label{eq:ancilla-eq}%
\end{equation}

Finally, since, by the Stinespring dilation theorem \cite{Sti55} (see also \cite{khatri2020principles}), every channel can be
realized in terms of

\begin{enumerate}
\item the map $\omega\rightarrow\omega\otimes|0\rangle\!\langle0|$,

\item a unitary channel, and

\item a partial trace,
\end{enumerate}

\noindent we conclude the desired inequality in \eqref{eq:DP-ineq-channels},
after putting together \eqref{eq:unitary-inv}, \eqref{eq:dp-partial-trace},
and \eqref{eq:ancilla-eq}.
\end{IEEEproof}

\begin{remark}[On the choice of subnormalized states]
\label{rem:subnorm-choice}
It is only this last step (i.e., proving \eqref{eq:ancilla-ineq-2}) in which we required the assumption of smoothing
over subnormalized states in the definition of the smooth $F$-min-relative entropy, rather than smoothing over normalized states.
\end{remark}

We used the following lemma in the proof of \cref{thm:data-processing}. 
\begin{lemma}
\label{lem:fid-overlap-0-lem}Let $\omega_{SE}$ be a bipartite PSD operator, and let $\sigma_{S}$ be a PSD
operator. Then
\begin{equation}
F(\omega_{SE},\sigma_{S}\otimes|0\rangle\!\langle0|_{E})=F(\omega_{S}%
^{0},\sigma_{S}),
\end{equation}
where $\omega_{S}^{0}:=\langle0|_{E}\omega_{SE}|0\rangle_{E}$.
\end{lemma}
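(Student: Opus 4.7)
The plan is to expand the fidelity using the standard formula $F(\omega,\tau) = \left(\Tr\sqrt{\sqrt{\tau}\,\omega\,\sqrt{\tau}}\right)^{2}$, to sandwich $\omega_{SE}$ between the square roots of $\sigma_{S}\otimes|0\rangle\!\langle 0|_{E}$, and then to use the fact that $|0\rangle\!\langle 0|_{E}$ is a rank-one projector to isolate the $(0,0)$-block $\omega_{S}^{0}$ of $\omega_{SE}$ on the environment system.

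First I would note that, since $|0\rangle\!\langle 0|_{E}$ is a projector, we have
\begin{equation}
\sqrt{\sigma_{S}\otimes|0\rangle\!\langle 0|_{E}} \;=\; \sqrt{\sigma_{S}}\otimes|0\rangle\!\langle 0|_{E}.
\end{equation}
Substituting into the sandwich form of the fidelity gives
\begin{equation}
F(\omega_{SE},\sigma_{S}\otimes|0\rangle\!\langle 0|_{E})
= \left(\Tr\sqrt{\bigl(\sqrt{\sigma_{S}}\otimes|0\rangle\!\langle 0|_{E}\bigr)\,\omega_{SE}\,\bigl(\sqrt{\sigma_{S}}\otimes|0\rangle\!\langle 0|_{E}\bigr)}\right)^{2}.
\end{equation}

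Next, the key algebraic step is to evaluate the operator inside the square root. Using $(I_{S}\otimes|0\rangle\!\langle 0|_{E})\,\omega_{SE}\,(I_{S}\otimes|0\rangle\!\langle 0|_{E}) = \omega_{S}^{0}\otimes|0\rangle\!\langle 0|_{E}$, which follows directly from the definition $\omega_{S}^{0}=\langle 0|_{E}\omega_{SE}|0\rangle_{E}$ (expand $\omega_{SE}$ in any orthonormal basis for $E$ that contains $|0\rangle$), I obtain
\begin{equation}
\bigl(\sqrt{\sigma_{S}}\otimes|0\rangle\!\langle 0|_{E}\bigr)\,\omega_{SE}\,\bigl(\sqrt{\sigma_{S}}\otimes|0\rangle\!\langle 0|_{E}\bigr)
\;=\; \sqrt{\sigma_{S}}\,\omega_{S}^{0}\,\sqrt{\sigma_{S}}\,\otimes\,|0\rangle\!\langle 0|_{E}.
\end{equation}

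Finally, since this is a tensor product of PSD operators and $|0\rangle\!\langle 0|_{E}$ is its own square root, the operator square root factorizes as $\sqrt{\sqrt{\sigma_{S}}\,\omega_{S}^{0}\,\sqrt{\sigma_{S}}}\otimes|0\rangle\!\langle 0|_{E}$. Taking the trace and using $\Tr[|0\rangle\!\langle 0|_{E}]=1$ yields
\begin{equation}
\Tr\sqrt{\sqrt{\sigma_{S}}\,\omega_{S}^{0}\,\sqrt{\sigma_{S}}},
\end{equation}
and squaring this gives $F(\omega_{S}^{0},\sigma_{S})$, completing the argument.

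There is no real obstacle here; the only thing one must be careful about is the use of $\sqrt{A\otimes B}=\sqrt{A}\otimes\sqrt{B}$ for commuting PSD tensor factors, which is unambiguous in this setting because one of the factors is a projector. If one prefers to avoid the sandwich form, an alternative route is to write $F(\omega,\tau)=\|\sqrt{\omega}\sqrt{\tau}\|_{1}^{2}$ and observe that $\sqrt{\omega_{SE}}\bigl(\sqrt{\sigma_{S}}\otimes|0\rangle\!\langle 0|_{E}\bigr)$ has the form $X\otimes|\phi\rangle\!\langle 0|_{E}$ after polar decomposition on $E$, whose trace norm reduces to $\|\sqrt{\omega_{S}^{0}}\sqrt{\sigma_{S}}\|_{1}$; either route leads to the same identity.
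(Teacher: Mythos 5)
Your proof is correct and follows essentially the same route as the paper's: both expand the fidelity in sandwich form, use $\sqrt{\sigma_{S}\otimes|0\rangle\!\langle 0|_{E}}=\sqrt{\sigma_{S}}\otimes|0\rangle\!\langle 0|_{E}$, isolate the block $\omega_{S}^{0}$ via the rank-one projector, and factorize the operator square root over the tensor product before tracing. No issues.
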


\begin{IEEEproof}
Consider that%
\begin{align}
&  \sqrt{F}(\omega_{SE},\sigma_{S}\otimes|0\rangle\!\langle0|_{E})\nonumber\\
&  =\operatorname{Tr}\!\left[  \sqrt{\sqrt{\sigma_{S}\otimes|0\rangle
\!\langle0|_{E}}\omega_{SE}\sqrt{\sigma_{S}\otimes|0\rangle\!\langle0|_{E}}%
}\right]  \\
&  =\operatorname{Tr}\!\left[  \sqrt{\sqrt{\sigma_{S}}\otimes|0\rangle
\!\langle0|_{E}\omega_{SE}\sqrt{\sigma_{S}}\otimes|0\rangle\!\langle0|_{E}%
}\right]  \\
&  =\operatorname{Tr}\!\left[  \sqrt{\sqrt{\sigma_{S}}\langle0|_{E}\omega
_{SE}|0\rangle_{E}\sqrt{\sigma_{S}}\otimes|0\rangle\!\langle0|_{E}}\right]  \\
&  =\operatorname{Tr}\!\left[  \sqrt{\sqrt{\sigma_{S}}\langle0|_{E}\omega
_{SE}|0\rangle_{E}\sqrt{\sigma_{S}}}\otimes|0\rangle\!\langle0|_{E}\right]  \\
&  =\operatorname{Tr}\!\left[  \sqrt{\sqrt{\sigma_{S}}\langle0|_{E}\omega
_{SE}|0\rangle_{E}\sqrt{\sigma_{S}}}\right]  \\
&  =\sqrt{F}(\langle0|_{E}\omega_{SE}|0\rangle_{E},\sigma_{S})\\
&  =\sqrt{F}\!\left(  \omega_{S}^{0},\sigma_{S}\right)  ,
\end{align}
concluding the proof.
\end{IEEEproof}
\medskip

With Theorem~\ref{thm:data-processing} in hand, it thus follows that the smooth $F$-min-relative entropy is a particular kind of generalized divergence (here recall \eqref{eq:generalized-divergence}). Hence it possesses some basic properties satisfied by generalized divergences, as listed next. 
\begin{corollary}
  Let $\rho$  be a state and $\sigma$ a PSD operator. The smooth $F$-min-relative entropy satisfies the following properties:
 \begin{enumerate}
     \item Isometric invariance: For every isometry $V$,
     \begin{equation}
         D^\varepsilon_{\min,F}(\rho \Vert \sigma) = D^\varepsilon_{\min,F}( V\rho V^\dagger \Vert V \sigma V^\dagger).
     \end{equation}
     \item Stability: For every state $\tau$, 
     \begin{equation}
          D^\varepsilon_{\min,F}(\rho \Vert \sigma) = D^\varepsilon_{\min,F}(\rho \otimes \tau \Vert \sigma \otimes \tau).
     \end{equation}
 \end{enumerate}
\end{corollary}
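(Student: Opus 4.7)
Both properties are direct consequences of the data-processing inequality (\cref{thm:data-processing}) applied in two directions, together with an appropriate choice of ``reverse'' channel. I would dispatch each item separately.

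\textbf{Isometric invariance.} One direction is immediate: since an isometry $V$ with $V^\dagger V = I$ induces the CPTP map $\mathcal{V}(\omega) \coloneqq V \omega V^\dagger$, \cref{thm:data-processing} gives $D^\varepsilon_{\min,F}(\rho \Vert \sigma) \geq D^\varepsilon_{\min,F}(V \rho V^\dagger \Vert V \sigma V^\dagger)$. For the reverse inequality, I would construct an explicit CPTP inverse. Fixing any state $\tau_0$, define
\begin{equation}
\mathcal{N}(\omega) \coloneqq V^\dagger \omega V + \operatorname{Tr}[(I - V V^\dagger)\omega]\, \tau_0,
\end{equation}
which is CPTP because $V V^\dagger$ is a projector (so $I - V V^\dagger \geq 0$) and the two Kraus-type pieces together preserve the trace. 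A direct check shows $\mathcal{N}(V \rho V^\dagger) = \rho$ and $\mathcal{N}(V \sigma V^\dagger) = \sigma$, and then \cref{thm:data-processing} yields the reverse inequality. Alternatively, one may observe (as already noted in the proof of \cref{thm:data-processing} regarding \eqref{eq:ancilla-ineq-1}) that the argument in \cref{lem: unitary invariance} goes through verbatim for any isometric channel by isometric invariance of the fidelity, avoiding the need for an explicit reverse channel.

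\textbf{Stability.} Here I would use data processing directly with two complementary channels. For $\leq$, the map $\mathcal{E}_\tau(\omega) \coloneqq \omega \otimes \tau$ is CPTP (appending a fixed state), so \cref{thm:data-processing} gives
\begin{equation}
D^\varepsilon_{\min,F}(\rho \Vert \sigma) \geq D^\varepsilon_{\min,F}(\rho \otimes \tau \Vert \sigma \otimes \tau).
\end{equation}
For $\geq$, the partial trace over the second factor is a CPTP map satisfying $\operatorname{Tr}_2(\rho \otimes \tau) = \rho$ and $\operatorname{Tr}_2(\sigma \otimes \tau) = \sigma$, so another application of \cref{thm:data-processing} yields the opposite inequality.

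\textbf{Main obstacle.} There is essentially no hard step: once data processing is in hand, both items reduce to exhibiting CPTP maps in both directions that realize the desired transformations. The only mildly subtle point is verifying that the reverse map $\mathcal{N}$ for the isometry case is genuinely CPTP, which is a standard calculation using $V^\dagger V = I$ and positivity of $I - V V^\dagger$; the alternative routing through the $\omega \mapsto \omega \otimes \lvert 0 \rangle\!\langle 0 \rvert$ embedding plus \cref{lem: unitary invariance} sidesteps even that minor check.
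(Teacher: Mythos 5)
Your proof is correct and follows essentially the same route as the paper, which simply cites \cite[Proposition~7.14]{khatri2020principles} together with \cref{thm:data-processing}; the cited proposition's proof is exactly the argument you give (data processing applied to the isometric channel and its reversal channel $\mathcal{N}$, and to the append-a-state and partial-trace channels). Your explicit verification that $\mathcal{N}$ is CPTP and inverts $V(\cdot)V^\dagger$ on the images of $\rho$ and $\sigma$ is sound, so nothing is missing.
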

The proof directly follows from  \cite[Proposition~7.14]{khatri2020principles}, along with \cref{thm:data-processing}.

\subsection{Other Properties}

In this section, we derive some other properties of the smooth $F$-min-relative entropy.

\begin{theorem} \label{thm:other-properties}
    For all $\varepsilon \in [0,1)$, every state $\rho$, and PSD operator $\sigma$, the smooth $F$-min-relative entropy $D_{\min,F}^{\varepsilon}(\rho\Vert\sigma)$ satisfies the following properties: 
\begin{enumerate}
    \item Scaling: For $c > 0$, we have 
    \begin{equation}
       D_{\min,F}^{\varepsilon}(\rho\Vert c\sigma) =D_{\min,F}^{\varepsilon}(\rho\Vert\sigma) + \log_{2}\!\left( \frac{1}{c}\right).
    \end{equation}
    
    \item Monotonicity: For $\varepsilon \leq  \varepsilon' \in [0,1)$, 
    \begin{equation}
        D_{\min,F}^{\varepsilon}(\rho\Vert\sigma) \leq D_{\min,F}^{\varepsilon'}(\rho\Vert\sigma).
    \end{equation}
    
    \item Superadditivity: For $\varepsilon_1, \varepsilon_2 \in[0,1)$, states $\rho_1$ and $\rho_2$,  and PSD operators $\sigma_1$ and $\sigma_2$, we have 
    \begin{multline}
      D_{\min,F}^{\varepsilon_1}(\rho_1\Vert\sigma_1)  + D_{\min,F}^{\varepsilon_2}(\rho_2\Vert\sigma_2) \\
      \leq  D_{\min,F}^{\varepsilon'}(\rho_1 \otimes \rho_2 \Vert\sigma_1 \otimes \sigma_2),
      \label{eq:superadd-smooth-f-min}
    \end{multline}
    where $\varepsilon' \coloneqq \varepsilon_1 + \varepsilon_2 - \varepsilon_1 \varepsilon_2$. 
    By monotonicity, we can also choose $\varepsilon'= \varepsilon_1 + \varepsilon_2$.
    
    \item Convexity in the second argument: Let $\{\sigma_i\}_i$ be a set of PSD operators,  and let $\{p_i\}_i$ be a probability distribution. Then
    \begin{equation}
        D_{\min,F}^{\varepsilon}(\rho  \Vert  \overline{\sigma})  \leq \sum_{i} p_i  D_{\min,F}^{\varepsilon}(\rho\Vert\sigma_i),
    \end{equation}
    where
    \begin{equation}
        \overline{\sigma} \coloneqq \sum_{i}  p_i \sigma_i.
    \end{equation}
    
    \item Non-negativity: For $\varepsilon \in[0,1)$, we have
    \begin{equation}
        D_{\min,F}^{\varepsilon}(\rho\Vert\sigma) \geq \log_{2}\!\left( \frac{1}{1-\varepsilon}\right) \geq 0,
    \end{equation}
    with the first inequality saturated if $\rho = \sigma$.
    
    \item If $0 \leq \sigma \leq \sigma'$, then 
    \begin{equation}
        D^\varepsilon_{\min,F}(\rho \Vert \sigma) \geq  D^\varepsilon_{\min,F}(\rho \Vert \sigma').
    \end{equation}

    \item Zero-error bounds: For $\varepsilon\in[0,1)$,
    \begin{equation}
        D_{\min,F}(\rho\Vert\sigma) \leq  D^\varepsilon_{\min,F}(\rho \Vert \sigma) 
        \label{eq:zero-err-lim-explicit-lower}
    \end{equation} 
    and for $\varepsilon \in [0,F(\rho,\widehat{\sigma})]$, with $\widehat{\sigma}\coloneqq \frac{\sigma}{\Tr[\sigma]}$,
    \begin{equation}
        D^\varepsilon_{\min,F}(\rho \Vert \sigma) 
        \leq 
           -\log_2\!\left[1-  g(\varepsilon,\rho,\sigma) \right]
 - \log_2 \Tr[\sigma],
 \label{eq:zero-err-lim-explicit-upper}
    \end{equation}
    where
    \begin{equation}
        g(\varepsilon,\rho,\sigma) \coloneqq\\
        \left( \sqrt{\varepsilon}\sqrt{F(\rho,\widehat{\sigma})}+\sqrt{1- F(\rho, \widehat{\sigma})}
\sqrt{1-\varepsilon} \right)^2.
    \end{equation}
    As such,
    \begin{equation} 
        \lim_{\varepsilon \to 0} D^\varepsilon_{\min,F}(\rho \Vert \sigma) = D_{\min,F}(\rho\Vert\sigma) .
        \label{eq:zero-err-lim}
    \end{equation}
\end{enumerate}
\end{theorem}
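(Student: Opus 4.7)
The plan is to derive properties (1)--(4) from elementary behaviour of the fidelity, (5)--(6) from data processing together with operator monotonicity, and the zero-error bounds (7) from a triangle inequality for the Bures angle.

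Properties (1) and (2) are immediate: the homogeneity $F(\widetilde\rho,c\sigma)=c\,F(\widetilde\rho,\sigma)$ yields scaling, and enlarging the feasible set as $\varepsilon$ grows can only decrease the infimum, hence increase its $-\log_2$. For superadditivity (3), the product $\widetilde\rho_1\otimes\widetilde\rho_2$ of individually-feasible witnesses is subnormalized and, by multiplicativity of the fidelity on tensor products, satisfies $F(\widetilde\rho_1\otimes\widetilde\rho_2,\rho_1\otimes\rho_2)\geq(1-\varepsilon_1)(1-\varepsilon_2)=1-\varepsilon'$; taking independent infima over the two factors gives~\eqref{eq:superadd-smooth-f-min}, and the variant $\varepsilon'=\varepsilon_1+\varepsilon_2$ then follows from~(2). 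For convexity in the second argument (4), I fix any $\widetilde\rho$ feasible for $(\rho,\overline\sigma)$, invoke concavity of the fidelity in its second argument to get $F(\widetilde\rho,\overline\sigma)\geq\sum_i p_i F(\widetilde\rho,\sigma_i)$, and then use the elementary inequality $-\log_2\!\bigl(\sum_i p_i 2^{-x_i}\bigr)\leq \sum_i p_i x_i$ coming from concavity of $\log_2$.

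Property (6) follows from the identity $F(\widetilde\rho,\sigma)=\bigl(\Tr\sqrt{\sqrt{\widetilde\rho}\sigma\sqrt{\widetilde\rho}}\bigr)^2$ together with operator monotonicity of the square root: $\sigma\leq\sigma'$ implies $F(\widetilde\rho,\sigma)\leq F(\widetilde\rho,\sigma')$, which under $-\log_2\inf$ flips to the stated inequality. For the saturation case of non-negativity (5), the choice $\widetilde\rho=(1-\varepsilon)\rho$ meets the fidelity constraint with equality, and when $\rho=\sigma$ gives $F(\widetilde\rho,\sigma)=1-\varepsilon$, yielding $D_{\min,F}^\varepsilon(\rho\Vert\rho)=\log_2(1/(1-\varepsilon))$. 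The general lower bound I would obtain by first applying~(6) to reduce to the normalized case $\hat\sigma=\sigma/\Tr[\sigma]$ and then applying the data-processing inequality (\cref{thm:data-processing}) under the trace channel $\omega\mapsto\Tr[\omega]$, which reduces the problem to the explicitly computable scalar instance $D_{\min,F}^\varepsilon(1\Vert 1)=\log_2(1/(1-\varepsilon))$.

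The crux of the theorem is the zero-error bounds (7). The lower bound $D_{\min,F}(\rho\Vert\sigma)\leq D_{\min,F}^\varepsilon(\rho\Vert\sigma)$ follows from (2) once one checks that $D_{\min,F}^0=D_{\min,F}$: the Cauchy--Schwarz bound $F(\widetilde\rho,\rho)\leq\Tr[\widetilde\rho]\,\Tr[\rho]\leq 1$ forces $\widetilde\rho=\rho$ when the constraint reads $F(\widetilde\rho,\rho)\geq 1$. For the upper bound, I plan to lower bound $F(\widetilde\rho,\sigma)$ uniformly over all feasible $\widetilde\rho$ using the triangle inequality for the Bures angle $A(\omega,\tau):=\arccos\sqrt{F(\omega,\tau)}$. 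With $\hat\sigma:=\sigma/\Tr[\sigma]$, feasibility gives $A(\widetilde\rho,\rho)\leq\arccos\sqrt{1-\varepsilon}$, while $A(\rho,\hat\sigma)=\arccos\sqrt{F(\rho,\hat\sigma)}$, and the triangle inequality then delivers
\[ A(\widetilde\rho,\hat\sigma)\leq\arccos\sqrt{1-\varepsilon}+\arccos\sqrt{F(\rho,\hat\sigma)}. \]
The hypothesis $\varepsilon\leq F(\rho,\hat\sigma)$ is precisely what keeps this sum within $[0,\pi/2]$; taking $\cos^2$ and expanding via the cosine-addition formula then yields $F(\widetilde\rho,\hat\sigma)\geq 1-g(\varepsilon,\rho,\sigma)$. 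Scaling~(1) converts this into~\eqref{eq:zero-err-lim-explicit-upper}, and the limit~\eqref{eq:zero-err-lim} follows by sandwiching: as $\varepsilon\to 0$, $g\to 1-F(\rho,\hat\sigma)$ and the upper bound tends to $-\log_2 F(\rho,\sigma)=D_{\min,F}(\rho\Vert\sigma)$, matching the lower bound already established.

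The main obstacle will be justifying the Bures-angle triangle inequality when $\widetilde\rho$ is only subnormalized. I would handle this by embedding $\widetilde\rho\mapsto\widetilde\rho\oplus(1-\Tr[\widetilde\rho])|e\rangle\!\langle e|$ into one extra dimension, noting that since $\rho$ and $\hat\sigma$ are normalized the embedding preserves all three of the pertinent fidelities, and then invoking the classical triangle inequality for the Bures angle on the resulting normalized states.
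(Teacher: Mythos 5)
Your proposal is correct and follows essentially the same route as the paper: homogeneity, multiplicativity, and concavity of the fidelity for (1)--(4), data processing to a trivial instance for (5), operator monotonicity for (6), and an embedding of the subnormalized optimizer into a normalized state followed by a fidelity triangle inequality for (7). Your Bures-angle triangle inequality with the cosine-addition formula is just the angle form of the refined sine-distance triangle inequality the paper invokes from \cite[Proposition~3.16]{tomamichel2015quantum}---squaring your lower bound $\sqrt{1-\varepsilon}\sqrt{F(\rho,\widehat{\sigma})}-\sqrt{\varepsilon}\sqrt{1-F(\rho,\widehat{\sigma})}$ gives exactly $1-g(\varepsilon,\rho,\sigma)$---so the two arguments coincide.
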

\begin{IEEEproof}
    
    \underline{Property 1:} Consider that
\begin{align}
& D_{\min,F}^{\varepsilon}(\rho\Vert c\sigma) \cr
& =\sup_{\widetilde{\rho}\in\mathcal{D}_{\leq}}\left\{  -\log_{2}%
F(\widetilde{\rho},c\sigma):F(\widetilde{\rho},\rho)\geq1-\varepsilon\right\}
\\
& =\sup_{\widetilde{\rho}\in\mathcal{D}_{\leq}}\left\{  -\log_{2}\left(
cF(\widetilde{\rho},\sigma)\right)  :F(\widetilde{\rho},\rho)\geq
1-\varepsilon\right\}  \\
& =\sup_{\widetilde{\rho}\in\mathcal{D}_{\leq}}\left\{  -\log_{2}%
F(\widetilde{\rho},\sigma)-\log_{2}c:F(\widetilde{\rho},\rho)\geq
1-\varepsilon\right\}  \\
& =\sup_{\widetilde{\rho}\in\mathcal{D}_{\leq}}\left\{  -\log_{2}%
F(\widetilde{\rho},\sigma):F(\widetilde{\rho},\rho)\geq1-\varepsilon\right\}
-\log_{2}c\\
& =D_{\min,F}^{\varepsilon}(\rho\Vert\sigma)-\log_{2}c.
\end{align}
This concludes the proof.

  \underline{Property 2:}
  Let $\rho' \in \cD_{\leq}$ satisfy $F(\rho,\rho') \geq 1-\varepsilon$. Then  $F(\rho,\rho') \geq 1-\varepsilon'$ since $\varepsilon' \geq \varepsilon$.
This leads to 
\begin{equation}
    -\log_{2}F(\rho',\sigma) \leq D_{\min}^{\varepsilon'}(\rho\Vert\sigma).
\end{equation}
The above inequality holds for every $\rho' \in \cD_{\leq}$ such that $F(\rho,\rho') \geq 1-\varepsilon$. Thus, we have
\begin{equation}
   D_{\min,F}^{\varepsilon}(\rho\Vert\sigma) \leq D_{\min,F}^{\varepsilon'}(\rho\Vert\sigma).
\end{equation}

\underline{Property 3:} In this proof, we use the multiplicativity of fidelity with respect to tensor products:
\begin{equation}
    F(\rho_1 \otimes \rho_2, \sigma_1 \otimes \sigma_2)= F(\rho_1, \sigma_1) F(\rho_2, \sigma_2).
\end{equation}
Let $\rho'_1, \rho'_2 \in \cD_{\leq}$ satisfy 
$F(\rho_1,\rho'_1) \geq 1-\varepsilon_1$ and $F(\rho_2,\rho'_2) \geq 1-\varepsilon_2$. 
 Consider that
\begin{align}
    & \!\!\!\!\!\!\!\!
    - \log_{2}  F(\rho'_1 , \sigma_1 ) -\log_{2}  F( \rho'_2, \sigma_2) \label{eq:start-fid-add-prod}\\
     &= -\log_{2}  F(\rho'_1 \otimes \rho'_2, \sigma_1 \otimes \sigma_2) \\
    & \leq D_{\min}^{\varepsilon'}(\rho_1 \otimes \rho_2 \Vert\sigma_1 \otimes \sigma_2),
\end{align}
where the last inequality follows because 
\begin{align}
    F(\rho_1' \otimes \rho_2', \rho_1 \otimes \rho_2) & = F(\rho_1' , \rho_1 ) \cdot F(\rho_2' , \rho_2 )\\
    & \geq  
    (1-\varepsilon_1) (1-\varepsilon_2) \\
    & = 1-\varepsilon',
\end{align}
with $\varepsilon'$ as stated just after \eqref{eq:superadd-smooth-f-min}.
Then, supremizing \eqref{eq:start-fid-add-prod} over $\rho'_1$ and~$\rho'_2$ satisfying $F(\rho_1', \rho_1) \geq 1- \varepsilon_1$ and $F(\rho_2', \rho_2) \geq 1- \varepsilon_2$, respectively, we arrive at the  desired inequality in \eqref{eq:superadd-smooth-f-min}. 

\underline{Property 4:}
 Let $\rho' \in \cD_{\leq}$ satisfy $F(\rho,\rho') \geq 1-\varepsilon$.
 By concavity of fidelity \cite[Theorem~3.60]{khatri2020principles} (while rescaling for a subnormalized state $\rho'$), we have that
 \begin{equation}
     F(\rho', \overline{\sigma} ) \geq \sum_{i} p_i F(\rho',\sigma_i).
     \label{eq:fid-concave}
 \end{equation}
 Then consider that
 \begin{align}
     -\log_{2}  F(\rho', \overline{\sigma} ) &
     \leq -\log_{2} \!\left[ \sum_{i} p_i F\!\left(\rho',  \sigma_i \right)\right] \\
     & \leq \sum_{i} p_i \left[-\log_{2}  F\!\left(\rho',  \sigma_i \right)\right]  \\
     &\leq  \sum_{i} p_i  D_{\min,F}^{\varepsilon}(\rho\Vert\sigma_i) ,
 \end{align}
where the first inequality follows from \eqref{eq:fid-concave} and monotonicity of $-\log_2$, the second inequality from the convexity of $-\log_{2}$, and the last due to $F(\rho,\rho') \geq 1-\varepsilon$.
Lastly, by optimizing over all $\rho'$ satisfying the required condition, we conclude the proof. 

\underline{Property 5:} By the data-processing inequality derived in \cref{thm:data-processing} and choosing the quantum channel $\cN(X)= \Tr[X] \omega$, where $X$ is a linear operator, and $\omega$ is a quantum state, we have 
\begin{equation} \label{eq: negativity data proce}
    D_{\min,F}^{\varepsilon}(\rho\Vert\sigma) \geq  D_{\min,F}^{\varepsilon}(\omega\Vert\omega).
\end{equation}
Then with the constraint $F(\widetilde{\rho},\omega)=1-\varepsilon$ (recall Remark~\ref{rem:ineq-to-eq-constraint}), it follows that $D_{\min,F}^{\varepsilon}(\omega\Vert\omega)=-\log_{2}(1-\varepsilon)$ for every state~$\omega$. Combining that with \eqref{eq: negativity data proce} concludes the proof.

\underline{Property 6:}
Let $\widetilde{\rho} \in \mathcal{D}_{\leq}$ be such that $F(\widetilde{\rho}, \rho) \geq 1- \varepsilon$. By  \cite[Proposition~7.33]{khatri2020principles}, $\widetilde{D}_{1/2}(\rho \Vert \sigma) \geq \widetilde{D}_{1/2}(\rho \Vert \sigma') $ for $\sigma \leq \sigma'$. Then, by \eqref{eq:alpha-1/2-D_min,F},  we have that
\begin{equation*}
    -\log_2 F(\widetilde{\rho}, \sigma) \geq  -\log_2 F(\widetilde{\rho}, \sigma').
\end{equation*}
Supremizing over $\widetilde{\rho} \in \mathcal{D}_{\leq}$ such that $F(\widetilde{\rho}, \rho) \geq 1- \varepsilon$, we arrive at the desired inequality. 

\underline{Property 7:}
We first prove \eqref{eq:zero-err-lim-explicit-lower}. For all $\varepsilon \in[0,1)$, we choose $\widetilde{\rho} = \rho$ and thus have that $F(\widetilde{\rho}, \rho)=1 \geq 1-\varepsilon$. This leads to 
\begin{equation}
    -\log_2 F(\rho, \sigma) \leq D^\varepsilon_{\min,F}(\rho \Vert \sigma).
\end{equation}

Now we prove \eqref{eq:zero-err-lim-explicit-upper} and first do so in the case that $\sigma$ is a state. Let $\rho_1 \in \cD_{\leq}$ satisfy $F({\rho_1}, \rho) =1-\varepsilon$. Now construct the following normalized states:
\begin{align}
     \rho'_1 & \coloneqq \rho_1 \oplus (1-\Tr[\rho_1]), \\
    \rho' & \coloneqq \rho \oplus 0 ,\\
     \sigma' & \coloneqq \sigma \oplus 0,
\end{align}
so that
\begin{align}
     F(\rho'_1,\sigma') &= F(\rho_1,\sigma), \\
     F(\rho'_1,\rho') &= F(\rho_1,\rho), \\
     F(\rho',\sigma') &= F(\rho,\sigma).
\end{align}
By assumption,  $\varepsilon$ satisfies $0 \leq \varepsilon \leq F(\rho, \sigma)$, which implies that $\varepsilon+(1-F(\rho,\sigma)) \leq 1$.
We can thus apply 
the refined triangular inequality for the sine distance
$\sqrt{1-F}$ \cite[Proposition~3.16]{tomamichel2015quantum} to find that
\begin{align}
  & \sqrt{1-F({\rho_1},\sigma)}\notag \\
  & \leq\sqrt{1-F({\rho_1},\rho)}\sqrt{F(\rho,\sigma)}\nonumber\\
&  \qquad+\sqrt{1-F(\rho, \sigma)}\sqrt{F({\rho_1},\rho)}\\
&  =\sqrt{\varepsilon}\sqrt{F(\rho,\sigma)}+\sqrt{1- F(\rho, \sigma)}%
\sqrt{1-\varepsilon}.
\end{align} 
After an algebraic manipulation of the inequality above, we arrive at 
\begin{multline}
    -\log_2 F({\rho_1}, \sigma)   \leq \\
     -\log_2\!\left(1- \left( \sqrt{\varepsilon}\sqrt{F(\rho,\sigma)}+\sqrt{1- F(\rho, \sigma)}%
\sqrt{1-\varepsilon} \right)^2  \right).
\end{multline}
This inequality holds for all ${\rho_1}$ satisfying $F({\rho_1}, \rho) =1-\varepsilon$. Then supremizing over all such candidates belonging to $\cD_{\leq}$, 
we conclude that 
\begin{multline}
    D^\varepsilon_{\min,F} (\rho \Vert \sigma)  \leq \\
     -\log_2\!\left(1- \left( \sqrt{\varepsilon}\sqrt{F(\rho,\sigma)}+\sqrt{1- F(\rho, \sigma)}%
\sqrt{1-\varepsilon} \right)^2  \right).
\label{eq:zero-err-lim-explicit-upper-state}
\end{multline}
This concludes the proof of \eqref{eq:zero-err-lim-explicit-upper} when $\sigma$ is a state.

When $\sigma$ is a general PSD operator, we can write $\sigma= c \left(\frac{1}{c} \sigma \right)$ with $c \coloneqq \Tr[\sigma]$, so that $ \frac{1}{c} \sigma $ is a state. We then apply \eqref{eq:zero-err-lim-explicit-upper-state} to $\frac{1}{c} \sigma$ and use the scaling property (Property~1 of \cref{thm:other-properties}) to conclude \eqref{eq:zero-err-lim-explicit-upper} in the general case.

To conclude \eqref{eq:zero-err-lim}, we take the following limits of \eqref{eq:zero-err-lim-explicit-lower} and \eqref{eq:zero-err-lim-explicit-upper}, respectively:
\begin{align}\label{eq: liminf}
    D_{\min,F}(\rho\Vert \sigma) & \leq \liminf_{\varepsilon \to 0}  \ D^\varepsilon_{\min,F}(\rho \Vert \sigma), \\
 \label{eq: limsup}
    \limsup_{\varepsilon \to 0} D^\varepsilon_{\min,F} (\rho \Vert \sigma) & \leq   -\log_2 F\!\left({\rho}, \frac{\sigma}{c}\right) - \log_2 c\\
    & =   -\log_2 F({\rho}, \sigma) \\
    & =  D_{\min,F}(\rho\Vert \sigma). 
\end{align}
This concludes the proof.
\end{IEEEproof}

\section{Relating Smooth \texorpdfstring{$F$}{F}-Min-Relative Entropy to Other Distinguishability Measures} \label{Sec:Connections}

In this section, we establish connections between the smooth $F$-min-relative entropy and other information-theoretic quantities such as the sandwiched R\'enyi relative entropy, its smooth versions, as well as the smooth max- and min-relative entropies. 
Then, we utilize these connections  in subsequent sections to establish the second-order asymptotics  of the smooth $F$-min-relative entropy. 

\subsection{Relation to Sandwiched R\'enyi Relative Entropy and its Smoothed Variants}

The smoothed sandwiched R\'enyi relative entropies were defined recently  in \cite{RT22}, and we recall their definitions here.

\begin{definition} Let $\rho$ be a state,  and let $\sigma$ be a PSD operator. Fix  $\varepsilon \in [0,1]$ and $\alpha \in (0,1) \cup (1,\infty)$. The smooth sandwiched R\'enyi relative entropy is defined for $\alpha > 1$ as 
   \begin{equation}
   \label{eq:smooth-Frenyi-def}
 \widetilde{D}_{\alpha}^{\varepsilon}(\rho\Vert\sigma) \coloneqq 
   \inf_{\widetilde
{\rho}\in\mathcal{D}_{\leq}}\left\{ \widetilde{D}_{\alpha}(\widetilde{\rho}\Vert\sigma) :F(\widetilde
{\rho},\rho)\geq1-\varepsilon\right\}.
\end{equation} 
and for $\alpha \in (0, 1)$ as
 \begin{equation}
   \label{eq:smooth-Frenyi-def alpha less than 1}
 \widetilde{D}_{\alpha}^{\varepsilon}(\rho\Vert\sigma) \coloneqq 
   \sup_{\widetilde
{\rho}\in\mathcal{D}_{\leq}}\left\{ \widetilde{D}_{\alpha}(\widetilde{\rho}\Vert\sigma) :F(\widetilde
{\rho},\rho)\geq1-\varepsilon\right\}.
\end{equation} 
In the above definitions, we use precisely the mathematical expression in \eqref{eq:sandwiched-renyi-def} for evaluating $\widetilde{D}_{\alpha}(\widetilde{\rho}\Vert\sigma)$, even though we only defined it in \eqref{eq:sandwiched-renyi-def} for (normalized) states.
\end{definition}

{Note that the smooth sandwiched R\'enyi relative entropy for $\alpha=1/2$ is equivalent to the smooth $F$-min relative entropy (recall \cref{def:smooth-min-relative-fidelity}).}

\begin{remark}[Inequality constraint in smooth sandwiched R\'enyi relative entropy definition]
\label{rem:ineq-to-eq-smooth-sand}
Note that the  smooth sandwiched R\'enyi relative entropy can be rewritten for $\alpha > 1$ as 
\begin{equation}
\widetilde{D}_{\alpha}^{\varepsilon}(\rho\Vert\sigma) \coloneqq 
   \inf_{\widetilde
{\rho}\in\mathcal{D}_{\leq}}\left\{ \widetilde{D}_{\alpha}(\widetilde{\rho}\Vert\sigma) :F(\widetilde
{\rho},\rho)= 1-\varepsilon\right\},
\end{equation}
and for $\alpha \in (0,1)$ as
\begin{equation}
\widetilde{D}_{\alpha}^{\varepsilon}(\rho\Vert\sigma) \coloneqq 
   \sup_{\widetilde
{\rho}\in\mathcal{D}_{\leq}}\left\{ \widetilde{D}_{\alpha}(\widetilde{\rho}\Vert\sigma) :F(\widetilde
{\rho},\rho)= 1-\varepsilon\right\},
\end{equation}
Indeed, let us first consider when $\alpha > 1$.
If $\widetilde{\rho}$ is such that $F(\widetilde{\rho},\rho
)>1-\varepsilon$, then we can set $c=\left(  1-\varepsilon\right)
/F(\widetilde{\rho},\rho)\in\left(  0,1\right)  $ such that $F(\rho
^{\prime},\rho)=1-\varepsilon$, where $\rho^{\prime}%
=c\widetilde{\rho}$ and $\rho^{\prime}\in\mathcal{D}_{\leq}$.
Furthermore, we also have that $\widetilde{D}_{\alpha}(\widetilde{\rho}\Vert\sigma)>\widetilde{D}_{\alpha}(\rho^{\prime}\Vert\sigma)$, so that the objective function only
decreases under this change. The statement for $\alpha \in (0,1)$ follows from a similar argument.
\end{remark}

\begin{theorem}
\label{thm:connection-to-smooth-renyi}
Let $\rho$ be a state and $\sigma$ a positive
semi-definite operator. Let $\varepsilon_{1},\varepsilon_{2}\in\left[
0,1\right]  $ be such that $\varepsilon_{1}+\varepsilon_{2}\leq1$, and let 
\begin{equation}
\varepsilon^{\prime}\coloneqq\left[  \sqrt{\varepsilon_{1}}\sqrt
{1-\varepsilon_{2}}+\sqrt{1-\varepsilon_{1}} \sqrt{\varepsilon_{2}}\right]
^{2},
\end{equation}
so that $\varepsilon^{\prime}\in\left[  0,1\right]  $. Then for $\alpha \in (1/2,1)$ and $ \beta = \frac{\alpha}{2\alpha - 1} >1$, we have that
\begin{equation}
\label{eq:smooth-sand-ineq}
\widetilde{D}_{\beta}^{\varepsilon_{1}}(\rho\Vert\sigma)+\frac{\beta}{\beta-1}\log_{2}\!\left(  \frac
{1}{1-\varepsilon^{\prime}}\right)  \geq \widetilde{D}_{\alpha}^{\varepsilon_{2}}(\rho
\Vert\sigma).
\end{equation}

\end{theorem}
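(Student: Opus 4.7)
My plan is to prove this by combining two ingredients: a refined triangle inequality for the sine distance, which links any two smoothing candidates arising in the definitions of $\widetilde D_\beta^{\varepsilon_1}$ and $\widetilde D_\alpha^{\varepsilon_2}$, together with an operator H\"older inequality that exploits the duality $1/\alpha + 1/\beta = 2$ (equivalent to the hypothesis $\beta = \alpha/(2\alpha-1)$). Fix arbitrary $\widetilde\rho_1 \in \mathcal{D}_{\leq}$ with $F(\widetilde\rho_1, \rho) \geq 1 - \varepsilon_1$ and $\widetilde\rho_2 \in \mathcal{D}_{\leq}$ with $F(\widetilde\rho_2, \rho) \geq 1 - \varepsilon_2$. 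Applying the refined triangle inequality for $\sqrt{1-F}$ (already used in Property~7 of \cref{thm:other-properties}) to the triple $(\widetilde\rho_1,\rho,\widetilde\rho_2)$ gives
\begin{equation}
\sqrt{1 - F(\widetilde\rho_1, \widetilde\rho_2)} \leq \sqrt{\varepsilon_1}\sqrt{1-\varepsilon_2} + \sqrt{\varepsilon_2}\sqrt{1-\varepsilon_1} = \sqrt{\varepsilon'},
\end{equation}
so $F(\widetilde\rho_1, \widetilde\rho_2) \geq 1 - \varepsilon'$; the hypothesis $\varepsilon_1 + \varepsilon_2 \leq 1$ is precisely what guarantees $\varepsilon' \in [0,1]$.

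The key step is the following \emph{change-of-order lemma} that I would prove next: for any subnormalized $\omega, \omega'$ and PSD $\sigma$ with $F(\omega, \omega') \geq 1 - \delta$,
\begin{equation}
\widetilde{D}_\alpha(\omega' \Vert \sigma) \leq \widetilde{D}_\beta(\omega \Vert \sigma) + \frac{\beta}{\beta-1} \log_2\!\left( \frac{1}{1-\delta} \right).
\end{equation}
Let $s \coloneqq (\beta-1)/(2\beta)$; the duality condition on $\alpha,\beta$ is equivalent to $(1-\alpha)/(2\alpha) = s$, so the exponent appearing in the sandwiched form of $\widetilde Q_\alpha$ is $+s$ while that of $\widetilde Q_\beta$ is $-s$. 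Assuming WLOG $\supp(\omega) \subseteq \supp(\sigma)$ (otherwise $\widetilde D_\beta(\omega \Vert \sigma) = +\infty$ and the lemma is trivial), I factor $\sqrt{\omega}\sqrt{\omega'} = (\sqrt{\omega}\sigma^{-s})(\sigma^{s}\sqrt{\omega'})$ on the support of $\sigma$ and apply the Schatten H\"older inequality with dual exponents $(2\beta, 2\alpha)$, which satisfy $\frac{1}{2\beta} + \frac{1}{2\alpha} = 1$:
\begin{equation}
\Vert \sqrt{\omega}\sqrt{\omega'} \Vert_1 \leq \Vert \sqrt{\omega}\sigma^{-s} \Vert_{2\beta} \cdot \Vert \sigma^{s}\sqrt{\omega'} \Vert_{2\alpha}.
\end{equation}
A direct computation using $\Vert A \Vert_p^p = \Tr[(A^\dagger A)^{p/2}]$ identifies $\Vert \sqrt{\omega}\sigma^{-s} \Vert_{2\beta}^{2\beta} = \widetilde{Q}_\beta(\omega \Vert \sigma)$ and $\Vert \sigma^s \sqrt{\omega'} \Vert_{2\alpha}^{2\alpha} = \widetilde{Q}_\alpha(\omega' \Vert \sigma)$. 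Squaring the H\"older bound and using $1 - \delta \leq F(\omega, \omega') = \Vert \sqrt{\omega}\sqrt{\omega'} \Vert_1^2$, then taking $\log_2$ and invoking $\alpha/(1-\alpha) = \beta/(\beta-1)$ (again a consequence of the duality), yields the lemma.

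Combining the two ingredients with $\omega = \widetilde\rho_1$, $\omega' = \widetilde\rho_2$, and $\delta = \varepsilon'$ gives, for every admissible pair,
\begin{equation}
\widetilde{D}_\alpha(\widetilde\rho_2 \Vert \sigma) \leq \widetilde{D}_\beta(\widetilde\rho_1 \Vert \sigma) + \frac{\beta}{\beta-1} \log_2\!\left( \frac{1}{1-\varepsilon'} \right).
\end{equation}
Taking the supremum over $\widetilde\rho_2$ on the left (definition of $\widetilde D_\alpha^{\varepsilon_2}$ for $\alpha \in (0,1)$) and the infimum over $\widetilde\rho_1$ on the right (definition of $\widetilde D_\beta^{\varepsilon_1}$ for $\beta > 1$) yields \eqref{eq:smooth-sand-ineq}. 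The main obstacle I anticipate is the H\"older computation in the change-of-order lemma, specifically the correct bookkeeping identifying the Schatten-norm factors with $\widetilde Q_\beta$ and $\widetilde Q_\alpha$ and handling the unbounded operator $\sigma^{-s}$ on the support of $\sigma$; however, the duality $1/\alpha + 1/\beta = 2$ aligns every exponent cleanly so that the arithmetic collapses to precisely the asserted correction $\frac{\beta}{\beta-1}\log_2(1/(1-\varepsilon'))$.
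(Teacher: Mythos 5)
Your overall architecture coincides with the paper's: both proofs (i) use the refined triangle inequality for the sine distance to show that the two smoothing candidates are close to each other, $F(\widetilde\rho_1,\widetilde\rho_2)\geq 1-\varepsilon'$, and (ii) invoke a ``change-of-order'' inequality between $\widetilde D_\beta$ and $\widetilde D_\alpha$ at the H\"older-dual pair $1/\alpha+1/\beta=2$ with penalty $\frac{\beta}{\beta-1}\log_2(1/F(\widetilde\rho_1,\widetilde\rho_2))$. The one genuine difference is that you prove the change-of-order lemma from scratch via the Schatten--H\"older factorization $\sqrt{\omega}\sqrt{\omega'}=(\sqrt{\omega}\sigma^{-s})(\sigma^{s}\sqrt{\omega'})$, whereas the paper simply cites it as Lemma~1 of \cite{wang2019resource}; your derivation is essentially the proof of that lemma, and the exponent bookkeeping checks out, since $\|\sqrt{\omega}\sigma^{-s}\|_{2\beta}^{2\beta}=\widetilde Q_\beta(\omega\Vert\sigma)$ and $\|\sigma^{s}\sqrt{\omega'}\|_{2\alpha}^{2\alpha}=\operatorname{Tr}[(\sqrt{\omega'}\sigma^{2s}\sqrt{\omega'})^\alpha]=\widetilde Q_\alpha(\omega'\Vert\sigma)$ because $B^\dagger B$ and $BB^\dagger$ are isospectral. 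Your sup/inf conclusion from arbitrary feasible points is also marginally cleaner than the paper's appeal to optimizers.

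The one step you state too quickly is the triangle inequality. With only the inequality constraints $F(\widetilde\rho_i,\rho)\geq 1-\varepsilon_i$, the refined triangle inequality gives $\sqrt{1-F(\widetilde\rho_1,\widetilde\rho_2)}\leq \sqrt{1-F_1}\sqrt{F_2}+\sqrt{1-F_2}\sqrt{F_1}$ with $F_i\coloneqq F(\widetilde\rho_i,\rho)$, and one cannot substitute the bounds factor by factor: $\sqrt{1-F_i}\leq\sqrt{\varepsilon_i}$ goes the right way, but $\sqrt{F_j}\geq\sqrt{1-\varepsilon_j}$ goes the wrong way. You need either (a) to first reduce to the equality constraints $F(\widetilde\rho_i,\rho)=1-\varepsilon_i$, exactly as in Remark~\ref{rem:ineq-to-eq-smooth-sand} (this is what the paper does), or (b) to write the right-hand side as $\sin(\theta_1+\theta_2)$ with $\theta_i=\arccos\sqrt{F_i}$ and use monotonicity of $\sin$ on $[0,\pi/2]$, which is where the hypothesis $\varepsilon_1+\varepsilon_2\leq1$ enters. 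Separately, \cite[Proposition~3.16]{tomamichel2015quantum} is stated for normalized states, while your $\widetilde\rho_1,\widetilde\rho_2$ are subnormalized and you need the bound for the plain (not generalized) fidelity; the paper handles this with the direct-sum embeddings $\rho_1\oplus(1-\operatorname{Tr}[\rho_1])\oplus 0$, $\rho_2\oplus 0\oplus(1-\operatorname{Tr}[\rho_2])$, and $\rho\oplus 0\oplus 0$, which preserve all the relevant plain fidelities. Both repairs are routine, so the proof is correct once they are inserted.
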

\begin{IEEEproof}
    Let $\rho_{1}$ be optimal for $\widetilde{D}_{\beta}^{\varepsilon_{1}}(\rho\Vert\sigma)$,
and let $\rho_{2}$ be optimal for $\widetilde{D}_{\alpha}^{\varepsilon_{2}}(\rho\Vert
\sigma)$. Then it follows from Remark~\ref{rem:ineq-to-eq-smooth-sand} that
\begin{equation}
F(\rho_{i},\rho)=1-\varepsilon_{i},
\end{equation}
for $i\in\left\{  1,2\right\}  $.
Let
\begin{align}
     \rho'_1 & \coloneqq \rho_1 \oplus (1-\Tr[\rho_1] ) \oplus 0,\\
     \rho'_2 & \coloneqq \rho_2 \oplus 0 \oplus  (1-\Tr[\rho_2] ),\\
     \rho' & \coloneqq \rho \oplus 0 \oplus 0.
\end{align}
Note that $\rho'_1$ and $ \rho'_2$ are normalized states satisfying the equalities
\begin{align}
    F(\rho'_1,\rho'_2) & = F(\rho_1,\rho_2), \\
     F(\rho'_i,\rho') & = F(\rho_i,\rho),
\end{align}
for $i \in \{1,2\}$.
Then, by applying the refined triangular inequality for the sine distance $\sqrt{1-F}$ of normalized states \cite[Proposition~3.16]{tomamichel2015quantum}, along with the assumption that $\varepsilon_{1}+\varepsilon_{2}\leq1$), 
we arrive at 
\begin{align}
 & \sqrt{1-F(\rho_{1},\rho_{2})} \notag \\ & \leq\sqrt{1-F(\rho_{1},\rho)}\sqrt{F(\rho,\rho_{2})} \nonumber \\
 & \qquad +\sqrt{1-F(\rho,\rho_{2})}\sqrt{F(\rho_{1},\rho)} \\
&  =\sqrt{\varepsilon_{1}}\sqrt{1-\varepsilon_{2}}+\sqrt{\varepsilon_{2}}%
\sqrt{1-\varepsilon_{1}}.
\end{align}
Then it follows that%
\begin{align}
F(\rho_{1},\rho_{2}) &  \geq1-\left[  \sqrt{\varepsilon_{1}}\sqrt
{1-\varepsilon_{2}}+\sqrt{\varepsilon_{2}}\sqrt{1-\varepsilon_{1}}\right]
^{2}\\
&  =1-\varepsilon^{\prime} \label{eq: relation between rho 1 and 2}.
\end{align}

To arrive at the desired inequality in \eqref{eq:smooth-sand-ineq}, let us recall the following inequality from  \cite[Lemma~1]{wang2019resource}. Let
$\rho_{1}$ and $\rho_{2}$ be subnormalized states, and let $\sigma$ be a positive
semi-definite operator such that $\operatorname{supp}(\rho_{1})\subseteq
\operatorname{supp}(\sigma)$. For $\alpha\in(1/2,1)$ and $\beta=\alpha
/(2\alpha-1)>1$,%
\begin{align}
\widetilde{D}_{\beta}(\rho_{1}\Vert\sigma)-\widetilde{D}_{\alpha}(\rho
_{2}\Vert\sigma)  &  \geq\frac{\alpha}{1-\alpha}\log_{2}F(\rho_{1},\rho_{2})\\
&  =\frac{\beta}{\beta-1}\log_{2}F(\rho_{1},\rho_{2}).
\end{align}
We note here that the proof of \cite[Lemma~1]{wang2019resource} was only given therein for states, but it is clear by inspection that the same proof holds for subnormalized states.
This implies that
\begin{align}
\widetilde{D}_{\beta}(\rho_{1}\Vert\sigma)+\frac{\beta}{\beta-1}\log_{2}
\!\left(  \frac{1}{F(\rho_{1},\rho_{2})}\right)   &  \geq\widetilde
{D}_{\alpha}(\rho_{2}\Vert\sigma)\label{eq:take-limit-inf-1}.
\end{align}
Then, by the inequality in \eqref{eq: relation between rho 1 and 2}, we have 
\begin{equation}
    \widetilde{D}_{\beta}(\rho_{1}\Vert\sigma)+\frac{\beta}{\beta-1}\log_{2}
\left(  \frac{1}{1-\varepsilon'}\right)  \geq \widetilde{D}_{\alpha}(\rho_{2}\Vert\sigma).
\label{eq:final-step-pseudo-cont-smooth}
\end{equation}
Lastly, we conclude the proof by noting the assumption that  $\rho_{1}$ and $ \rho_{2}$ are optimal for $\widetilde{D}_{\beta}^{\varepsilon_{1}}(\rho\Vert\sigma)$
and $\widetilde{D}_{\alpha}^{\varepsilon_{2}}(\rho\Vert
\sigma)$, respectively.
\end{IEEEproof}

\medskip 
Applying \cref{thm:connection-to-smooth-renyi} and the limits $\varepsilon_1 \to 0$ and $\alpha \to 1/2$ (or alternatively directly from \eqref{eq:final-step-pseudo-cont-smooth} with $\varepsilon_1 =0$ and taking the limit $\alpha \to 1/2$ while employing the $\alpha$-monotonicity of~$\widetilde{D}_{\alpha}$), we arrive at the following corollary: 

\begin{corollary}\label{prop: Dmin and smooth renyi}
For all $\varepsilon\in[0,1)$, every state $\rho$, PSD
operator~$\sigma$, and $\beta>1$, the following inequality holds%
\begin{equation}
\widetilde{D}_{\beta}(\rho\Vert\sigma)+\frac{\beta}{\beta-1}\log_{2}
\!\left(  \frac{1}{1-\varepsilon}\right)  \geq D_{\min,F}^{\varepsilon}%
(\rho\Vert\sigma).
\end{equation}
    
\end{corollary}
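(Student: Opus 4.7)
The plan is to specialize Theorem~\ref{thm:connection-to-smooth-renyi} to the degenerate smoothing $\varepsilon_1 = 0$ and then descend from $\widetilde{D}_\alpha$ to $\widetilde{D}_{1/2} = D_{\min,F}$ using $\alpha$-monotonicity. A useful preliminary observation is that the map $\alpha \mapsto \alpha/(2\alpha-1)$ is a strictly decreasing bijection from $(1/2,1)$ onto $(1,\infty)$, with limits $+\infty$ at $\alpha = 1/2^+$ and $1$ at $\alpha \to 1^-$; so every $\beta > 1$ in the statement arises from a unique $\alpha \in (1/2,1)$ to which Theorem~\ref{thm:connection-to-smooth-renyi} applies.

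Given $\beta > 1$, I would set $\alpha = \beta/(2\beta-1) \in (1/2,1)$ and invoke Theorem~\ref{thm:connection-to-smooth-renyi} with $\varepsilon_1 = 0$ and $\varepsilon_2 = \varepsilon$. Under this choice, the composite parameter collapses to $\varepsilon' = \varepsilon$, and the constraint $F(\widetilde{\rho},\rho) \geq 1$ on a subnormalized $\widetilde{\rho}$ forces $\widetilde{\rho} = \rho$ (since $F(\widetilde{\rho},\rho) \leq \Tr[\widetilde{\rho}] \leq 1$ whenever $\rho$ is a state), so $\widetilde{D}_\beta^{0}(\rho \Vert \sigma) = \widetilde{D}_\beta(\rho \Vert \sigma)$. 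The theorem then delivers
\[
\widetilde{D}_\beta(\rho \Vert \sigma) + \frac{\beta}{\beta-1}\log_2\!\left(\frac{1}{1-\varepsilon}\right) \geq \widetilde{D}_\alpha^\varepsilon(\rho \Vert \sigma).
\]

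The remaining step is to bound $\widetilde{D}_\alpha^\varepsilon(\rho \Vert \sigma) \geq D_{\min,F}^\varepsilon(\rho \Vert \sigma)$. By $\alpha$-monotonicity of $\widetilde{D}_\alpha$ and the fact that $\alpha > 1/2$, we have $\widetilde{D}_\alpha(\widetilde{\rho} \Vert \sigma) \geq \widetilde{D}_{1/2}(\widetilde{\rho} \Vert \sigma) = -\log_2 F(\widetilde{\rho},\sigma)$ for every subnormalized $\widetilde{\rho}$. For $\alpha \in (1/2,1)$ the smooth sandwiched R\'enyi $\widetilde{D}_\alpha^\varepsilon$ is a supremum over exactly the same feasibility set $\{\widetilde{\rho} \in \mathcal{D}_{\leq} : F(\widetilde{\rho},\rho) \geq 1-\varepsilon\}$ that defines $D_{\min,F}^\varepsilon$ (writing $-\log_2$ of the defining infimum as the supremum of $-\log_2 F$), so this pointwise inequality passes to the supremum. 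Chaining the two bounds gives the corollary. I do not expect any real obstacle; the only care needed is in verifying the degenerate-constraint reduction and recognizing that the two supremum formulations share the same feasibility set, so the pointwise $\alpha$-monotonicity transfers to the smoothed quantities.
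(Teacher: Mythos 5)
Your proposal is correct and follows essentially the same route as the paper, which obtains the corollary by applying Theorem~\ref{thm:connection-to-smooth-renyi} (specifically the inequality \eqref{eq:final-step-pseudo-cont-smooth}) with $\varepsilon_1=0$ and then descending to $\widetilde{D}_{1/2}=D_{\min,F}$ via $\alpha$-monotonicity. You simply spell out the details the paper leaves implicit (that $\varepsilon'=\varepsilon$, that the degenerate constraint forces $\widetilde{\rho}=\rho$, and that the pointwise monotonicity---which does extend to subnormalized states via the scaling $\widetilde{Q}_\alpha(c\rho_0\Vert\sigma)=c^\alpha\widetilde{Q}_\alpha(\rho_0\Vert\sigma)$---passes to the common feasible set).
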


By applying \eqref{eq: dmax infity} and \cref{prop: Dmin and smooth renyi}, we arrive at the following inequality:
\begin{equation}
D_{\max}(\rho\Vert\sigma)+\log_{2}\!\left(  \frac{1}{1-\varepsilon}\right)
\geq D_{\min,F}^{\varepsilon}(\rho\Vert\sigma).
\end{equation}

We can also arrive at an inequality relating the smooth $F$-min relative entropy to the smooth max-relative entropy. 
By taking the limits $\beta \to \infty$ and $\alpha \to 1/2$ in \cref{thm:connection-to-smooth-renyi}, while employing the $\alpha$-monotonicity of $\widetilde{D}_{\beta}$, we obtain the following inequality: 
\begin{corollary}
    \label{prop:connect-smooth-max}
    Let $\rho$ be a state and $\sigma$ a PSD operator. Let $\varepsilon_{1},\varepsilon_{2}\in\left[
0,1\right]  $ be such that $\varepsilon_{1}+\varepsilon_{2}\leq1$, and let 
\begin{equation}
\varepsilon^{\prime}\coloneqq\left[  \sqrt{\varepsilon_{1}}\sqrt
{1-\varepsilon_{2}}+\sqrt{1-\varepsilon_{1}} \sqrt{\varepsilon_{2}}\right]
^{2},
\end{equation}
so that $\varepsilon^{\prime}\in\left[  0,1\right]  $. Then
\begin{equation}
{D}_{\max}^{\varepsilon_{1}}(\rho\Vert\sigma)+\log_{2}\!\left(  \frac
{1}{1-\varepsilon^{\prime}}\right)  \geq D_{\min,F}^{\varepsilon_{2}}(\rho
\Vert\sigma).
\label{eq:smooth-min-max-ineq}
\end{equation}
\end{corollary}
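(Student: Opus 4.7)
The strategy is to invoke Theorem~\ref{thm:connection-to-smooth-renyi} with $\beta = \alpha/(2\alpha-1)$ and sandwich its two sides between $D_{\max}^{\varepsilon_1}$ on the left and $D_{\min,F}^{\varepsilon_2}$ on the right via the $\alpha$-monotonicity of the sandwiched R\'enyi relative entropy, before taking $\alpha \to 1/2^+$ (equivalently $\beta \to \infty$).

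\textbf{Plan.} First I would fix $\alpha \in (1/2,1)$, set $\beta = \alpha/(2\alpha-1) > 1$, and start from the inequality in Theorem~\ref{thm:connection-to-smooth-renyi}:
\begin{equation*}
\widetilde{D}_{\beta}^{\varepsilon_{1}}(\rho\Vert\sigma)+\frac{\beta}{\beta-1}\log_{2}\!\left(  \frac{1}{1-\varepsilon^{\prime}}\right)  \geq \widetilde{D}_{\alpha}^{\varepsilon_{2}}(\rho\Vert\sigma).
\end{equation*}
Next, on the left-hand side, I would use $\alpha$-monotonicity together with $D_{\max} = \lim_{\gamma\to\infty}\widetilde{D}_{\gamma}$ from \eqref{eq: dmax infity} to conclude that $\widetilde{D}_{\beta}(\omega\Vert\sigma) \leq D_{\max}(\omega\Vert\sigma)$ for every subnormalized state $\omega$; taking an infimum over $\omega$ satisfying $F(\omega,\rho)\geq 1-\varepsilon_1$ then yields $\widetilde{D}_{\beta}^{\varepsilon_{1}}(\rho\Vert\sigma) \leq D_{\max}^{\varepsilon_{1}}(\rho\Vert\sigma)$ (the minimizer for $D_{\max}^{\varepsilon_1}$ is a feasible candidate for the infimum defining $\widetilde{D}_{\beta}^{\varepsilon_{1}}$). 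On the right-hand side, for $\alpha \in (1/2,1)$, I would again apply $\alpha$-monotonicity, now in the form $\widetilde{D}_{\alpha}(\omega\Vert\sigma) \geq \widetilde{D}_{1/2}(\omega\Vert\sigma) = D_{\min,F}(\omega\Vert\sigma)$, and take a supremum over $\omega$ satisfying $F(\omega,\rho)\geq 1-\varepsilon_2$ to get $\widetilde{D}_{\alpha}^{\varepsilon_{2}}(\rho\Vert\sigma) \geq D_{\min,F}^{\varepsilon_{2}}(\rho\Vert\sigma)$, where the direction of the inequality is preserved because both smoothings are taken as a supremum in this regime.

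Combining these two bounds with Theorem~\ref{thm:connection-to-smooth-renyi} gives
\begin{equation*}
D_{\max}^{\varepsilon_{1}}(\rho\Vert\sigma)+\frac{\beta}{\beta-1}\log_{2}\!\left(  \frac{1}{1-\varepsilon^{\prime}}\right)  \geq D_{\min,F}^{\varepsilon_{2}}(\rho\Vert\sigma)
\end{equation*}
for every $\beta > 1$. Finally, I would let $\alpha \to 1/2^+$, so that $\beta \to \infty$ and the prefactor $\beta/(\beta-1) \to 1$, yielding \eqref{eq:smooth-min-max-ineq}. Note that no limit is taken inside the optimizations defining $D_{\max}^{\varepsilon_1}$ or $D_{\min,F}^{\varepsilon_2}$, so no interchange-of-limits issue arises; the only limit is in the scalar factor $\beta/(\beta-1)$. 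The main subtlety to double-check is the direction of the sandwiching on the two sides (an infimum-type smoothing bounded above by $D_{\max}^{\varepsilon_1}$, a supremum-type smoothing bounded below by $D_{\min,F}^{\varepsilon_2}$), but both directions follow directly from $\alpha$-monotonicity applied pointwise under the respective smoothing.
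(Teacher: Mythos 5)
Your proposal is correct and follows essentially the same route as the paper, which obtains the corollary by taking the limits $\beta\to\infty$ and $\alpha\to 1/2$ in \cref{thm:connection-to-smooth-renyi} while invoking the $\alpha$-monotonicity of $\widetilde{D}_{\alpha}$; you have simply spelled out the pointwise monotonicity steps ($\widetilde{D}_{\beta}^{\varepsilon_1}\leq D_{\max}^{\varepsilon_1}$ and $\widetilde{D}_{\alpha}^{\varepsilon_2}\geq D_{\min,F}^{\varepsilon_2}$) that the paper leaves implicit.
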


We note here that the inequality in \eqref{eq:smooth-min-max-ineq}
 appeared in \cite[Lemma~III.8]{ramakrishnan2023moderate} via a different proof strategy. This kind of inequality was interpreted in a resource-theoretic manner in \cite{wang2019resource,wang19channels,W21second,takagi2022one}---it would thus be interesting if it were possible to do so for \eqref{eq:smooth-sand-ineq} and \eqref{eq:smooth-min-max-ineq}.

\subsection{Relation to Smooth Min-Relative Entropy}

\begin{definition}
Recall that the smooth min-relative entropy is defined for
$\varepsilon\in\left[  0,1\right]  $, a state $\rho$, and a PSD operator $\sigma$ as%
\begin{equation}
D_{\min}^{\varepsilon}(\rho\Vert\sigma)\coloneqq-\log_{2}\inf_{\Lambda\geq
0}\left\{  \operatorname{Tr}[\Lambda\sigma]:\operatorname{Tr}[\Lambda\rho
]\geq1-\varepsilon,\Lambda\leq I\right\}  .
\end{equation}
As mentioned before, it is also known as the hypothesis testing relative entropy.
\end{definition}

\begin{proposition}
\label{prop:connect-hypo}For every $\varepsilon\in(0,1)$, state $\rho$, and
PSD $\sigma$, the following inequality holds%
\begin{equation}
D_{\min}^{\varepsilon}(\rho\Vert\sigma)\leq D_{\min,F}^{\varepsilon}(\rho
\Vert\sigma)+\log_{2}\!\left(  \frac{1}{1-\varepsilon}\right)  .
\end{equation}

\end{proposition}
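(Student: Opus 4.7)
The plan is to establish the inequality by exhibiting, for every operator $\Lambda$ that is feasible in the variational formula defining $D_{\min}^{\varepsilon}(\rho\Vert\sigma)$, a corresponding subnormalized state that is feasible for $D_{\min,F}^{\varepsilon}(\rho\Vert\sigma)$ and whose fidelity with $\sigma$ is controlled by $\Tr[\Lambda\sigma]$ up to the factor $1/(1-\varepsilon)$. Specifically, given $0\leq\Lambda\leq I$ with $\Tr[\Lambda\rho]\geq 1-\varepsilon>0$, I would define the renormalized post-measurement state
\begin{equation*}
\widetilde{\rho}'\coloneqq\frac{\sqrt{\Lambda}\,\rho\,\sqrt{\Lambda}}{\Tr[\Lambda\rho]},
\end{equation*}
which is a bona fide normalized state and hence lies in $\mathcal{D}_{\leq}$. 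The proof then reduces to verifying two claims: (i) $F(\widetilde{\rho}',\rho)\geq 1-\varepsilon$, so that $\widetilde{\rho}'$ is feasible in \eqref{eq:smooth-dmin}; and (ii) $F(\widetilde{\rho}',\sigma)\leq \Tr[\Lambda\sigma]/(1-\varepsilon)$. The desired inequality will then follow by taking $-\log_{2}$ and supremizing over feasible $\Lambda$.

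For (i), I would use the identity $\sqrt{F(\omega,\tau)}=\Tr\sqrt{\sqrt{\tau}\omega\sqrt{\tau}}$ with $\omega=\sqrt{\Lambda}\rho\sqrt{\Lambda}$ and $\tau=\rho$, and exploit the key observation
\begin{equation*}
\sqrt{\rho}\,(\sqrt{\Lambda}\rho\sqrt{\Lambda})\,\sqrt{\rho}=\bigl(\sqrt{\rho}\sqrt{\Lambda}\sqrt{\rho}\bigr)^{2},
\end{equation*}
where the right-hand side is the square of the PSD operator $\sqrt{\rho}\sqrt{\Lambda}\sqrt{\rho}=(\sqrt{\Lambda}\sqrt{\rho})^{\dagger}(\sqrt{\Lambda}\sqrt{\rho})$. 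Its operator square root is therefore itself, yielding $\sqrt{F(\sqrt{\Lambda}\rho\sqrt{\Lambda},\rho)}=\Tr[\sqrt{\rho}\sqrt{\Lambda}\sqrt{\rho}]=\Tr[\Lambda\rho]$. Combined with the homogeneity $F(c\omega,\tau)=cF(\omega,\tau)$ for $c\geq 0$, this gives $F(\widetilde{\rho}',\rho)=\Tr[\Lambda\rho]\geq 1-\varepsilon$. For (ii), I would similarly write $\sqrt{F(\sqrt{\Lambda}\rho\sqrt{\Lambda},\sigma)}=\|\sqrt{\sqrt{\Lambda}\rho\sqrt{\Lambda}}\sqrt{\sigma}\|_{1}$; setting $Y\coloneqq\sqrt{\rho}\sqrt{\Lambda}$ and using the polar decomposition $Y=V|Y|$ with $V$ a partial isometry, one has $\sqrt{\sqrt{\Lambda}\rho\sqrt{\Lambda}}=|Y|=V^{\dagger}Y$, and hence $\|V^{\dagger}\sqrt{\rho}\sqrt{\Lambda}\sqrt{\sigma}\|_{1}\leq\|\sqrt{\rho}\sqrt{\Lambda}\sqrt{\sigma}\|_{1}$ by the partial-isometry bound $\|V^{\dagger}X\|_{1}\leq\|X\|_{1}$. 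Then H\"older's inequality with $p=q=2$ gives
\begin{equation*}
\|\sqrt{\rho}\sqrt{\Lambda}\sqrt{\sigma}\|_{1}\leq\|\sqrt{\rho}\|_{2}\,\|\sqrt{\Lambda}\sqrt{\sigma}\|_{2}=\sqrt{\Tr[\rho]\,\Tr[\Lambda\sigma]}=\sqrt{\Tr[\Lambda\sigma]}.
\end{equation*}
Squaring and dividing by $\Tr[\Lambda\rho]\geq 1-\varepsilon$ produces $F(\widetilde{\rho}',\sigma)\leq \Tr[\Lambda\sigma]/(1-\varepsilon)$.

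The main piece requiring care is the fidelity identity in step (i): although $\sqrt{\Lambda}\rho\sqrt{\Lambda}$ admits no convenient closed-form operator square root, sandwiching it by $\sqrt{\rho}$ on either side turns the inside of Uhlmann's formula into a perfect square of a PSD operator, trivializing the square root and reducing the whole computation to a cyclic trace. The H\"older step in (ii) is routine. Once (i) and (ii) are in hand, the inequality $-\log_{2}F(\widetilde{\rho}',\sigma)\geq -\log_{2}\Tr[\Lambda\sigma]-\log_{2}\!\left(\frac{1}{1-\varepsilon}\right)$, valid for every feasible $\Lambda$, yields the claim upon supremization over $\Lambda$ and invoking the definitions of $D_{\min}^{\varepsilon}$ and $D_{\min,F}^{\varepsilon}$.
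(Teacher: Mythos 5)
Your proof is correct and follows essentially the same route as the paper's: both arguments take an arbitrary feasible $\Lambda$, form the renormalized post-measurement state $\sqrt{\Lambda}\rho\sqrt{\Lambda}/\operatorname{Tr}[\Lambda\rho]$, verify its feasibility via the gentle-measurement-type computation, and bound its fidelity with $\sigma$ by $\operatorname{Tr}[\Lambda\sigma]/(1-\varepsilon)$ (the paper routes this last step through $F(\rho,\sqrt{\Lambda}\sigma\sqrt{\Lambda})$ and data processing for fidelity under the trace channel, which is equivalent in substance to your polar-decomposition-plus-H\"older argument). One small slip in step (i): $\operatorname{Tr}[\sqrt{\rho}\sqrt{\Lambda}\sqrt{\rho}]=\operatorname{Tr}[\sqrt{\Lambda}\rho]$ is not equal to $\operatorname{Tr}[\Lambda\rho]$ in general but only bounded below by it (since $\sqrt{\Lambda}\geq\Lambda$ for $0\leq\Lambda\leq I$), so $F(\widetilde{\rho}',\rho)\geq\operatorname{Tr}[\Lambda\rho]$ rather than equality; this is the direction you need, so the conclusion is unaffected.
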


\begin{IEEEproof}
Let $\Lambda$ be an arbitrary measurement operator satisfying
$\operatorname{Tr}[\Lambda\rho]\geq1-\varepsilon$. By the gentle measurement
lemma from \cite[Eq.~(9.202)]{Wbook17}, we know that $\operatorname{Tr}[\Lambda\rho]\geq1-\varepsilon$ implies
that%
\begin{equation}
F(\widetilde{\rho},\rho)\geq1-\varepsilon,
\label{eq:GML-fid}
\end{equation}
where $\widetilde{\rho}=\frac{1}{\operatorname{Tr}[\Lambda\rho]}\sqrt{\Lambda
}\rho\sqrt{\Lambda}$. 
An alternative way to see the inequality in
\eqref{eq:GML-fid}\ is by the proof in \cref{lem:gentle-meas-alt-proof} in Appendix~\ref{app:supp-lems}.

Now we should relate $\operatorname{Tr}[\Lambda\sigma]$
to $F(\widetilde{\rho},\sigma)$. Consider from \cref{lem:alt-fid-flip-Lambda} in Appendix~\ref{app:supp-lems} that
\begin{align}
  F(\widetilde{\rho},\sigma) & =  \frac{1}{\Tr[\Lambda \rho]}F(\rho,\sqrt{\Lambda}\sigma\sqrt{\Lambda}) \\
&  \leq \frac{1}{1-\varepsilon}F(\rho,\sqrt{\Lambda}\sigma\sqrt{\Lambda})\\
&  \leq\frac{1}{1-\varepsilon}\operatorname{Tr}[\Lambda\sigma].
\end{align}
The last inequality follows from data processing for fidelity under the trace
channel. Then%
\begin{align}
-\log_{2}\operatorname{Tr}[\Lambda\sigma] &  \leq-\log_{2}\!\left(
F(\widetilde{\rho},\sigma)\left(  1-\varepsilon\right)  \right)  \\
&  =-\log_{2}F(\widetilde{\rho},\sigma)+\log_{2}\!\left(  \frac{1}%
{1-\varepsilon}\right)  \\
&  \leq D_{\min,F}^{\varepsilon}(\rho\Vert\sigma)+\log_{2}\!\left(  \frac
{1}{1-\varepsilon}\right)  .
\end{align}
Since this holds for an arbitrary measurement operator satisfying
$\operatorname{Tr}[\Lambda\rho]\geq1-\varepsilon$, we conclude that%
\begin{equation}
D_{\min}^{\varepsilon}(\rho\Vert\sigma)\leq D_{\min,F}^{\varepsilon}(\rho
\Vert\sigma)+\log_{2}\!\left(  \frac{1}{1-\varepsilon}\right)  ,
\end{equation}
which is the desired statement.
\end{IEEEproof}

\medskip 

The following was established in \cite{zhao2019oneDmin}, but we give an alternative proof for it.

\begin{proposition}
\label{prop:connect-hypo-alt}For every $\varepsilon\in(0,1)$, state $\rho$, and PSD operator $\sigma$, the following inequality holds%
\begin{equation}
D_{\min}^{\varepsilon}(\rho\Vert\sigma)\leq D_{\min,F}^{\varepsilon(2-\varepsilon)}(\rho
\Vert\sigma)  .
\end{equation}

\end{proposition}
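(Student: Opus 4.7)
The goal is to show that every feasible test operator for $D_{\min}^{\varepsilon}(\rho\Vert\sigma)$ gives rise to a feasible subnormalized state for $D_{\min,F}^{\varepsilon(2-\varepsilon)}(\rho\Vert\sigma)$ with objective value no larger than that of the test operator; optimizing over $\Lambda$ then yields the inequality. Concretely, given $0\leq\Lambda\leq I$ with $\Tr[\Lambda\rho]\geq 1-\varepsilon$, I would produce $\widetilde{\rho}\in\mathcal{D}_{\leq}$ satisfying $F(\widetilde{\rho},\rho)\geq 1-\varepsilon(2-\varepsilon)=(1-\varepsilon)^{2}$ and $F(\widetilde{\rho},\sigma)\leq\Tr[\Lambda\sigma]$. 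The plan is to reuse almost everything from the proof of \cref{prop:connect-hypo}, but to keep the witness state \emph{unnormalized}, which is the source of the sharper (quadratic) dependence on $\varepsilon$ and the disappearance of the $\log_{2}(1/(1-\varepsilon))$ correction.

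The natural candidate is $\widetilde{\rho}\coloneqq\sqrt{\Lambda}\,\rho\,\sqrt{\Lambda}$, i.e., the unnormalized gentle-measurement state. Subnormalization is immediate, since $\Tr[\widetilde{\rho}]=\Tr[\Lambda\rho]\leq 1$. The bound $F(\widetilde{\rho},\sigma)\leq\Tr[\Lambda\sigma]$ follows by the same calculation used in the proof of \cref{prop:connect-hypo}: combining linearity of fidelity in each argument with \cref{lem:alt-fid-flip-Lambda} gives $F(\widetilde{\rho},\sigma)=F(\rho,\sqrt{\Lambda}\,\sigma\,\sqrt{\Lambda})$, and data processing of fidelity under the trace channel then upper bounds this by $\Tr[\sqrt{\Lambda}\,\sigma\,\sqrt{\Lambda}]=\Tr[\Lambda\sigma]$.

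The only new ingredient is the lower bound on $F(\widetilde{\rho},\rho)$. I would invoke Uhlmann's theorem: fixing any purification $|\phi\rangle_{SA}$ of $\rho_{S}$, the vector $(\sqrt{\Lambda}_{S}\otimes I_{A})|\phi\rangle_{SA}$ is a (subnormalized) purification of $\widetilde{\rho}=\sqrt{\Lambda}\,\rho\,\sqrt{\Lambda}$, so
\begin{equation}
F(\widetilde{\rho},\rho)\geq |\langle\phi|(\sqrt{\Lambda}\otimes I)|\phi\rangle|^{2}=\left(\Tr[\sqrt{\Lambda}\,\rho]\right)^{2}.
\end{equation}
The crucial observation is the elementary operator inequality $\sqrt{\Lambda}\geq\Lambda$, which holds by the functional calculus because $\sqrt{x}\geq x$ for all $x\in[0,1]$ and the spectrum of $\Lambda$ lies in $[0,1]$. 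Consequently $\Tr[\sqrt{\Lambda}\rho]\geq\Tr[\Lambda\rho]\geq 1-\varepsilon$, giving $F(\widetilde{\rho},\rho)\geq(1-\varepsilon)^{2}=1-\varepsilon(2-\varepsilon)$, as needed. Taking $-\log_{2}$ of $\Tr[\Lambda\sigma]\geq F(\widetilde{\rho},\sigma)\geq\inf\{F(\widetilde{\rho}',\sigma):\widetilde{\rho}'\in\mathcal{D}_{\leq},\,F(\widetilde{\rho}',\rho)\geq 1-\varepsilon(2-\varepsilon)\}$ and infimizing over $\Lambda$ concludes the proof.

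The main obstacle is really just recognizing the right construction: once one sees that dropping the normalization in the argument for \cref{prop:connect-hypo} converts the linear trace-overlap bound into the quadratic fidelity bound through the substitution $\Tr[\Lambda\rho]\mapsto \Tr[\sqrt{\Lambda}\rho]$ (via Uhlmann plus the inequality $\sqrt{\Lambda}\geq\Lambda$), every other piece is already in place from the previous proposition and its supporting lemmas.
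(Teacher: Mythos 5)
Your proposal is correct and follows essentially the same route as the paper's proof in Appendix~\ref{app:alt-proof-min-hypo-bnd}: the unnormalized witness $\widetilde{\rho}=\sqrt{\Lambda}\rho\sqrt{\Lambda}$, the bound $F(\widetilde{\rho},\rho)\geq(\operatorname{Tr}[\sqrt{\Lambda}\rho])^{2}\geq(\operatorname{Tr}[\Lambda\rho])^{2}\geq(1-\varepsilon)^{2}$ via $\sqrt{\Lambda}\geq\Lambda$, and the bound $F(\widetilde{\rho},\sigma)\leq\operatorname{Tr}[\Lambda\sigma]$ via Uhlmann and data processing under the trace channel. The only cosmetic difference is that the paper evaluates $F(\widetilde{\rho},\rho)=(\operatorname{Tr}[\sqrt{\Lambda}\rho])^{2}$ exactly by a direct trace computation, whereas you obtain the same quantity as a lower bound from a particular choice of purification in Uhlmann's theorem; both are valid.
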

\begin{IEEEproof}
The proof is similar to the proof of Proposition~\ref{prop:connect-hypo}. See Appendix~\ref{app:alt-proof-min-hypo-bnd} for details.
\end{IEEEproof}

\medskip 
We note here that \cref{prop:connect-hypo} derived in this work is useful in obtaining the second-order asymptotics presented in \cref{Sec:Second-order-asymptotic}.

\begin{remark}[Lower bound for smooth $F$-min-relative entropy via Petz--R\'enyi relative entropy]
    For $\alpha\in(0,1)$, by \cite[Proposition~3]{qi2018applications}, we have 
\begin{equation}
    D^\varepsilon_{\min} (\rho \Vert \sigma) \geq \frac{\alpha}{\alpha-1} \log_2 \! \left( \frac{1}{\varepsilon}\right) + D_\alpha(\rho \Vert \sigma).
\end{equation}
Then by  \cref{prop:connect-hypo}, we find that 
\begin{equation}
    D^\varepsilon_{\min,F}(\rho \Vert \sigma) + \log_2 \! \left( \frac{1}{1-\varepsilon}\right) \geq D_\alpha(\rho \Vert \sigma) + \frac{\alpha}{\alpha-1} \log_2 \! \left( \frac{1}{\varepsilon}\right) .
\end{equation}
It is an open question to establish a tighter lower bound on the smooth $F$-min-relative entropy in terms of the Petz--R\'enyi relative entropy of order $\alpha\in(0,1)$.
\end{remark}

\section{Second-Order Asymptotics} \label{Sec:Second-order-asymptotic}

In this section, we establish the second-order asymptotics of the smooth $F$-min-relative entropy (\cref{thm:second-order-Dmin}), as well as for the smooth sandwiched R\'enyi relative entropy (\cref{cor:second-order-smooth-sandwiched-Renyi}). Before doing so, let us first define a set of quantities that are needed in what follows. 
The cumulative distribution function of a standard normal random variable and its inverse are respectively given by 
\begin{equation}
    \Phi(a) \coloneqq \frac{1}{\sqrt{2 \pi}} \int_{-\infty}^a dx \ \mathrm{exp}\left( \frac{-x^2}{2}\right),
\end{equation}
\begin{equation}
    \Phi^{-1}(\varepsilon) \coloneqq \sup \{ a \in \RR \  | \ \Phi(a) \leq \varepsilon \}.
\end{equation}
For $\varepsilon \in(0,1)$, recall that 
\begin{equation}
    \Phi^{-1}(1-\varepsilon) = - \Phi^{-1}(\varepsilon),
\end{equation}
and note that $\Phi^{-1}(\varepsilon) < 0$ for $\varepsilon < 1/2$ and $\Phi^{-1}(\varepsilon) > 0 $ for $\varepsilon > 1/2$.
For a state  $\rho$ and PSD operator $\sigma$ such that $\supp(\rho)\subseteq\supp(\sigma)$ and $V(\rho \Vert \sigma) >0$, the following second-order expansion is known:
\begin{multline} \label{eq:hypothesis-testing-second-order}
    \frac{1}{n} D^\varepsilon_{\min}\!\left(\rho^{\otimes n} \Vert \sigma^{\otimes n }\right) =\\
    D(\rho \Vert \sigma) + \sqrt{\frac{1}{n} V(\rho \Vert \sigma)} \ \Phi^{-1}(\varepsilon) + O\!\left( \frac{\log n}{n} \right).
\end{multline}
This expansion gives a refined understanding of Stein's lemma for asymmetric hypothesis testing \cite{tomamichel2013hierarchy,li2014second} and has been useful in various developments toward establishing second-order asymptotic characterizations of information-theoretic tasks (see, e.g., \cite{datta2014second,DTW14,wilde2017position}).
For the finite-dimensional scenario, \eqref{eq:hypothesis-testing-second-order} was proven in \cite{tomamichel2013hierarchy,li2014second}. For a state $\rho$ and PSD trace-class $\sigma$ acting on a separable Hilbert space, the inequality $\leq $ was established in \cite{datta2016second, kaur2017upper},
while the inequality $\geq$ was shown in \cite{li2014second, datta2016second, khabbazi2019union}.

\begin{theorem} \label{thm:second-order-Dmin}
     For a state  $\rho$, a PSD operator $\sigma$, and $\varepsilon\in(0,1)$, such that $\supp(\rho)\subseteq\supp(\sigma)$ and $V(\rho \Vert \sigma) >0$, the following second-order expansion holds:
\begin{multline}
    \frac{1}{n} D_{\min,F}^{\varepsilon}\!\left(\rho^{\otimes n} \Vert\sigma^{\otimes n}\right) = \\  D(\rho \Vert \sigma) + \sqrt{\frac{1}{n} V(\rho \Vert \sigma)} \ \Phi^{-1}(\varepsilon) + O\!\left( \frac{\log n}{n} \right).
\end{multline}

\end{theorem}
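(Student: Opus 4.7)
The plan is to sandwich $D_{\min,F}^{\varepsilon}(\rho^{\otimes n}\Vert\sigma^{\otimes n})$ between two quantities whose second-order asymptotic expansions are already available---the smooth min-relative entropy $D_{\min}^{\varepsilon}$ for the lower bound and the smooth max-relative entropy $D_{\max}^{\varepsilon_{1}}$ for the upper bound---with $\varepsilon_{1}$ chosen $n$-dependently so that both bounds yield the same leading second-order term $\sqrt{V(\rho\Vert\sigma)/n}\,\Phi^{-1}(\varepsilon)$ up to $O(\log n/n)$. The two connecting inequalities are \cref{prop:connect-hypo} and \cref{prop:connect-smooth-max}.

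For the lower bound, I would apply \cref{prop:connect-hypo} to the tensor-power operators $\rho^{\otimes n}$ and $\sigma^{\otimes n}$, rearranging to obtain
\begin{equation*}
D_{\min,F}^{\varepsilon}(\rho^{\otimes n}\Vert\sigma^{\otimes n}) \geq D_{\min}^{\varepsilon}(\rho^{\otimes n}\Vert\sigma^{\otimes n}) - \log_{2}\!\left(\tfrac{1}{1-\varepsilon}\right).
\end{equation*}
Dividing by $n$ and invoking the second-order expansion \eqref{eq:hypothesis-testing-second-order} of the hypothesis testing relative entropy, while noting that $\log_{2}(1/(1-\varepsilon))/n = O(1/n) \subseteq O(\log n/n)$, yields the claimed lower bound on the left-hand side.

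For the upper bound, I would apply \cref{prop:connect-smooth-max} with $\varepsilon_{2} = \varepsilon$ and $\varepsilon_{1} = 1 - \varepsilon - 1/\sqrt{n}$, so that the constraint $\varepsilon_{1} + \varepsilon_{2} \leq 1$ is satisfied with room to spare and $1 - \varepsilon_{1} = \varepsilon + 1/\sqrt{n}$. A short Taylor expansion of $\varepsilon'$ in the displacement $\eta_{n} := 1/\sqrt{n}$ about $0$ shows that $1 - \varepsilon' = \Theta(1/n)$, so that $\log_{2}(1/(1-\varepsilon'))/n = O(\log n/n)$. I would then substitute the known second-order expansion
\begin{equation*}
\tfrac{1}{n} D_{\max}^{\varepsilon_{1}}(\rho^{\otimes n}\Vert\sigma^{\otimes n}) = D(\rho\Vert\sigma) + \sqrt{\tfrac{V(\rho\Vert\sigma)}{n}}\,\Phi^{-1}(1-\varepsilon_{1}) + O\!\left(\tfrac{\log n}{n}\right)
\end{equation*}
(from prior work, and also recoverable as the $\alpha \to \infty$ specialization of \cref{cor:second-order-smooth-sandwiched-Renyi}) with $1-\varepsilon_{1} = \varepsilon + 1/\sqrt{n}$, and invoke the local Lipschitz continuity of $\Phi^{-1}$ at $\varepsilon \in (0,1)$ to replace $\Phi^{-1}(\varepsilon + 1/\sqrt{n})$ by $\Phi^{-1}(\varepsilon) + O(1/\sqrt{n})$. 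Multiplication by $\sqrt{V(\rho\Vert\sigma)/n}$ absorbs this into $O(1/n)$, and the matching upper bound follows.

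The main technical obstacle will be the use of the $D_{\max}^{\varepsilon_{1}}$ expansion at an $n$-dependent smoothing parameter $\varepsilon_{1} = 1-\varepsilon-1/\sqrt{n}$: this requires that the $O(\log n/n)$ error term in that expansion be uniform across a shrinking neighborhood of $1-\varepsilon$. This uniformity follows from the underlying Berry--Esseen-type estimates used to establish the expansion in the first place, but it must be checked explicitly. All other steps---the Taylor expansion of $1-\varepsilon'$ and the continuity of $\Phi^{-1}$---are elementary.
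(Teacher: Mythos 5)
Your proposal is correct and follows essentially the same route as the paper: the identical lower bound via \cref{prop:connect-hypo} and \eqref{eq:hypothesis-testing-second-order}, and the identical upper bound via \cref{prop:connect-smooth-max} with $\varepsilon_2=\varepsilon$, $\varepsilon_1=1-\varepsilon-1/\sqrt{n}$, the observation that $1-\varepsilon'=\Theta(1/n)$, and a Taylor/continuity argument for $\Phi^{-1}$. The only (presentational) difference is that you invoke the second-order expansion of $D_{\max}^{\varepsilon_1}$ as a black box, whereas the paper derives the needed upper bound on it in one further step from \eqref{eq:smooth-max-to-hypothesis} and the hypothesis-testing expansion at $\varepsilon+1/\sqrt{n}$---which also quietly resolves the smoothing-convention and uniformity caveats you flag, since the uniformity is then needed only for the Berry--Esseen-based expansion of $D_{\min}^{\varepsilon}$.
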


\begin{IEEEproof}
    For the lower bound, we apply  \cref{prop:connect-hypo} to find that
    \begin{align}
  & \frac{1}{n}D^\varepsilon_{\min,F} (\rho^{\otimes n} \Vert \sigma^{\otimes n})     \\
  & \geq \frac{1}{n} D^\varepsilon_{\min}(\rho^{\otimes n} \Vert \sigma^{\otimes n})
  - \frac{1}{n}\log_{2} \!\left(\frac{1}{1-\varepsilon}\right) \\ 
  & = D(\rho \Vert \sigma) + \sqrt{\frac{1}{n} V(\rho\Vert\sigma)} \ \Phi^{-1}(\varepsilon) + O\!\left( \frac{\log n}{n} \right)
  \label{eq:lower-bound-second-order-D-min},
\end{align}
where the last inequality follows from \eqref{eq:hypothesis-testing-second-order}.

For the upper bound, we use \cref{prop:connect-smooth-max}  for $\varepsilon
,\delta\in\left(  0,1\right)  $ such that $\varepsilon+\delta\in\left(
0,1\right)$ to find that 
\begin{multline}
    \frac{1}{n}D^\varepsilon_{\min,F} (\rho^{\otimes n}  \Vert \sigma^{\otimes n}) \\ 
    \leq \frac{1}{n} D_{\max}^{1-\varepsilon-\delta}(\rho^{\otimes n}\Vert\sigma^{\otimes n})+ \frac{1}{n}\log_{2} \!\left(
\frac{1}{1-f(\varepsilon,\delta)}\right), \label{eq:smooth-max-and-min-n-tensor}
\end{multline}
where
\begin{equation} \label{eq:f(eps,delta)}
f(\varepsilon,\delta):=\left[  \sqrt{\varepsilon}\sqrt{\varepsilon+\delta
}+\sqrt{1-\varepsilon-\delta}\sqrt{1-\varepsilon}\right]  ^{2},
\end{equation}
so that $f(\varepsilon,\delta)\in\left(  0,1\right)  $. Indeed, $f(\varepsilon
,\delta)$ can be understood as a classical fidelity of two binary random variables with parameters $\varepsilon$ and $\varepsilon+\delta$, which we know is $\leq1$.
We also note that
\begin{equation}
\frac{1}{1-f(\varepsilon,\delta)}=\frac{4\varepsilon\left(  1-\varepsilon
\right)  }{\delta^{2}}+\frac{2\left(  1-2\varepsilon\right)  }{\delta}%
-\frac{1}{4\varepsilon\left(  1-\varepsilon\right)  }+O\left(  \delta\right)
.
\end{equation}
Thus, when we choose $\delta=1/\sqrt{n}$ (and $n$ sufficiently large so that $\varepsilon + \delta < 1$) for the second-order expansion, we find
that%
\begin{equation} \label{eq: f function taylor}
\log_{2}\!\left(  \frac{1}{1-f(\varepsilon,\delta)}\right)  =O(\log n).
\end{equation}
Recall from \cite[Theorem~4]{ABJT19} that 
\begin{equation}
    D_{\max}^{\varepsilon}(\rho\Vert\sigma)  \leq D_{\min}^{1-\varepsilon}(\rho \Vert \sigma) + \log_{2}\!\left(\frac{1}{1-\varepsilon}\right). \label{eq:smooth-max-to-hypothesis}
\end{equation}
Note that the choice of $\varepsilon$ instead of $\sqrt{\varepsilon}$ as in \cite[Theorem~4]{ABJT19} is due to the fact that the smooth max-relative entropy is defined in \cite{ABJT19} in terms of the sine distance $\sqrt{1-F(\widetilde{\rho},\rho)}$  \cite{R02,R03,GLN04,R06}.
Combining the above inequalities together with \eqref{eq:smooth-max-and-min-n-tensor}, we arrive at 
\begin{align}
 &  \frac{1}{n}D^\varepsilon_{\min,F} (\rho^{\otimes n}  \Vert \sigma^{\otimes n}) \\
   & \stackrel{(a)} \leq \frac{1}{n} D_{\max}^{1-\varepsilon-\delta}(\rho^{\otimes n}\Vert\sigma^{\otimes n})+ \frac{1}{n}\log_{2} \!\left(
\frac{1}{1-f(\varepsilon,\delta)}\right)   \\
& \stackrel{(b)} \leq \frac{1}{n} D^{\varepsilon+ \delta}_{\min}(\rho^{\otimes n} \Vert \sigma^{\otimes n})  +  \frac{1}{n} \log_{2}\!\left(\frac{1}{\varepsilon+\delta}\right)  \notag \\
& \qquad +  \frac{1}{n}\log_{2} \!\left(  \frac{1}{1-f(\varepsilon,\delta)}  \right) \label{eq: connecting with hypothesis eq} \\
& \stackrel{(c)}\leq D(\rho \Vert \sigma) + \sqrt{\frac{1}{n} V(\rho\Vert\sigma)} \ \Phi^{-1}(\varepsilon + \delta) + O\!\left( \frac{\log n}{n} \right) \\
& \stackrel{(d)} \leq D(\rho \Vert \sigma) + \sqrt{\frac{1}{n} V(\rho\Vert \sigma)} \ \Phi^{-1}(\varepsilon) + O\!\left( \frac{\log n}{n} \right). \label{eq:upperbound-second-order}
\end{align}
In the above, (a) follows from \eqref{eq:smooth-max-with-sub-and-not-normalized}, (b) from \eqref{eq:smooth-max-and-min-n-tensor} and \eqref{eq:smooth-max-to-hypothesis}, (c) from \eqref{eq:hypothesis-testing-second-order}, and lastly (d) from the following argument given below.
As stated in \cite[Eq. (4)]{datta2014second} in the proof of \cite[Lemma~3.7]{datta2014second}:
for $f$ a continuously  differentiable function,  by Taylor's theorem,
\begin{equation}
    f \!\left(x \pm \frac{1}{\sqrt{n}} \right)= f(x) \pm \frac{1}{\sqrt{n}}f'(\eta),
\end{equation}
where $\eta \in \left( x-\frac{1}{\sqrt{n}},x \right)$ in the case $-$ and $\eta \in \left( x, x+\frac{1}{\sqrt{n}}\right)$ in the case $+$. By choosing $\delta=1/\sqrt{n}$, the function $\Phi^{-1}(\varepsilon + \delta)$ also satisfies the said property.

With the upper bound in \eqref{eq:upperbound-second-order}, together with the lower bound in \eqref{eq:lower-bound-second-order-D-min}, we conclude the proof. 
\end{IEEEproof}

\begin{corollary}[Second-order asymptotics of smooth sandwiched R\'enyi relative entropy] \label{cor:second-order-smooth-sandwiched-Renyi}
Let $\rho$ be a state and $\sigma$ a PSD operator. Fix $\varepsilon \in (0,1)$. The smooth sandwiched R\'enyi relative entropy of order $\alpha >1$ has the following second-order expansion
       \begin{multline}
    \frac{1}{n} \widetilde{D}_{\alpha}^{\varepsilon}\!\left(\rho^{\otimes n} \Vert\sigma^{\otimes n}\right) = \\  D(\rho \Vert \sigma) -\sqrt{\frac{1}{n} V(\rho \Vert \sigma)} \ \Phi^{-1}(\varepsilon) + O\!\left( \frac{\log n}{n} \right).
\end{multline} 
For $\alpha \in[1/2,1)$, the smooth sandwiched R\'enyi relative entropy has the following second-order expansion:
      \begin{multline}
    \frac{1}{n} \widetilde{D}_{\alpha}^{\varepsilon}\!\left(\rho^{\otimes n} \Vert\sigma^{\otimes n}\right) = \\  D(\rho \Vert \sigma) +\sqrt{\frac{1}{n} V(\rho \Vert \sigma)} \ \Phi^{-1}(\varepsilon) + O\!\left( \frac{\log n}{n} \right).
\end{multline} 
\end{corollary}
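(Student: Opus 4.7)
The plan is to prove the four inequalities (upper and lower bounds in each of the two ranges of $\alpha$) by combining (a)~the $\alpha$-monotonicity of $\widetilde{D}_{\alpha}$, (b)~the identity $\widetilde{D}_{1/2}=D_{\min,F}$ from \eqref{eq:alpha-1/2-D_min,F}, (c)~the second-order expansion of $D_{\min,F}^{\varepsilon}$ established in \cref{thm:second-order-Dmin}, and (d)~the cross-regime bridge in \cref{thm:connection-to-smooth-renyi}, mirroring the two-sided template already used for the proof of \cref{thm:second-order-Dmin}.

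Two of the four directions will follow directly from $\alpha$-monotonicity. For the lower bound when $\alpha\in[1/2,1)$, I will use the pointwise inequality $\widetilde{D}_{\alpha}(\widetilde{\rho}\Vert\sigma)\geq\widetilde{D}_{1/2}(\widetilde{\rho}\Vert\sigma)=D_{\min,F}(\widetilde{\rho}\Vert\sigma)$, take a common supremum over $\widetilde{\rho}$ to obtain $\widetilde{D}_{\alpha}^{\varepsilon}\geq D_{\min,F}^{\varepsilon}$, and finish with \cref{thm:second-order-Dmin}. For the upper bound when $\alpha>1$, I will use the pointwise inequality $\widetilde{D}_{\alpha}\leq D_{\max}$ (which follows by $\alpha$-monotonicity as $\alpha\to\infty$), take a common infimum to obtain $\widetilde{D}_{\alpha}^{\varepsilon}\leq D_{\max}^{\varepsilon}$, and then chain this with $D_{\max}^{\varepsilon}\leq D_{\min}^{1-\varepsilon}+\log_{2}(1/(1-\varepsilon))$ from \eqref{eq:smooth-max-to-hypothesis} and with the expansion in \eqref{eq:hypothesis-testing-second-order}, invoking $\Phi^{-1}(1-\varepsilon)=-\Phi^{-1}(\varepsilon)$ to arrive at the claimed $D(\rho\Vert\sigma)-\sqrt{V(\rho\Vert\sigma)/n}\,\Phi^{-1}(\varepsilon)$ upper bound.

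For the remaining two directions I will invoke \cref{thm:connection-to-smooth-renyi} to cross the $\alpha=1$ barrier, tuning its smoothing parameters so that its right-hand side reduces to a quantity whose second-order expansion is already available. For the lower bound when $\alpha>1$, I will apply the theorem with its ``$\beta$'' set to our current $\alpha$ (so its ``$\alpha$'' equals $\alpha/(2\alpha-1)\in(1/2,1)$), together with $\varepsilon_{1}=\varepsilon$ and $\varepsilon_{2}=1-\varepsilon-1/\sqrt{n}$; combining this with $\widetilde{D}_{\alpha/(2\alpha-1)}^{\varepsilon_{2}}\geq D_{\min,F}^{\varepsilon_{2}}$ (from the previous paragraph) and \cref{thm:second-order-Dmin}, and then using $\Phi^{-1}(1-x)=-\Phi^{-1}(x)$ and a first-order Taylor expansion of $\Phi^{-1}$ around $\varepsilon$ to absorb the $1/\sqrt{n}$ shift, will yield the claimed lower bound. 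Symmetrically, for the upper bound when $\alpha\in(1/2,1)$, I will apply the theorem with its ``$\beta$'' equal to $\alpha/(2\alpha-1)>1$, $\varepsilon_{1}=1-\varepsilon-1/\sqrt{n}$, and $\varepsilon_{2}=\varepsilon$, then upper-bound $\widetilde{D}_{\alpha/(2\alpha-1)}^{\varepsilon_{1}}$ by $D_{\max}^{\varepsilon_{1}}$ and finish via \eqref{eq:smooth-max-to-hypothesis} and \eqref{eq:hypothesis-testing-second-order}. The endpoint $\alpha=1/2$ is immediate from \cref{thm:second-order-Dmin} via \eqref{eq:alpha-1/2-D_min,F}.

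The main obstacle will be controlling the additive correction $\frac{\beta}{\beta-1}\log_{2}(1/(1-\varepsilon^{\prime}))$ appearing in \cref{thm:connection-to-smooth-renyi}, because my choice $\varepsilon_{1}+\varepsilon_{2}=1-1/\sqrt{n}$ forces $\varepsilon^{\prime}\to 1$ as $n\to\infty$. The plan is a Taylor expansion at $\delta=0$ of $\sqrt{\varepsilon^{\prime}}=\sqrt{\varepsilon(\varepsilon+\delta)}+\sqrt{(1-\varepsilon)(1-\varepsilon-\delta)}$ with $\delta=1/\sqrt{n}$, which should give $1-\varepsilon^{\prime}=\Theta(\delta^{2})=\Theta(1/n)$ and hence $\log_{2}(1/(1-\varepsilon^{\prime}))=O(\log n)$; after dividing by $n$, this correction is $O(\log n/n)$ and is absorbed into the remainder, exactly as the analogous quantity was handled in \eqref{eq: f function taylor} during the proof of \cref{thm:second-order-Dmin}.
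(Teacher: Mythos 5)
Your proposal is correct and follows essentially the same route as the paper's proof in Appendix~\ref{APP:Second-order-smooth-renyi}: the two ``easy'' directions via $\alpha$-monotonicity combined with \eqref{eq:smooth-max-to-hypothesis}, \eqref{eq:hypothesis-testing-second-order}, and \cref{thm:second-order-Dmin}, and the two ``hard'' directions via \cref{thm:connection-to-smooth-renyi} with the shifted parameter $1-\varepsilon-1/\sqrt{n}$, absorbing the $\frac{\beta}{\beta-1}\log_2(1/(1-\varepsilon'))$ correction as $O(\log n)$ exactly as in \eqref{eq: f function taylor}. The only cosmetic difference is that for the upper bound with $\alpha\in(1/2,1)$ the paper cites the already-established $\alpha>1$ expansion applied to $\widetilde{D}_{\beta}^{1-\varepsilon-\delta}$ rather than re-unfolding it through $D_{\max}^{\varepsilon_1}$.
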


\begin{IEEEproof}
This follows from techniques similar to those in the proof of \cref{thm:second-order-Dmin}, as well as from the $\alpha$-monotonicity of the sandwiched R\'enyi relative entropy and Theorem~\ref{thm:connection-to-smooth-renyi}.
See Appendix~\ref{APP:Second-order-smooth-renyi} for a proof.
\end{IEEEproof}

\begin{remark}[Equivalence of smooth relative entropies up to the second-order]
    \cref{cor:second-order-smooth-sandwiched-Renyi} indicates that, in the asymptotic i.i.d.~setting and up to the second order, there is no difference between all of the smooth sandwiched R\'enyi relative entropies for all $\alpha >1$. That is, they are all equivalent to the smooth max-relative entropy. Similarly, in the asymptotic i.i.d.~setting and up to the second order, there is no difference between all of them for $\alpha \in [1/2,1)$: they are all equivalent to the smooth min-relative entropy. This is a unifying result that should find use in future works on quantum resource theories.
\end{remark}

\medskip
A sequence $\{a_n\}_n$ is called a moderate sequence if $a_n \to 0$ and $\sqrt{n} a_n \to \infty$ when $n \to \infty$.

\begin{proposition}[Moderate deviations]
\label{prop:mod-dev-smooth-F-min}
For a moderate sequence $\{a_n\}_n$ and $\varepsilon_{n}= e^{-n a^2_n}$,
the smooth $F$-min-relative entropy scales as follows:
    \begin{equation}
        \frac{1}{n} D^{\varepsilon_n}_{\min,F}\!\left( \rho^{\otimes n} \middle \Vert \sigma^{\otimes n} \right) = D(\rho \Vert \sigma) -\sqrt{2 V(\rho \Vert \sigma)} \ a_n + o(a_n).
    \end{equation}
\end{proposition}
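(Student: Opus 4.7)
The plan is to deduce this moderate deviation expansion by sandwiching $D^{\varepsilon_n}_{\min,F}(\rho^{\otimes n}\Vert\sigma^{\otimes n})$ between two incarnations of the smooth (hypothesis-testing) min-relative entropy and invoking the analogous moderate deviation result known from prior work for the latter quantity: for $\varepsilon_n = e^{-n a_n^2}$ with $\{a_n\}_n$ a moderate sequence,
\begin{equation}
\frac{1}{n} D^{\varepsilon_n}_{\min}(\rho^{\otimes n}\Vert\sigma^{\otimes n}) = D(\rho\Vert\sigma) - \sqrt{2V(\rho\Vert\sigma)}\,a_n + o(a_n). \label{eq:mod-dev-hypo-plan}
\end{equation}
This is the moderate-deviations counterpart of the second-order expansion \eqref{eq:hypothesis-testing-second-order} already used in the proof of \cref{thm:second-order-Dmin}.

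For the lower bound, I would rearrange \cref{prop:connect-hypo} at the tensor-power level to obtain
\begin{equation}
D^{\varepsilon_n}_{\min,F}(\rho^{\otimes n}\Vert\sigma^{\otimes n}) \geq D^{\varepsilon_n}_{\min}(\rho^{\otimes n}\Vert\sigma^{\otimes n}) - \log_2\!\left(\frac{1}{1-\varepsilon_n}\right).
\end{equation}
Since $\log_2\frac{1}{1-\varepsilon_n} = O(\varepsilon_n) = O(e^{-n a_n^2})$, dividing by $n$ yields a correction that decays faster than any polynomial in $1/n$ (using $n a_n\to\infty$), and in particular is $o(a_n)$. Substituting \eqref{eq:mod-dev-hypo-plan} then delivers the matching lower bound.

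For the upper bound, I would chain \cref{prop:connect-smooth-max} with \eqref{eq:smooth-max-to-hypothesis}, exactly as in the proof of \cref{thm:second-order-Dmin}, to arrive at
\begin{multline}
D^{\varepsilon_n}_{\min,F}(\rho^{\otimes n}\Vert\sigma^{\otimes n}) \leq D^{\varepsilon_n+\delta_n}_{\min}(\rho^{\otimes n}\Vert\sigma^{\otimes n}) \\
+ \log_2\!\left(\frac{1}{\varepsilon_n+\delta_n}\right) + \log_2\!\left(\frac{1}{1-f(\varepsilon_n,\delta_n)}\right),
\end{multline}
where $f(\cdot,\cdot)$ is defined in \eqref{eq:f(eps,delta)}. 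Unlike the second-order proof which takes $\delta_n = 1/\sqrt{n}$, here I would choose $\delta_n$ of the same order as $\varepsilon_n$, say $\delta_n = \varepsilon_n$. Then $\varepsilon_n + \delta_n = 2 e^{-n a_n^2} = e^{-n\tilde{a}_n^2}$ with $\tilde{a}_n^2 = a_n^2 - (\log 2)/n$, and because $na_n^2 \to \infty$ one has $\tilde{a}_n = a_n + o(a_n)$, so \eqref{eq:mod-dev-hypo-plan} applied to $D^{\varepsilon_n+\delta_n}_{\min}$ still yields the leading term $-\sqrt{2V(\rho\Vert\sigma)}\,a_n$. A direct expansion gives $1-f(\varepsilon_n,\varepsilon_n) = (3-2\sqrt{2})\varepsilon_n + O(\varepsilon_n^2)$, so both $\frac{1}{n}\log_2\frac{1}{\varepsilon_n+\delta_n}$ and $\frac{1}{n}\log_2\frac{1}{1-f(\varepsilon_n,\delta_n)}$ are $O(a_n^2) = o(a_n)$, closing the upper bound.

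The principal obstacle is the balanced choice of $\delta_n$: it must be small enough that $\varepsilon_n+\delta_n$ still corresponds to a moderate sequence asymptotically equal to $a_n$, yet large enough that $f(\varepsilon_n,\delta_n)$ is bounded away from $1$ by at least order $\varepsilon_n$ so that $\log_2\frac{1}{1-f}$ stays $O(n a_n^2)$. The choice $\delta_n = \varepsilon_n$ works cleanly precisely because $a_n^2 = o(a_n)$ as $a_n \to 0$, which is the distinguishing feature of the moderate regime relative to the second-order regime (where the corresponding corrections were absorbed into $O(\log n/n)$).
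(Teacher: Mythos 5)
Your proposal is correct and follows essentially the same route as the paper's own proof in Appendix~F: the lower bound via \cref{prop:connect-hypo} plus the moderate-deviation expansion of $D^{\varepsilon_n}_{\min}$ from \cite{chubb2017moderate}, and the upper bound via the chain through \cref{prop:connect-smooth-max} and \eqref{eq:smooth-max-to-hypothesis} with the same choice $\delta_n=\varepsilon_n$ and the same observation that $2e^{-na_n^2}=e^{-nb_n^2}$ with $b_n=a_n+o(a_n)$. The explicit expansion $1-f(\varepsilon_n,\varepsilon_n)=(3-2\sqrt{2})\varepsilon_n+O(\varepsilon_n^2)$ is a correct (and slightly more detailed) justification of the paper's claim that the corresponding correction is $o(a_n)$.
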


\begin{IEEEproof}
This follows by utilizing the connections established in the last section along with the moderate deviation analysis for the smooth min-relative entropy in \cite{chubb2017moderate}. For completeness, we provide a proof in Appendix~\ref{App: moderate deviation}.
\end{IEEEproof}

\begin{remark}[Moderate deviations of smooth sandwiched R\'enyi relative entropy]
 Similar to \cref{prop:mod-dev-smooth-F-min}, we arrive at the following scaling for smooth sandwiched R\'enyi relative entropy: for a moderate sequence $\{a_n\}_n$ and $\varepsilon_{n}= e^{-n a^2_n}$,
  \begin{enumerate}
      \item $\alpha >1 \colon$ 
      \begin{equation}
        \frac{1}{n} \widetilde{D}^{\varepsilon_n}_{\alpha}\!\left( \rho^{\otimes n} \middle \Vert \sigma^{\otimes n} \right) = D(\rho \Vert \sigma) +\sqrt{2 V(\rho \Vert \sigma)} \ a_n + o(a_n).
    \end{equation}
    \item $\alpha \in [1/2,1) \colon$
    \begin{equation}
        \frac{1}{n} \widetilde{D}^{\varepsilon_n}_{\alpha}\!\left( \rho^{\otimes n} \middle \Vert \sigma^{\otimes n} \right) = D(\rho \Vert \sigma) - \sqrt{2 V(\rho \Vert \sigma)} \ a_n + o(a_n).
    \end{equation}
  \end{enumerate}  
 The proof follows by employing the techniques in the derivation of second-order asymptotics (in Appendix~\ref{APP:Second-order-smooth-renyi}) and proceeding as in Appendix~\ref{App: moderate deviation}.
\end{remark}

\section{Application to Randomness Distillation}

\label{Sec: Randomness Distillation}

\subsection{1W-LOCC-Assisted Randomness Distillation}

The distillable randomness of a bipartite state is a measure of classical correlations contained in that state. This quantity was first proposed and characterized in \cite{devetak2004distilling}, and later studied in various guises in \cite{OHHH02,D05,DHW05RI,KD07,MPZ18,MLA19,CNB22,lami2022upper}. 
It has applications to determining the fundamental limitations on experiments in which the goal is to distill randomness from bipartite states \cite{zhang2020expRand,shalm2021random}.
For completeness, we review the definition of a randomness distillation protocol  assisted by one-way local operations and classical communications (1W-LOCC)  (shown in \cref{fig:one way LOCC}). We follow the recent presentation of \cite{lami2022upper}. In Section~\ref{sec:RD-gen-LOCC}, we recall  the definition of randomness distillation assisted by general LOCC, as also considered in \cite{lami2022upper}. 

\begin{figure}
    \centering
    \includegraphics[width=\linewidth]{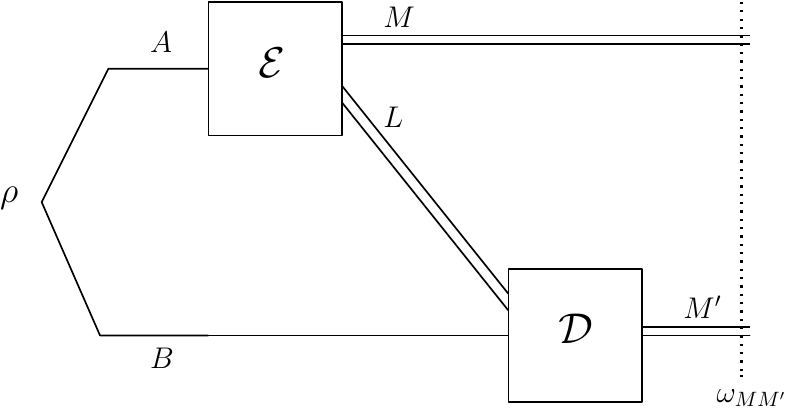}
    \caption{One way local operations and classical communication protocol from $A$ to $B$. First, the channel $\cE_{A \to ML}$ with classical outputs $L$ and $M$ is applied by Alice. Then, system $L$ is communicated to Bob via a noiseless classical channel. Bob applies the decoding channel $\cD_{LB \to M'}$ to get the classical output $M'$. At the end of the protocol, the output state shared by Alice and Bob is $\omega_{MM'}$, and it should be close to a maximally classically correlated state. In the above figure, classical systems are denoted by double lines.}
    \label{fig:one way LOCC}
\end{figure}

Let $\rho_{AB}$ be a bipartite state. The protocol  begins with Alice applying  a quantum channel $\cE_{A \to ML}$, with the output systems $L$ and $M$  classical. Then, system $L$ is communicated to Bob over a noiseless classical channel. Having system $L$, Bob acts with the decoding channel $\cD_{LB \to M'}$ on his systems. At the end of the protocol,  the final state is
\begin{equation}
    \omega_{MM'} \coloneqq (\cD_{LB \to M'} \circ \cE_{A \to ML})(\rho_{AB}).
\end{equation}
A $(d,\varepsilon)$ randomness distillation protocol   satisfies 
\begin{equation}
    F\!\left( \overline{\Phi}^d_{MM'}, \omega_{MM'} \right) \geq 1- \varepsilon, 
\end{equation}
where $ \overline{\Phi}^d_{MM'}$ is the maximally classically correlated state of rank $d$:
\begin{equation}
     \overline{\Phi}^d_{MM'} \coloneqq \frac{1}{d} \sum_{i=0}^{d-1} |i \rangle\!\langle i|_M \otimes  |i \rangle\!\langle i|_{M' }.
     \label{eq:max-class-corr-def}
\end{equation}

The one-shot distillable randomness of $\rho_{AB}$ is defined as 
\begin{multline}
    R^\varepsilon(\rho_{AB})  \coloneqq  \\ 
   \sup_{\substack{\cE_{A \to ML} \\ \cD_{LB \to M'}} } \left\{  \log_{2} d - \log_{2} d_L:  F\!\left(  \overline{\Phi}^d_{MM'},
   \omega_{MM'} \right) \geq 1- \varepsilon \right\}.
\end{multline}
Intuitively, $R^\varepsilon(\rho_{AB}) $ is the largest net number of  maximally classically correlated random bits that can be generated from the state $\rho_{AB}$. Here, we need to subtract the number of bits of classical communication used in the protocol, in order to rule out the possibility of distilling an infinite number of shared random bits.

\subsection{LOCC-Assisted Randomness Distillation} 

\label{sec:RD-gen-LOCC}

\begin{figure}
    \centering
    \includegraphics[width=\linewidth]{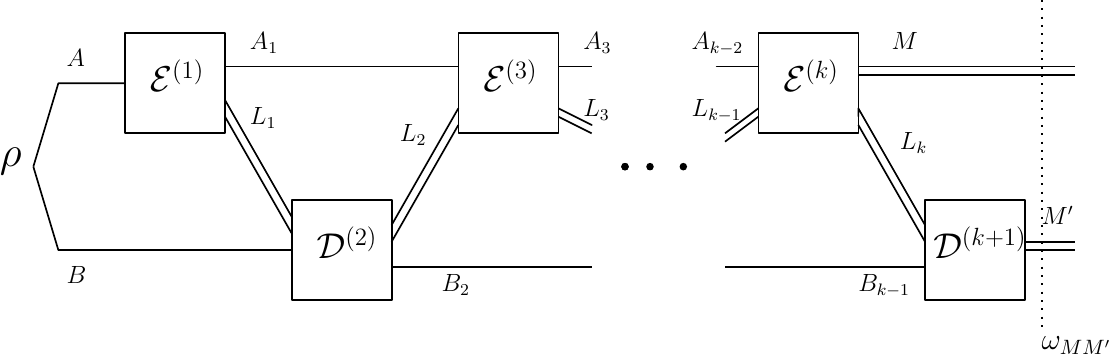}
    \caption{General local operations and classical communication protocol from $A$ to $B$: First the channel $\cE^{(1)}_{A \to A_1L_1}$ with classical output $L_1$ is applied by Alice. Then, system $L_1$ is communicated to Bob via a noiseless classical channel. Next Bob applies the channel $\cD^{(2)}_{L_1 B\to L_2 B_2}$, and the classical output $L_2$ is communicated to Alice. This procedure is continued for $k-1$ rounds. During the final round Alice performs $\cE^{(k)}_{A_{k-2} L_{k-1}\to M L_k}$ where both the output systems are classical and communicates $L_k$ to Bob. Bob completes the protocol by applying the channel $\cD^{(k+1)}_{L_k B_{k-1} \to M'}$. At the end of the protocol, the output state shared by Alice and Bob is $\omega_{MM'}$, and it should be close to a maximally classically correlated state.}
    \label{fig:general-LOCC}
\end{figure}

Here we review general LOCC protocols for randomness distillation \cite{lami2022upper} (shown in \cref{fig:general-LOCC}). A general LOCC-assisted randomness distillation protocol starts with Alice performing the channel $\cE^{(1)}_{A \to A_1 L_1}$, with system $L_1$ being classical and communicated to Bob. Then, Bob performs the channel $\cD^{(2)}_{L_1 B \to L_2 B_2}$, with system $L_2$ being classical and communicated to Alice. The above procedure continues for $k$ rounds. We denote the rest of the channels for Alice and Bob as 
$\{\cE^{(i)}_{ A_{i-2} L_{i-1} \to A_{i} L_{i} } \}_i$ for $i \in \{3,5, \ldots\}$ and $\{\cD^{(i)}_{L_{i-1} B_{i-2} \to L_{i} B_{i}} \}_i$ for $i \in \{4, \ldots\}$, respectively. Without loss of generality, we consider the last two channels of the protocol to be $\cE^{(k)}_{A_{k-2} L_{k-1}\to M L_{k}} $ and 
$\cD^{(k+1)}_{L_{k} B_{k-1} \to M'} $. The state shared by Alice and Bob at the end of this protocol is
\begin{equation}
    \omega_{MM'} \coloneqq  \left ( \cD^{(k+1)} \circ \cE^{(k)} \circ \ldots \circ \cD^{(2)} \circ \cE^{(1) }             \right)\left(\rho_{AB}\right).
\end{equation}
The above-mentioned protocol has $\varepsilon$ error if it satisfies 
\begin{equation}
p_{\mathrm{err}}(\cP^k ) \coloneqq  1- F\!\left( \omega_{MM'}, \overline{\Phi}^d_{MM'}\right) \leq \varepsilon, 
\label{eq:error-constraint-1W-dist-rand}
\end{equation}
where $\cP^k$ corresponds to the protocol described above.
With that, the one-shot distillable randomness from $\rho_{AB}$ assisted by LOCC is defined as 
\begin{multline}
     R^\varepsilon_{ \leftrightarrow}(\rho_{AB})  \coloneqq  \\ 
  \sup_{k \in \NN,\cP^k} \left\{  \log_{2} d - \sum_{i=1}^k  \log_{2} d_{L_i}: p_{\mathrm{err}}(\cP^k ) \leq  \varepsilon \right\}.
\end{multline}

Since general LOCC assistance contains 1W-LOCC assistance as a special case, we obtain the following bound for every state $\rho_{AB}$ and $\varepsilon\in[0,1]$:
\begin{equation} \label{eq: inequality-1W-and-General}
    R^\varepsilon(\rho_{AB}) \leq R^\varepsilon_{ \leftrightarrow}(\rho_{AB}).
\end{equation}

\subsection{\texorpdfstring{$\Gamma$}{Gamma}-Upper Bound on One-Shot Distillable Randomness}

Let us recall the definition of the classical correlation measure $\gamma$, defined recently in \cite{lami2022upper} for a PSD bipartite operator~$\sigma_{AB}$ as
\begin{equation}
    \gamma(\sigma_{AB}) \coloneqq \inf_{K_A, L_B, V_{AB} \in \mathrm{Herm}} \left\{  \begin{array}[c]{c}
    \operatorname{Tr}[K_A \otimes L_B]: \\ \T_B(V_{AB} \pm \sigma_{AB} )\geq 0, \\ K_A \otimes L_B \pm V_{AB} \geq 0
    \end{array} \right\}.
    \label{eq:def-class-corr-gamma-meas}
\end{equation}
{Some intuition for this quantity was not discussed in \cite{lami2022upper}, and so we provide some briefly here. For a classical correlation measure, it is desirable for it to indeed measure correlations, meaning that it should be equal to zero for a product state, be greater than zero for a state that is not product, and should not increase under the action of local channels. For this purpose, mutual information and its variants are helpful, but they measure quantum correlations in addition to classical correlations \cite{GPW05}. Mutual information can be written as the minimum quantum relative entropy between the state of interest and the set of product states \cite[Exercise~11.8.2]{Wbook17}:
\begin{equation}
    I(A;B)_{\rho} = \inf_{\substack{\sigma_A \in \mathcal{D}(\mathcal{H}_A),\\ \sigma_B \in \mathcal{D}(\mathcal{H}_B)}} D(\rho_{AB} \Vert \sigma_A \otimes \sigma_B).
\end{equation}
Using the max-relative entropy in place of the quantum relative entropy, one can define the max-mutual information as \cite{CBR14}
\begin{align}
    &I_{\max}(A;B)_{\rho} \notag \\  & \coloneqq \inf_{\substack{\sigma_A \in \mathcal{D}(\mathcal{H}_A),\\ \sigma_B \in \mathcal{D}(\mathcal{H}_B)}} D_{\max}(\rho_{AB} \Vert \sigma_A \otimes \sigma_B) \\
    & = \log_2 \inf_{K_A, L_B\geq 0} \{\operatorname{Tr}[K_A \otimes L_B] : \rho_{AB} \leq K_A \otimes L_B\}.
    \label{eq:max-MI-rewrite}
\end{align}
As mentioned above, this quantity measures quantum correlations in addition to classical correlations; for example, it is equal to $2\log_2 d$ for a maximally entangled state of Schmidt rank~$d$, as we prove in Appendix~\ref{app:max-MI-props} (we also prove the equality above there). Given that the classical correlations contained in or the distillable randomness extractable from such a state is equal to $\log_2 d$, the mutual information is twice as large as it should be for this case. To address this problem, recall that the logarithmic negativity of a bipartite state is defined as follows \cite{ZHSL98,Vidal2002}:
\begin{equation}
    \log_2 \left\Vert T_B (\rho_{AB})\right\Vert_1 =
\log_2 \inf_{\substack{M_{AB},\\
N_{AB}\geq 0}} \left\{  \begin{array}[c]{c}\operatorname{Tr}[M_{AB}+N_{AB}]:\\
M_{AB}\geq T_B (\rho_{AB}), \\
N_{AB}\geq - T_B (\rho_{AB})\end{array}\right\},
\end{equation}
where the second expression follows from \cite[Proposition~3.53]{khatri2020principles}.
It is equal to $\log_2 d $ for a maximally entangled state and equal to zero for a maximally classically correlated state of rank $d$. This latter property is also undesirable for a classical correlation measure, which should be equal to $\log_2 d$ for such a state. The basic idea behind the measure in \eqref{eq:def-class-corr-gamma-meas} is to combine features of the max-mutual information and logarithmic negativity into a single measure (the features being the optimization over product positive semi-definite operators and the use of the partial transpose). The resulting measure is equal to $\log_2 d $ for both a maximally entangled state of Schmidt rank~$d$ and a maximally classically correlated state of rank~$d$, and it satisfies a number of desirable properties expected of a classical correlation measure, as shown in \cite{lami2022upper}.}

We can then use the definition in \eqref{eq:def-class-corr-gamma-meas} and the general construction in \cite[Eq.~(4)]{lami2022upper} to define the following classical correlation measure for a bipartite state $\rho_{AB}$, relevant for us here:
\begin{equation}
\Gamma^\varepsilon_{\min,F}(A;B)_\rho \coloneqq \inf_{\substack{\sigma_{AB} \geq 0: \\ \gamma(\sigma_{AB}) \leq 1} } D^\varepsilon_{\min,F}(\rho_{AB} \Vert \sigma_{AB}).
\label{eq:def-Gamma-min-F-smooth}
\end{equation}

\begin{theorem} \label{thm:distillable-randomness-with-gamma-set}
    Fix $\varepsilon\in(0,1)$. The following bound holds for the  one-shot LOCC-assisted distillable randomness of a bipartite state $\rho_{AB}$: 
    \begin{equation}
         R_{\leftrightarrow}^\varepsilon(\rho_{AB})  \leq   \Gamma^\varepsilon_{\min,F}(A;B)_\rho .
         \label{eq:gen-LOCC-up-bnd-Gamma}
    \end{equation}
\end{theorem}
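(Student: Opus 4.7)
The plan is to follow the general strategy of \cite{lami2022upper}, but with the smooth $F$-min-relative entropy playing the role of the divergence, so that the data-processing inequality established in \cref{thm:data-processing} can be invoked. Fix any $(d,\varepsilon)$ LOCC-assisted randomness distillation protocol $\cP^k$ with output $\omega_{MM'} = \cP^k(\rho_{AB})$, net rate $\log_{2} d - \sum_{i=1}^k \log_{2} d_{L_i}$, and satisfying $F(\overline{\Phi}^d_{MM'}, \omega_{MM'}) \geq 1-\varepsilon$. Let $\sigma_{AB} \geq 0$ be any operator satisfying $\gamma(\sigma_{AB}) \leq 1$.

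First, I would apply the data-processing inequality for $D^\varepsilon_{\min,F}$ (\cref{thm:data-processing}) under the LOCC channel $\cP^k$ to obtain
\begin{equation}
D^\varepsilon_{\min,F}(\rho_{AB} \Vert \sigma_{AB}) \geq D^\varepsilon_{\min,F}(\omega_{MM'} \Vert \cP^k(\sigma_{AB})).
\end{equation}
Next, since $\overline{\Phi}^d_{MM'} \in \mathcal{D} \subseteq \mathcal{D}_{\leq}$ and $F(\overline{\Phi}^d_{MM'}, \omega_{MM'}) \geq 1-\varepsilon$, the state $\overline{\Phi}^d_{MM'}$ is a feasible candidate in the infimum defining the right-hand side. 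Thus
\begin{equation}
D^\varepsilon_{\min,F}(\omega_{MM'} \Vert \cP^k(\sigma_{AB})) \geq -\log_{2} F(\overline{\Phi}^d_{MM'}, \cP^k(\sigma_{AB})).
\end{equation}

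The remaining step, and the conceptual crux, is to bound
\begin{equation}
F(\overline{\Phi}^d_{MM'}, \cP^k(\sigma_{AB})) \leq \frac{1}{d} \prod_{i=1}^k d_{L_i},
\end{equation}
whenever $\gamma(\sigma_{AB}) \leq 1$. This is the place where the specific structure of the classical-correlation measure $\gamma$, together with its behavior under LOCC, enters. The plan is to invoke directly the relevant properties from \cite{lami2022upper}: namely, that $\gamma$ is sub-multiplicative under the classical communication steps in an LOCC protocol (so that $\gamma(\cP^k(\sigma_{AB})) \leq \prod_{i=1}^k d_{L_i} \cdot \gamma(\sigma_{AB})$), and that for any bipartite PSD operator $\tau_{MM'}$ with classical $M, M'$, one has the fidelity bound $F(\overline{\Phi}^d_{MM'}, \tau_{MM'}) \leq \gamma(\tau_{MM'})/d$. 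Chaining these two bounds and using $\gamma(\sigma_{AB}) \leq 1$ yields the desired estimate.

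Putting everything together gives $D^\varepsilon_{\min,F}(\rho_{AB} \Vert \sigma_{AB}) \geq \log_{2} d - \sum_{i=1}^k \log_{2} d_{L_i}$. Taking the infimum over all $\sigma_{AB} \geq 0$ with $\gamma(\sigma_{AB}) \leq 1$ produces $\Gamma^\varepsilon_{\min,F}(A;B)_\rho$ on the left, and taking the supremum over all admissible LOCC protocols $\cP^k$ produces $R^\varepsilon_{\leftrightarrow}(\rho_{AB})$ on the right, which is the claim in \eqref{eq:gen-LOCC-up-bnd-Gamma}. The main obstacle is the second step above, controlling how $\gamma$ and fidelity with $\overline{\Phi}^d$ compose under a multi-round LOCC protocol with possibly large intermediate classical registers; however, this is exactly what the framework of \cite{lami2022upper} is designed to handle, and the arguments there should transfer with only notational changes once the divergence is replaced by $D^\varepsilon_{\min,F}$.
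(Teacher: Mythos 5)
Your proposal is correct and takes essentially the same route as the paper: both arguments rest on the data-processing inequality for $D^\varepsilon_{\min,F}$ (\cref{thm:data-processing}), the feasibility of $\overline{\Phi}^d_{MM'}$ as a smoothing candidate, the fidelity bound $F(\overline{\Phi}^d,\tau)\leq 1/d$ from Lemma~9 of \cite{lami2022upper}, and the monotonicity/factor-$d_{L_i}$ scaling of $\gamma$ under local operations and reassignment of classical registers (Propositions~2 and 7 of \cite{lami2022upper}). The only difference is bookkeeping: you fix a single $\sigma_{AB}$ and push it forward through $\cP^k$ while tracking $\gamma$ across the shifting bipartition, whereas the paper pulls the bound backward round by round at the level of the derived quantity $\Gamma^\varepsilon_{\min,F}$ — the intermediate-cut accounting you defer to \cite{lami2022upper} is exactly what that iteration makes explicit.
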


\begin{IEEEproof}
Let us begin by proving the bound
\begin{equation}
    R^\varepsilon(\rho_{AB})  \leq   \Gamma^\varepsilon_{\min,F}(A;B)_\rho ,
    \label{eq:1wlocc-up-bnd}
\end{equation}
and then we discuss afterward how to generalize it to get \eqref{eq:gen-LOCC-up-bnd-Gamma}. 
    The proof of \eqref{eq:1wlocc-up-bnd} follows similarly to the proof of  \cite[Eq.~(41)]{lami2022upper}, by some properties satisfied by $D^\varepsilon_{\min,F}(\cdot\Vert \cdot)$, which include data processing (see \cref{thm:data-processing}) and scaling (see Property~1 of \cref{thm:other-properties}).  These properties of smooth
    $F$-min-relative entropy result in $\Gamma^\varepsilon_{\min,F}(A;B)_\rho$ satisfying the properties presented in \cite{lami2022upper}; out of those we use Proposition~1 on symmetry, Proposition~2 on data processing under local channels, and Proposition~7 on scaling.

    By Lemma~9 of \cite{lami2022upper}, we have 
    \begin{equation} \label{eq: gamma 1/d bound}
        \sup_{\substack{\tau_{MM'} \geq 0: \\  \gamma(\tau_{MM'}) \leq 1}}  F\!\left(  \overline{\Phi}^d_{MM'}, \tau_{MM'}\right)  \leq \frac{1}{d}.
    \end{equation}
    Then, considering the constraint $F\!\left(  \overline{\Phi}^d_{MM'}, \omega_{MM'}\right) \geq 1- \varepsilon$ from \eqref{eq:error-constraint-1W-dist-rand}, we arrive at 
    \begin{align}
        \log_{2} d & \leq -\log_{2}  F\!\left(  \overline{\Phi}^d_{MM'},
        \tau_{MM'} \right) \\
        & \leq    D^\varepsilon_{\min,F}(\omega_{MM'} \Vert \tau_{MM'} ).
    \end{align}
    The above inequality holds for all $\tau_{MM'} \geq 0$ satisfying $\gamma(\tau_{MM'}) \leq 1$. Thus we have 
     \begin{align}
        \log_{2} d & \leq \inf_{\substack{\tau_{MM'} \geq 0: \\ \gamma(\tau_{MM'}) \leq 1}} D^\varepsilon_{\min,F}(\omega_{MM'} \Vert \tau_{MM'} ) \\
        & =  \Gamma^\varepsilon_{\min,F}(M;M')_\omega.
    \end{align}
By the data-processing inequality for the smooth-min relative entropy (\cref{thm:data-processing}) applied to the channel $\cD_{BL \to M'}$, as well as Proposition~2 of \cite{lami2022upper}, we get 
\begin{equation}
    \Gamma^\varepsilon_{\min,F}(M;M')_\omega \leq  \Gamma^\varepsilon_{\min,F}(M;BL)_{\cE(\rho)}.
\end{equation}
Then applying Proposition 7 of \cite{lami2022upper} with the assistance of the scaling property of $D^\varepsilon_{\min,F}(\cdot \Vert \cdot)$ (\cref{thm:other-properties}), we find that 
\begin{equation}
    \Gamma^\varepsilon_{\min,F}(M;BL)_{\cE(\rho)} \leq \log_{2} d_L +  \Gamma^\varepsilon_{\min,F}(LM;B)_{\cE(\rho)}.
\end{equation}
Next, again applying data processing under the local channel $\cE_{A \to LM}$, we conclude that 
\begin{equation}
    \Gamma^\varepsilon_{\min,F}(LM;B)_{\cE(\rho)} \leq  \Gamma^\varepsilon_{\min,F}(A;B)_{\rho}. 
\end{equation}
Putting everything together, we arrive at
\begin{equation}
    \log_{2} d - \log_{2} d_L \leq  \Gamma^\varepsilon_{\min,F}(A;B)_{\rho}.
\end{equation}
Since the above inequality holds for an arbitrary $(d,\varepsilon)$ randomness distillation protocol, we conclude the desired bound in \eqref{eq:1wlocc-up-bnd}.

The proof of \eqref{eq:gen-LOCC-up-bnd-Gamma} then follows by iterating the same reasoning as in the proof above, while going backward through the protocol $\cP^k$ defined in \cref{sec:RD-gen-LOCC}. See the end of the proof of \cite[Theorem~11]{lami2022upper} for similar reasoning.
\end{IEEEproof}

\begin{remark}[Comparison with other existing bounds]
{Note that the bound provided in the previous work (Theorem~11 of \cite{lami2022upper}) is a consequence of \cref{thm:distillable-randomness-with-gamma-set}. In particular, this previous work established the following bound:  For $\alpha > 1$
\begin{equation}
R^\varepsilon_{\leftrightarrow}(\rho_{AB})  \leq \inf_{\substack{ \sigma_{AB} \geq 0 ,\\ \gamma(\sigma_{AB}) \leq 1}} \widetilde{D}_\alpha( \rho_{AB} \Vert \sigma_{AB}) + \frac{\alpha}{\alpha-1} \log_2 \!\left( \frac{1}{1-\varepsilon} \right).
\end{equation}
Together with the relationship between the sandwiched R\'enyi relative entropy and the smooth $F$-min relative entropy derived in \cref{prop: Dmin and smooth renyi}, it can be seen that the bound derived in \cref{thm:distillable-randomness-with-gamma-set} implies the previous bound. 
}  
\end{remark}

\subsection{Smooth \texorpdfstring{$F$}{F}-Min- and Min-Mutual-Information Bounds on One-Shot Distillable Randomness} 

In this section, we define the smooth $F$-min-mutual information of a state and establish several of its properties, thus justifying it as a correlation measure for bipartite states. We also discuss how it leads to an alternative upper bound on the one-shot distillable randomness.

Let us define the following mutual-information-like correlation measure, which we call the smooth $F$-min-mutual information:
\begin{equation}
    I^\varepsilon_{\min,F}(A;B)_\rho \coloneqq   \inf_{\substack{\sigma_{A} , \sigma_{B} \geq 0: \\ \Tr[\sigma_A] \leq 1,  \Tr[\sigma_B] \leq 1} } \hspace{-4mm} D^\varepsilon_{\min,F}(\rho_{AB} \Vert \sigma_{A} \otimes \sigma_B).
\end{equation}
Since every subnormalized product state $\sigma_A \otimes \sigma_B$ satisfies $\gamma(\sigma_A \otimes \sigma_B) \leq 1$ (as a consequence of Propositions~4 and 6 of \cite{lami2022upper}), we conclude that the following bound holds for every state $\rho_{AB}$:
\begin{equation}
    \Gamma^\varepsilon_{\min,F}(A;B)_\rho \leq I^\varepsilon_{\min,F}(A;B)_\rho .
\end{equation}
Next, we prove various properties of $I^\varepsilon_{\min,F}(A;B)_\rho$.
\begin{lemma}\label{lem: div with prod}
    Let $\rho_{AB}$ be a state. Then $I^\varepsilon_{\min,F}(A;B)_\rho$ satisfies the following properties. 
    \begin{enumerate}
        \item Symmetry: 
        \begin{equation}
            I^\varepsilon_{\min,F}(A;B)_\rho =I^\varepsilon_{\min,F}(B;A)_\rho
        \end{equation}
        \item Data processing under local channels: 
        Let $\cN_{A \to A'}$ and $\cM_{B \to B'}$ be quantum channels, then 
        \begin{equation}
            I^\varepsilon_{\min,F}(A;B)_\rho \geq  I^\varepsilon_{\min,F}(A';B')_\omega,
        \end{equation}
        where $\omega_{AA'} \coloneqq (\cN_{A \to A'} \otimes \cM_{B \to B'})(\rho_{AB})$.
        \item Classical communication bound: Let $\rho_{XAB}$ be a tripartite state: 
        \begin{equation}
            \rho_{XAB} \coloneqq \sum_{x} p(x) |x \rangle\!\langle x|_X \otimes \rho^x_{AB},
        \end{equation}
        where $\{p(x) \}_x$ is a probability distribution and $\{ \rho^x_{AB}\}_x$ is a set of states. Then, we have 
        \begin{equation}
             I^\varepsilon_{\min,F}(AX;B)_\rho \leq \log_2 d_X + I^\varepsilon_{\min,F}(A;BX)_\rho.
        \end{equation}
    \end{enumerate}
\end{lemma}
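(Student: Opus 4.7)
The plan is to handle the three properties in increasing order of difficulty, in each case reducing the claim to properties of $D^\varepsilon_{\min,F}$ that have already been established. Property~1 should follow almost immediately from \cref{lem: unitary invariance}: applying the SWAP unitary $U_{AB\to BA}$ gives $D^\varepsilon_{\min,F}(\rho_{AB}\Vert \sigma_A\otimes \sigma_B) = D^\varepsilon_{\min,F}(\rho_{BA}\Vert \sigma_B\otimes \sigma_A)$, and the feasible set of subnormalized product operators is invariant under $A\leftrightarrow B$. For Property~2, given any subnormalized $(\sigma_A,\sigma_B)$, trace preservation ensures $(\cN(\sigma_A),\cM(\sigma_B))$ is also subnormalized and hence feasible for $I^\varepsilon_{\min,F}(A';B')_\omega$; applying \cref{thm:data-processing} with the channel $\cN\otimes\cM$ to $\rho_{AB}$ and $\sigma_A\otimes\sigma_B$, and then infimizing over $(\sigma_A,\sigma_B)$ on the left-hand side, yields the desired inequality.

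Property~3 is the main obstacle. The plan is, starting from any subnormalized $(\sigma_A,\sigma_{BX})$ feasible for the right-hand side, to construct a subnormalized $(\sigma_{AX},\sigma_B)$ feasible for the left-hand side whose divergence against $\rho_{XAB}$ exceeds that of the original pair by at most $\log_2 d_X$. The first step is to reduce to the case where $\sigma_{BX}$ is classical on $X$: applying the pinching channel $\mathcal{P}_X$ (which acts locally on Bob) to both arguments and using $\mathcal{P}_X(\rho_{XAB})=\rho_{XAB}$ together with \cref{thm:data-processing} shows that the infimum can be restricted without loss of generality to $\sigma_{BX}=\sum_x |x\rangle\!\langle x|_X\otimes \omega^x_B$. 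For such a classical-on-$X$ operator, a direct block-diagonal check establishes the operator inequality $\sigma_{BX}\leq I_X\otimes \sigma_B$, where $\sigma_B\coloneqq \Tr_X[\sigma_{BX}]$ is subnormalized. Setting $\sigma_{AX}\coloneqq \sigma_A\otimes I_X/d_X$ (also subnormalized, since $\Tr[\sigma_{AX}]=\Tr[\sigma_A]\leq 1$), this yields the tensor-level inequality $\sigma_A\otimes \sigma_{BX}\leq d_X\cdot \sigma_{AX}\otimes \sigma_B$. Combining Property~6 of \cref{thm:other-properties} (monotonicity in the second argument) with Property~1 (scaling) then gives
\[
D^\varepsilon_{\min,F}(\rho_{XAB}\Vert \sigma_{AX}\otimes \sigma_B)\leq D^\varepsilon_{\min,F}(\rho_{XAB}\Vert \sigma_A\otimes \sigma_{BX})+\log_2 d_X,
\]
and infimizing both sides completes the argument.

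The hard part will be identifying the right operator inequality in Property~3: the naive attempt $\sigma_{BX}\leq I_X\otimes \Tr_X[\sigma_{BX}]$ fails for coherent $\sigma_{BX}$ (e.g., $\sigma_{BX}$ maximally entangled, whose top eigenvalue exceeds that of $I_X\otimes \Tr_X[\sigma_{BX}]$), which is precisely why the preliminary pinching reduction — valid only because $\rho_{XAB}$ is classical on $X$ — is essential. Once that reduction is in place, the remaining manipulations (the block-diagonal comparison and the choice $\sigma_{AX}=\sigma_A\otimes I_X/d_X$ that absorbs the factor $d_X$ exactly via scaling) are routine applications of the properties of $D^\varepsilon_{\min,F}$ catalogued earlier.
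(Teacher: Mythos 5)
Your proposal is correct and follows essentially the same route as the paper: symmetry via SWAP and unitary invariance, local data processing via \cref{thm:data-processing} plus the fact that channels preserve subnormalization, and for the classical communication bound the same three ingredients — dephasing/pinching on $X$ (using $\overline{\Delta}_X(\rho_{XAB})=\rho_{XAB}$), the operator inequality $\sum_x |x\rangle\!\langle x|\otimes\omega^x_B\leq I_X\otimes\sum_x\omega^x_B$ combined with Property~6, and the scaling property to absorb the factor $d_X$. The only differences are cosmetic (you apply the pinching reduction before the scaling step, whereas the paper scales first), so no further comment is needed.
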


\begin{IEEEproof}
   \underline{\textit{Symmetry:}} This follows because $\sigma_B \otimes \sigma_A \geq 0$ if and only if $\sigma_A \otimes \sigma_B \geq 0$, and by the unitary invariance of $D^\varepsilon_{\min,F}(\cdot \Vert \cdot)$. In particular, by applying the unitary SWAP operation, we have 
\begin{equation}
    D^\varepsilon_{\min,F}(\rho_{AB} \Vert \sigma_A \otimes \sigma_B) = D^\varepsilon_{\min,F}(\rho_{BA} \Vert \sigma_B \otimes \sigma_A).
\end{equation}
Then, by definition of $ I^\varepsilon_{\min,F}(A;B)_\rho$, it satisfies symmetry. 

\medskip 
\underline{\textit{Data processing under local channels:}}
   By the data-processing inequality for the smooth $F$-min-relative entropy (Theorem~\ref{thm:data-processing}), we have 
   \begin{multline}
       D^\varepsilon_{\min,F}(\rho_{AB} \Vert \sigma_A \otimes \sigma_B) \geq  \\ D^\varepsilon_{\min,F}\!\left(\omega_{A'B'} \Vert \cN_{A \to A'}(\sigma_A) \otimes  \cM_{B \to B'}(\sigma_B)\right).
   \end{multline}
Since $\cN_{A \to A'}$ and $\cM_{B \to B'}$ are quantum channels, we find that 
$\cN_{A \to A'}(\sigma_A) \otimes  \cM_{B \to B'}(\sigma_B) \geq 0$, $\Tr\!\left[\cN_{A \to A'}(\sigma_A)\right] \leq 1$ and $\Tr\!\left[\cM_{B \to B'}(\sigma_B) \right] \leq 1$. 
With that, we conclude that
\begin{multline}
     D^\varepsilon_{\min,F}(\rho_{AB} \Vert \sigma_A \otimes \sigma_B) \geq \\
     \inf_{\substack{\tau_{A'}, \tau_{B'} \geq 0: \\ \Tr[\tau_{A'}] \leq 1,  \Tr[\tau_{B'}] \leq 1} } \hspace{-4mm} D^\varepsilon_{\min,F}(\omega_{A'B'} \Vert \tau_{A'} \otimes \tau_{B'}).
\end{multline}
Then, optimizing over $\sigma_{A},\sigma_{B} \geq 0, \Tr[\sigma_A] \leq 1,$ and $  \Tr[\sigma_B] \leq 1$, we arrive at the desired conclusion. 

\medskip
\underline{\textit{Classical communication bound:}}  
Fix $\sigma_A$ and $\sigma_{BX}$ such that  $\sigma_{A}, \sigma_{BX} \geq 0, \Tr[\sigma_A] \leq 1,  \Tr[\sigma_{BX}] \leq 1$.
By scaling of smooth $F$-min-relative entropy in \cref{thm:other-properties}, we have 
\begin{multline}
     \log_2 d_X +  D^\varepsilon_{\min,F}(\rho_{ABX} \Vert \sigma_A \otimes \sigma_{BX}) \\
     = D^\varepsilon_{\min,F}\!\left(\rho_{ABX} \middle \Vert \sigma_A \otimes \frac{\sigma_{BX}}{d_X}\right).
\end{multline}
Denote the completely dephasing channel as 
\begin{equation} \label{eq: completely dephasing channel with X}
     \overline{\Delta}_X(\cdot) \coloneqq \sum_x |x \rangle\!\langle x|(\cdot) |x \rangle\!\langle x|,
\end{equation}
and set $\sum_x \widetilde{\sigma}^x_{B} \otimes  |x \rangle\!\langle x| 
 \coloneqq \overline{\Delta}_X(\sigma_{BX})  $. 
Consider that
\begin{align}
     &D^\varepsilon_{\min,F}\!\left(\rho_{ABX} \middle \Vert \sigma_A \otimes \frac{\sigma_{BX}}{d_X}\right)  \cr
     &\geq     D^\varepsilon_{\min,F}\!\left(\overline{\Delta}_X(\rho_{ABX}) \middle \Vert \sigma_A \otimes \frac{\overline{\Delta}_X(\sigma_{BX})}{d_X}\right) \\
     &=  D^\varepsilon_{\min,F}\!\left(\rho_{AXB} \middle \Vert \sigma_A \otimes \frac{1}{d_X} \sum_x |x \rangle\!\langle x| \otimes \tilde{\sigma}^x_B \right) \\
     &\geq   D^\varepsilon_{\min,F}\!\left(\rho_{AXB} \middle \Vert \sigma_A \otimes \frac{I}{d_X} \otimes \sum_x  \tilde{\sigma}^x_B \right) \\
     & \geq  \inf_{\substack{\tau_{AX} \otimes \tau_{B} \geq 0: \\ \Tr[\tau_{AX}] \leq 1,  \Tr[\tau_B] \leq 1} } D^\varepsilon_{\min,F}(\rho_{AXB} \Vert \tau_{AX} \otimes \tau_{B}), 
\end{align}
where the first inequality follows from data processing under the dephasing channel, the equality from the unitary invariance of smooth $F$-min-relative entropy with the SWAP operator to interchange the $B$ and $X$ systems,
the next inequality from Property~6 of \cref{thm:other-properties} (since $|x \rangle\!\langle x| \leq I$ for every $x$), and the final inequality because $\Tr[\sum_{x} \tilde{\sigma}^x] \leq 1$ given that $\Tr[\sigma_{BX}] \leq 1$ and $\Tr[ \sigma_A \otimes I/d_X ] \leq 1$, given that $\Tr[\sigma_{A}] \leq 1$. 
\end{IEEEproof}

\medskip 
Motivated by the expression for the distillable randomness of classical--quantum states from  \cite{devetak2004distilling}, we present the following upper bound on the one-shot distillable randomness. Indeed, it can be understood as a one-shot generalization of the quantity from \cite[Eq.~(24)]{devetak2004distilling}.

\begin{proposition}
\label{prop:RD-With-the-set-of-product-states-or-sub-normalized-states}
    Fix $\varepsilon \in (0,1)$. The following upper bound holds for the  one-shot distillable randomness of a classical-quantum (cq) state 
   $\rho_{XB}$: 
    \begin{equation}
        R^\varepsilon(\rho_{XB})   \leq    I^\varepsilon_{\min,F}(X;B)_{\rho}.
        \label{eq:1st-DW-style-1-shot}
    \end{equation}
    Furthermore, for a bipartite quantum state $\rho_{AB}$, we have 
     \begin{equation}
         R^\varepsilon(\rho_{AB})   \leq \\
         \sup_{\cM: A \to X}   I^\varepsilon_{\min,F}(X;B)_{\omega},
         \label{eq:2nd-DW-style-1-shot}
     \end{equation}
     where $\omega_{XB}\coloneqq \mathcal{M}_{A\to X}(\rho_{AB})$ and the supremum is over every measurement channel $\mathcal{M}_{A\to X}$ that takes a quantum input $A$ and outputs a classical system $X$, i.e., of the form $\mathcal{M}_{A\to X}(\cdot) = \sum_x \Tr[M^x_A (\cdot)] |x\rangle\!\langle x|_X$, where $\{M^x_A\}_x$ is a POVM.
\end{proposition}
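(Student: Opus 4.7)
The plan is to adapt the proof strategy of \cref{thm:distillable-randomness-with-gamma-set}, this time replacing the set-based measure $\Gamma^\varepsilon_{\min,F}$ by the product-state mutual information $I^\varepsilon_{\min,F}$ and invoking the properties of the latter collected in \cref{lem: div with prod}. The central ingredient is an analogue of \cite[Lemma~9]{lami2022upper} for subnormalized product states: for all $\sigma_M, \sigma_{M'} \geq 0$ with $\operatorname{Tr}[\sigma_M], \operatorname{Tr}[\sigma_{M'}] \leq 1$, it holds that
\begin{equation}
    F(\overline{\Phi}^d_{MM'}, \sigma_M \otimes \sigma_{M'}) \leq \frac{1}{d}.
\end{equation}
This I would obtain by first invoking the data-processing inequality for fidelity under the complete dephasing channel in the $\{|ij\rangle\!\langle ij|\}_{ij}$ basis (which fixes $\overline{\Phi}^d_{MM'}$), thereby reducing to a classical fidelity, and then applying Cauchy--Schwarz together with the subnormalization constraint.

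For the cq case \eqref{eq:1st-DW-style-1-shot}, I would fix an arbitrary $(d,\varepsilon)$ 1W-LOCC protocol on $\rho_{XB}$ with encoding $\mathcal{E}_{X \to LM}$ and decoding $\mathcal{D}_{BL \to M'}$, and set $\omega_{MLB} \coloneqq \mathcal{E}_{X \to LM}(\rho_{XB})$ as well as $\omega_{MM'} \coloneqq \mathcal{D}_{BL \to M'}(\omega_{MLB})$. Combining the fidelity bound above with the error constraint $F(\overline{\Phi}^d_{MM'}, \omega_{MM'}) \geq 1-\varepsilon$ and the definition of $D^\varepsilon_{\min,F}$ yields $\log_2 d \leq I^\varepsilon_{\min,F}(M;M')_\omega$ after infimizing over $\sigma_M, \sigma_{M'}$. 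The proof then chains three inequalities: (i) data processing under Bob's decoding channel (Property~2 of \cref{lem: div with prod}) gives $I^\varepsilon_{\min,F}(M;M')_\omega \leq I^\varepsilon_{\min,F}(M;BL)_{\omega_{MLB}}$; (ii) the classical communication bound (Property~3 of \cref{lem: div with prod}), applied together with symmetry (Property~1) so as to treat the classical register $L$, gives $I^\varepsilon_{\min,F}(M;BL)_{\omega_{MLB}} \leq \log_2 d_L + I^\varepsilon_{\min,F}(ML;B)_{\omega_{MLB}}$; and (iii) data processing under Alice's local channel $\mathcal{E}_{X \to LM}$ gives $I^\varepsilon_{\min,F}(ML;B)_{\omega_{MLB}} \leq I^\varepsilon_{\min,F}(X;B)_{\rho_{XB}}$. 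Rearranging yields $\log_2 d - \log_2 d_L \leq I^\varepsilon_{\min,F}(X;B)_{\rho_{XB}}$, and taking the supremum over protocols delivers the claim.

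For the general bipartite case \eqref{eq:2nd-DW-style-1-shot}, I would observe that any encoding $\mathcal{E}_{A \to LM}$ with both outputs classical is a measurement channel; concretely, it takes the form $\mathcal{E}_{A \to LM}(\cdot) = \sum_{l,m} \operatorname{Tr}[E^{l,m}_A (\cdot)] |l\rangle\!\langle l|_L \otimes |m\rangle\!\langle m|_M$ for some POVM $\{E^{l,m}_A\}$. Setting $X \coloneqq LM$ and letting $\mathcal{M}_{A \to X}$ be the corresponding measurement with outcome $x = (l,m)$, the cq state $\omega_{XB} = \mathcal{M}_{A \to X}(\rho_{AB})$ serves as an intermediate state on which the remainder of any such protocol reduces to a 1W-LOCC protocol whose ``encoding'' is a trivial relabeling of $X$ into $(L,M)$. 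Applying the cq-case bound \eqref{eq:1st-DW-style-1-shot} to $\omega_{XB}$ and taking the supremum over measurement channels $\mathcal{M}_{A\to X}$ then gives \eqref{eq:2nd-DW-style-1-shot}.

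The main technical subtlety is that Property~3 of \cref{lem: div with prod} reads $I^\varepsilon_{\min,F}(AX;B) \leq \log_2 d_X + I^\varepsilon_{\min,F}(A;BX)$, whereas step (ii) above requires the reverse orientation $I^\varepsilon_{\min,F}(M;BL) \leq \log_2 d_L + I^\varepsilon_{\min,F}(ML;B)$. This is overcome by invoking Property~1 (symmetry) so as to exchange the roles of Alice's and Bob's sides in Property~3; since $L$ remains classical in $\omega_{MLB}$, the cq-form hypothesis required by Property~3 is preserved after the swap, so that the desired direction of the inequality emerges.
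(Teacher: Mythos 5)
Your proposal is correct and follows essentially the same route as the paper: the paper's own proof simply states that \eqref{eq:1st-DW-style-1-shot} follows by the reasoning of \cref{thm:distillable-randomness-with-gamma-set} with \cref{lem: div with prod} and \cref{lem: upperbound 1/d} supplying the needed properties of $I^\varepsilon_{\min,F}$, and that \eqref{eq:2nd-DW-style-1-shot} follows because every 1W-LOCC protocol begins with a measurement on Alice's system. The details you fill in---including the dephasing-plus-Cauchy--Schwarz derivation of the $1/d$ fidelity bound and the use of symmetry to orient the classical communication bound correctly---match the paper's intended argument.
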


\begin{IEEEproof}
The first inequality in \eqref{eq:1st-DW-style-1-shot} follows from similar reasoning as in the proof of \cref{thm:distillable-randomness-with-gamma-set}. For that to follow, Lemmas~\ref{lem: div with prod} and \ref{lem: upperbound 1/d} establish the required properties for $I^\varepsilon_{\min,F}(A;B)_\rho$. Note that these lemmas are proved for a general bipartite state, and these also hold for the special case of cq states.

The second inequality in \eqref{eq:2nd-DW-style-1-shot} follows because the inequality in \eqref{eq:1st-DW-style-1-shot} holds for an arbitrary cq state formed after a measurement on the $A$ system, and every such one-way LOCC protocol for randomness distillation consists of a measurement on Alice's system as the first step.   
\end{IEEEproof}

\begin{remark}[Classical output state]
    Notice that for a fixed measurement channel $\cM$ acting on the system $A$, applying the completely dephasing channel $\overline{\Delta}_X$ in \eqref{eq: completely dephasing channel with X} we have 
    \begin{multline}
        D^\varepsilon_{\min,F}\!\left(\cM_{A \to X}(\rho_{AB}) \Vert \sigma_X \otimes \sigma_B \right) \geq   \\ D^\varepsilon_{\min,F}\!\left(\cM_{A \to X}(\rho_{AB}) \Vert \overline{\Delta}_X (\sigma_X) \otimes \sigma_B \right),
    \end{multline}
    which follows due to the data-processing inequality for the smooth $F$-min-relative entropy and $\overline{\Delta}_X \circ \cM_{A\to X} =\cM_{A\to X}$. This shows that the infimum in the definition of $I^\varepsilon_{\min,F}$ is achieved by a state that is classical on $X$.
\end{remark}

The following lemma was used in the proof of \cref{prop:RD-With-the-set-of-product-states-or-sub-normalized-states}:

\begin{lemma}\label{lem: upperbound 1/d}
    The following bound holds: 
    \begin{equation}
        \sup_{\substack{\sigma_{A} \otimes \sigma_{B} \geq 0: \\ \Tr[\sigma_A] \leq 1,  \Tr[\sigma_B] \leq 1} } F\!\left( \overline{\Phi}^d_{AB}, \sigma_A \otimes \sigma_B\right) \leq \frac{1}{d},
    \end{equation}
    where $\overline{\Phi}^d_{AB}$ is the maximally classically correlated state. 
\end{lemma}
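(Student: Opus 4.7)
The plan is to reduce the quantum fidelity computation to a classical one by applying the completely dephasing channel $\overline{\Delta}_{AB} \coloneqq \overline{\Delta}_A \otimes \overline{\Delta}_B$ (in the basis used to define $\overline{\Phi}^d_{AB}$) to both arguments. Since $\overline{\Phi}^d_{AB}$ is already diagonal in the $\{|ij\rangle\}$ basis, it satisfies $\overline{\Delta}_{AB}(\overline{\Phi}^d_{AB}) = \overline{\Phi}^d_{AB}$, and by the data-processing inequality for fidelity,
\begin{equation}
F\!\left(\overline{\Phi}^d_{AB},\, \sigma_A \otimes \sigma_B\right) \leq F\!\left(\overline{\Phi}^d_{AB},\, \overline{\Delta}_{AB}(\sigma_A \otimes \sigma_B)\right).
\end{equation}
The right-hand operand is now classical in the same basis, with diagonal entries $(\sigma_A)_{ii}(\sigma_B)_{jj}$.

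Next, I would use that when both operators are diagonal in a common basis, the fidelity reduces to the squared Bhattacharyya coefficient. Since $\overline{\Phi}^d_{AB}$ only has support on the diagonal indices $(i,i)$ with weight $1/d$, only those terms contribute:
\begin{equation}
F\!\left(\overline{\Phi}^d_{AB},\, \overline{\Delta}_{AB}(\sigma_A \otimes \sigma_B)\right) = \frac{1}{d}\left(\sum_{i=0}^{d-1} \sqrt{(\sigma_A)_{ii}\,(\sigma_B)_{ii}}\right)^{\!2}.
\end{equation}

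Finally, I would apply the Cauchy--Schwarz inequality to the sum and use the subnormalization constraints:
\begin{align}
\left(\sum_{i} \sqrt{(\sigma_A)_{ii}\,(\sigma_B)_{ii}}\right)^{\!2}
&\leq \left(\sum_i (\sigma_A)_{ii}\right)\!\left(\sum_i (\sigma_B)_{ii}\right) \notag \\
&= \operatorname{Tr}[\sigma_A]\operatorname{Tr}[\sigma_B] \leq 1,
\end{align}
which yields $F(\overline{\Phi}^d_{AB}, \sigma_A \otimes \sigma_B) \leq 1/d$ uniformly in $\sigma_A, \sigma_B$, thereby bounding the supremum. There is no substantive obstacle: the nontrivial step is recognizing that dephasing both sides collapses the computation to a classical one, after which the bound is a direct Cauchy--Schwarz exercise exploiting the subnormalization.
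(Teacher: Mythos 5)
Your proposal is correct and coincides with the paper's own second (alternative) proof: dephase both arguments using data processing for fidelity, reduce to a classical Bhattacharyya computation over the diagonal, and finish with Cauchy--Schwarz and the subnormalization constraints. The paper additionally records a one-line first proof via the containment of subnormalized product states in the set $\{\sigma_{AB}\geq 0:\gamma(\sigma_{AB})\leq 1\}$ together with Lemma~9 of \cite{lami2022upper}, but your argument matches its self-contained route.
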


\begin{IEEEproof}
We give two proofs for this statement. A first proof follows from the observation that every product state $\sigma_{A} \otimes \sigma_{B}$ satisfies $\gamma(\sigma_{A} \otimes \sigma_{B}) = 1$, by applying Proposition~4 of \cite{lami2022upper}. Then we obtain the inequality $\gamma(\sigma_{A} \otimes \sigma_{B}) \leq 1$ for subnormalized states by applying Proposition~6 of \cite{lami2022upper}. This then proves that the set of subnormalized product states is contained in the set  $\{\sigma_{AB} \geq 0 :  \gamma(\sigma_{AB}) \leq 1\}$, so that the desired statement follows from Lemma~9 of \cite{lami2022upper}.

    As an alternative proof, let $\sigma_A, \sigma_B$ satisfy the constraints $\sigma_{A} \otimes \sigma_{B} \geq 0, \Tr[\sigma_A] \leq 1$, and $\Tr[\sigma_B] \leq 1$.
By the data-processing inequality for fidelity, consider that 
\begin{equation}
     F\!\left( \overline{\Phi}^d_{AB}, \sigma_A \otimes \sigma_B\right)  \leq  F\!\left( \overline{\Phi}^d_{AB}, \overline{\Delta}_A(\sigma_A) \otimes \overline{\Delta}_B (\sigma_B)\right), 
\end{equation}
where $\overline{\Delta}$ is a  dephasing channel defined as
\begin{equation}
     \overline{\Delta}(\cdot) \coloneqq \sum_{m=0}^{d-1} |m \rangle\!\langle m|(\cdot) |m \rangle\!\langle m|.
\end{equation}
With that, we have
\begin{align}
  & \overline{\Delta}_A(\sigma_A) \otimes \overline{\Delta}_B (\sigma_B) \cr
  &= \sum_{m=0}^{d-1} |m \rangle\!\langle m|\sigma_A|m \rangle\!\langle m| \otimes \sum_{\ell=0}^{d-1} |\ell \rangle\!\langle \ell|\sigma_B |\ell\rangle\!\langle \ell| \\
  &=\sum_{m,\ell=0}^{d-1} \langle m|\sigma_A|m \rangle  \langle \ell|\sigma_B |\ell \rangle  |m \rangle\!\langle m| \otimes |\ell \rangle\!\langle \ell|.
\end{align}
Then, we obtain 
\begin{align}
    & F\!\left( \overline{\Phi}^d_{AB}, \overline{\Delta}_A(\sigma_A) \otimes \overline{\Delta}_B (\sigma_B)\right)  \notag \\ 
    &= \left( \sum_{m=0}^{d-1} \sqrt{\frac{1}{d} \langle m|\sigma_A|m \rangle  \langle m|\sigma_B |m\rangle}\right)^2 \\
    & \leq \frac{1}{d} \sum_{m=0}^{d-1}  \langle m|\sigma_A|m \rangle  \sum_{m=0}^{d-1}  \langle m|\sigma_B|m \rangle \\
    & = \frac{1}{d} \Tr[\sigma_A] \Tr[\sigma_B] \\
    & \leq \frac{1}{d},
\end{align}
where the first equality follows from the fidelity reducing to a classical fidelity, the first inequality by Cauchy--Schwarz, and last inequality from the assumptions that $\Tr[\sigma_A] \leq 1$ and $\Tr[\sigma_B] \leq 1$. 
Finally, we complete the proof of \cref{lem: upperbound 1/d} by supremizing over $\sigma_A$ and $\sigma_B$ satisfying the required constraints. 
\end{IEEEproof}

\medskip 
For a cq state with a uniform classical probability distribution, in what follows we derive a lower bound for the one-shot distillable randomness. We obtain this by devising an achievable protocol that makes use of position-based coding \cite{AJW17b} and the square-root measurement at the decoder. 
Let us recall that the smooth min-mutual information of a bipartite state $\rho_{AB}$ is defined for $\varepsilon \in [0,1]$ as \cite{WR12}
\begin{equation} \label{eq:smooth-min-MI}
    I^\varepsilon_{\min}(A;B)_\rho \coloneqq D^\varepsilon_{\min}(\rho_{AB} \Vert \rho_A \otimes \rho_B).
\end{equation}

\begin{proposition}[Lower bound]\label{prop:second-order-lower-bound-special-cq}Fix $\varepsilon \in (0,1)$ and $\eta \in (0, \varepsilon)$.
     For a cq state $\rho_{XB}$ of the form 
\begin{equation}
    \rho_{XB} \coloneqq \frac{1}{L} \sum_{x=1}^L |x \rangle\!\langle x|_X \otimes \rho_B^x,
\end{equation}
the one-shot distillable randomness of $\rho_{XB}$ is bounded from below as follows: 
\begin{equation}
    R^\varepsilon(\rho_{XB}) \geq \left\lfloor I^{\varepsilon-\eta}_{\min}(X;B)_\rho - \log_2 \!\left( \frac{4 \varepsilon}{\eta^2}  \right) \right\rfloor.
    \label{eq:lower-bnd-1-shot-dist-rand-uniform}
\end{equation}
\end{proposition}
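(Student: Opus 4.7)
The plan is to exhibit an explicit one-way LOCC protocol for randomness distillation based on position-based coding with random binning, and to control its decoding error via the Hayashi--Nagaoka inequality. As a preliminary reduction, since $\rho_{XB}$ is cq with $\rho_X = I_X/L$ and $\rho_B = \bar{\rho}_B \coloneqq \tfrac{1}{L}\sum_x \rho_B^x$, a dephasing argument on $X$ shows that the optimizer of $I^{\varepsilon-\eta}_{\min}(X;B)_\rho = D^{\varepsilon-\eta}_{\min}(\rho_{XB}\,\|\,\rho_X\otimes\rho_B)$ can be taken in the block-diagonal form $\Lambda = \sum_x |x\rangle\!\langle x|_X \otimes \Lambda_B^x$ with $0 \le \Lambda_B^x \le I_B$, yielding operators satisfying (i) $\tfrac{1}{L}\sum_x \Tr[\Lambda_B^x \rho_B^x] \ge 1-(\varepsilon-\eta)$ and (ii) $\tfrac{1}{L}\sum_x \Tr[\Lambda_B^x \bar{\rho}_B] \le 2^{-I^{\varepsilon-\eta}_{\min}(X;B)_\rho}$.

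Fix an integer $K$ (to be determined) and a uniformly random function $h:\{1,\ldots,L\}\to\{1,\ldots,K\}$ serving as shared randomness. The protocol reads: on input $x$, Alice outputs $M = x$ (so $d = L$) and sends $\ell = h(x)$ through the classical channel $L$ of dimension $K$; Bob, upon receiving $\ell$, performs on $B$ the square-root measurement built from the sub-ensemble $\{\Lambda_B^{x'}\}_{x'\in h^{-1}(\ell)}$ and outputs $M' = \hat{x}$. The net extracted rate is $\log_2 L - \log_2 K$.

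The Hayashi--Nagaoka operator inequality applied to this SRM gives, for every $c > 0$,
\begin{multline*}
\Pr[\hat{x}\neq x\mid x,h] \le (1+c)\bigl(1-\Tr[\Lambda_B^x\rho_B^x]\bigr) \\
+ (2+c+c^{-1})\!\sum_{\substack{x' \in h^{-1}(h(x))\\ x'\neq x}}\!\Tr[\Lambda_B^{x'}\rho_B^x].
\end{multline*}
Averaging over uniform $x$ and over $h$, using $\Pr_h[h(x') = h(x)] = 1/K$ for $x'\neq x$ together with (i)--(ii), yields
\begin{equation*}
\mathbb{E}_{h,x}[\Pr(\hat{x}\neq x)] \le (1+c)(\varepsilon-\eta) + (2+c+c^{-1})\tfrac{L}{K}\,2^{-I^{\varepsilon-\eta}_{\min}(X;B)_\rho}.
\end{equation*}
Setting $K = \lceil L\cdot 2^{-I^{\varepsilon-\eta}_{\min}(X;B)_\rho}\cdot 4\varepsilon/\eta^2\rceil$ and choosing $c$ of order $\eta/\varepsilon$ makes the RHS small enough that, by Markov, a deterministic hash $h^*$ exists attaining the desired error. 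Converting to fidelity via $F(\omega_{MM'},\overline{\Phi}^L_{MM'}) = \bigl(\tfrac{1}{L}\sum_x\sqrt{q_x}\bigr)^2 \ge (1-P_{\mathrm{err}})^2$ with $q_x := \Pr[\hat{x}=x\mid X=x]$ (using $\sqrt{q}\ge q$ for $q\in[0,1]$) and collecting terms yields $F \ge 1-\varepsilon$, with net rate $\log_2 L - \log_2 K = I^{\varepsilon-\eta}_{\min}(X;B)_\rho - \log_2(4\varepsilon/\eta^2) - O(1)$; the floor in the statement absorbs the additive $O(1)$.

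The main obstacle is calibrating $c$ in the Hayashi--Nagaoka bound so the combined penalty evaluates to exactly $4\varepsilon/\eta^2$; the factor $4$ accommodates both the prefactor $(2+c+c^{-1})$ at the optimal $c$ and the quadratic loss in the error-to-fidelity conversion. The block-diagonal reduction above is a routine consequence of data processing under the dephasing channel on $X$, which fixes both $\rho_{XB}$ and $\rho_X\otimes\rho_B$. For $\eta$ so small that the bound on the right-hand side of \eqref{eq:lower-bnd-1-shot-dist-rand-uniform} becomes negative, the inequality is vacuous and requires nothing to prove; the proof is only substantive in the regime where $\eta$ is a constant fraction of $\varepsilon$.
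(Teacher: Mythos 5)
Your overall strategy (an explicit achievability protocol whose decoding error is controlled by a one-shot packing bound) matches the paper's in spirit, but your implementation has a genuine gap that the paper's version is specifically engineered to avoid. The error criterion in the definition of $R^\varepsilon$ is fidelity, $F(\overline{\Phi}^d_{MM'},\omega_{MM'})\ge 1-\varepsilon$, and for your (asymmetric) protocol the best general conversion is $F=\bigl(\tfrac{1}{L}\sum_x\sqrt{q_x}\bigr)^2\ge(1-P_{\mathrm{err}})^2$, which is tight (take $q_x\in\{0,1\}$). Hence you need $P_{\mathrm{err}}\le 1-\sqrt{1-\varepsilon}\approx\varepsilon/2$. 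But the Hayashi--Nagaoka bound you write down is at least $(1+c)(\varepsilon-\eta)\ge\varepsilon-\eta$ for every $c>0$ and every choice of $K$, so whenever $\eta<\varepsilon/2$ no calibration of $c$ and $K$ can bring the guaranteed average error below $\varepsilon/2$; ``collecting terms'' cannot yield $F\ge 1-\varepsilon$. Your fallback---that the statement is vacuous unless $\eta$ is a constant fraction of $\varepsilon$---is false: $I^{\varepsilon-\eta}_{\min}(X;B)_\rho$ grows with the state while $\log_2(4\varepsilon/\eta^2)$ does not (for $\rho^{\otimes n}$ the former is $\Theta(n)$), and the paper applies the proposition with $\eta=1/\sqrt{n}\to 0$ and fixed $\varepsilon$ in Proposition~\ref{prop:second-order-expansion-CR-some-cq-lower}, exactly the regime your argument does not cover. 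Patching by targeting error $1-\sqrt{1-\varepsilon}$ would replace the smoothing parameter $\varepsilon-\eta$ by roughly $\varepsilon/2-\eta$, a strictly weaker bound whose second-order coefficient would no longer match the converse.

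The paper's proof sidesteps this by using position-based coding: Alice samples $m$ uniformly, embeds her $X$ register in position $m$ among $|\mathcal{M}|$ transmitted registers (the other $|\mathcal{M}|-1$ being fresh independent copies of $\rho_X$), and Bob runs the square-root measurement over positions. By permutation symmetry of this construction, the decoding error equals the same constant $p_{\mathrm{err}}$ for every $m$, and for such a symmetric classical output the fidelity equals $1-p_{\mathrm{err}}$ exactly---no quadratic loss---so the black-box position-based-coding guarantee $p_{\mathrm{err}}\le\varepsilon$ at size $\log_2|\mathcal{M}|=\lfloor I^{\varepsilon-\eta}_{\min}(X;B)_\rho-\log_2(4\varepsilon/\eta^2)\rfloor$ suffices. (The cost of communicating the $|\mathcal{M}|$ registers is exactly offset by the perfect shared randomness in the kept copies, so the net rate is the full $\log_2|\mathcal{M}|$; your accounting via $\log_2 L-\log_2 K$ is fine in principle, but it is the fidelity criterion that breaks.) If you wish to retain your hashing scheme, you must build in an analogous symmetrization so that the per-symbol success probabilities $q_x$ are all equal; otherwise the fidelity criterion costs a factor of two in the error budget that the claimed constant $4\varepsilon/\eta^2$ has no room for.
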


\begin{IEEEproof}
   We prove this lower bound by devising an achievable one-way protocol from Alice to Bob.
   The proof below follows by employing the idea behind the protocol presented in the proof of \cite[Theorem~6]{khatri2019second}. 
     Prior to the initiation of the protocol, Alice and Bob share the state $\rho_{XB}$, Alice has access to the $X$ system, and Bob has access to the $B$ system.
   
   The protocol begins with Alice picking an index  $m\in \mathcal{M}$ uniformly at random and placing it in a classical register $M$. She then labels her $X$ system of $\rho_{XB}$ as $X_m$. She prepares $|\mathcal{M}|$ independent instances of the classical state
   \begin{equation}
       \rho_X = \frac{1}{L} \sum_{x=1}^L |x \rangle\!\langle x|_X
   \end{equation}
   and labels them as $X_1,\ldots, X_{m-1}, X_{m+1}, \ldots, X_{|\mathcal{M}|}$. She sends the registers $X_1, \ldots, X_{|\mathcal{M}|}$, in this order, over a classical channel to Bob, while keeping a copy of each of them in her laboratory (denote the copies by $X'_1, \ldots, X'_{|\mathcal{M}|}$). For a fixed value of $m$, the reduced state of Bob has the following form:
   \begin{equation}
       \rho_{X_1} \otimes \cdots \otimes \rho_{X_{m-1}} \otimes \rho_{X_m B} \otimes \rho_{X_{m+1}} \otimes \cdots \otimes \rho_{|\mathcal{M}|},
   \end{equation}
   and his goal is to employ a decoding measurement to figure out which $X$ system is correlated with his $B$ system.
   This reduced state has exactly the form considered in position-based coding \cite{AJW17b} (see also \cite{qi2018applications,wilde2017position} and in particular \cite[Eq.~(3.5)]{qi2018applications}). As such, at this point, we can invoke those results to conclude that as long as
   \begin{equation}
       \log_2 |\mathcal{M}| = \left\lfloor I^{\varepsilon-\eta}_{\min}(X;B)_\rho - \log_2 \!\left( \frac{4 \varepsilon}{\eta^2}  \right) \right\rfloor,
   \end{equation}
   it is possible for Bob to decode the index $m$ with an error probability $\leq \varepsilon$. 
    Furthermore, Bob can make use of the square-root measurement construction to perform the decoding, and due to the permutation symmetry of the protocol and measurement, it follows that the error probability in decoding each index $m$ is equal to the same fixed value $p_{\operatorname{err}}$, for some $p_{\operatorname{err}}\in [0,1]$ such that $p_{\operatorname{err}} \leq \varepsilon$ (see \cite[Eq.~(3.10)]{qi2018applications}). After performing the decoding, he places his measurement outcome in a classical register $M'$. Thus, the final state of registers $M$ and $M'$ at the end of the protocol is given by
   \begin{equation}
       \omega_{MM'} \coloneqq (1-p_{\operatorname{err}}) \overline{\Phi}_{MM'} + p_{\operatorname{err}} \frac{I_{MM'}- \overline{\Phi}_{MM'}}{|\mathcal{M}|-1},
   \end{equation}
   for which we have that
   \begin{equation}
       \frac{1}{2} \left \Vert \overline{\Phi}_{MM'} - \omega_{MM'}\right \Vert_1 = 1-F(\overline{\Phi}_{MM'} , \omega_{MM'}) = p_{\operatorname{err}} \leq \varepsilon.
   \end{equation}
   Since the state $\rho_X$ is uniform, the cost for Alice to communicate each of the registers $X_1, \ldots, X_{|\mathcal{M}|}$ over a classical channel is precisely equal to the amount of distillable randomness contained in the register pairs $(X_1, X_1')$, \ldots, $(X_{|\mathcal{M}|}, X_{|\mathcal{M}|}')$. Thus, the net number of random shared bits generated by this protocol is equal to $\log_2|\mathcal{M}|$.
\end{IEEEproof}

\subsection{Second-Order Expansions for Randomness Distillation}

\label{sec:2nd-order-expansions}

Now we show how to utilize  the second-order asymptotics of the smooth $F$-min-relative entropy, as presented in \cref{thm:second-order-Dmin}, to obtain upper and lower bounds on the rate at which randomness distillation is possible. Before doing so, let us 
 define the following quantities: 
\begin{equation} \label{def:first-order-term}
    \Gamma(A;B)_\rho \coloneqq \inf_{\substack{\sigma_{AB} \geq 0: \\ \gamma(\sigma_{AB}) \leq 1} } D(\rho_{AB} \Vert \sigma_{AB}),
\end{equation}
where $D(\rho_{AB} \Vert \sigma_{AB})$ is the quantum relative entropy from~\eqref{eq:relative-entropy-def}.
Then, denote $\Pi_\gamma \subseteq \{\sigma_{AB} \geq 0:\gamma(\sigma_{AB}) \leq 1\} $ as the set of PSD operators achieving the infimum in $\Gamma(A;B)_\rho$. From that, we define the following variance quantity: 
\begin{equation}
\label{def:second-order-term}
       V_{\Gamma}^\varepsilon(A;B)_\rho \coloneqq 
  \begin{cases} \inf_{\sigma_{AB} \in \Pi_\gamma}  V(\rho_{AB} \Vert \sigma_{AB})& \mbox{if} \ \varepsilon \geq \frac{1}{2},\\
 \sup_{\sigma_{AB} \in \Pi_\gamma}  V(\rho_{AB} \Vert \sigma_{AB}) & \mbox{if} \ \varepsilon < \frac{1}{2}.
\end{cases}
\end{equation}

\begin{theorem}[Upper bound]\label{thm:Upper-Bound-Second-Order-general}
 Fix $\varepsilon \in (0,1)$.   The LOCC-assisted distillable randomness of a bipartite state $\rho_{AB}$ is bounded from above as follows: 
    \begin{multline}
        \frac{1}{n}    R^\varepsilon_{\leftrightarrow}(\rho^{\otimes n}_{AB})   \leq \\
         \Gamma(A;B)_\rho + \sqrt{\frac{1}{n}V_{\Gamma}^\varepsilon(A;B)_\rho}\Phi^{-1}(\varepsilon) +  O\!\left( \frac{\log n}{n} \right).
    \end{multline} 
\end{theorem}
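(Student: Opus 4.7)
\begin{IEEEproof}[Proof proposal]
The plan is to start from Theorem~\ref{thm:distillable-randomness-with-gamma-set} applied to the i.i.d.\ state $\rho_{AB}^{\otimes n}$, which immediately gives
\begin{equation}
R^\varepsilon_{\leftrightarrow}(\rho_{AB}^{\otimes n}) \;\leq\; \Gamma^\varepsilon_{\min,F}(A^n;B^n)_{\rho^{\otimes n}} \;=\; \inf_{\substack{\sigma_{A^nB^n}\geq 0:\\ \gamma(\sigma_{A^nB^n})\leq 1}} D^\varepsilon_{\min,F}\!\left(\rho_{AB}^{\otimes n}\,\middle\Vert\,\sigma_{A^nB^n}\right).
\end{equation}
Since this is an infimum, any feasible tensor-product choice produces an upper bound. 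Fix an arbitrary $\sigma_{AB}\in\Pi_\gamma$ (guaranteed to exist by lower semicontinuity of $D(\rho\Vert\cdot)$ over the closed feasible set). I would then take $\sigma_{A^nB^n}=\sigma_{AB}^{\otimes n}$.

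The first step is to verify feasibility, i.e., that $\gamma(\sigma_{AB}^{\otimes n})\leq 1$ whenever $\gamma(\sigma_{AB})\leq 1$. This is where I would appeal to the (sub)multiplicativity of $\gamma$ under tensor products established in \cite{lami2022upper}. With feasibility in hand, I would invoke the second-order expansion of Theorem~\ref{thm:second-order-Dmin} for the i.i.d.\ pair $(\rho_{AB}^{\otimes n},\sigma_{AB}^{\otimes n})$ to obtain
\begin{equation}
\frac{1}{n}\,D^\varepsilon_{\min,F}\!\left(\rho_{AB}^{\otimes n}\Vert\sigma_{AB}^{\otimes n}\right) \;=\; D(\rho_{AB}\Vert\sigma_{AB}) + \sqrt{\frac{V(\rho_{AB}\Vert\sigma_{AB})}{n}}\,\Phi^{-1}(\varepsilon) + O\!\left(\frac{\log n}{n}\right).
\end{equation}
Because $\sigma_{AB}\in\Pi_\gamma$, the first-order term equals $\Gamma(A;B)_\rho$ by definition~\eqref{def:first-order-term}.

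To sharpen the second-order term, I would then optimize over $\sigma_{AB}\in\Pi_\gamma$. The sign of $\Phi^{-1}(\varepsilon)$ dictates the direction: for $\varepsilon\geq 1/2$ it is nonnegative and one should minimize $V(\rho_{AB}\Vert\sigma_{AB})$ over $\Pi_\gamma$; for $\varepsilon<1/2$ it is negative and one should maximize it. This matches exactly the two-case definition of $V_{\Gamma}^\varepsilon(A;B)_\rho$ in~\eqref{def:second-order-term}, yielding the stated bound.

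The main obstacle is the multiplicativity (or sub-multiplicativity) of $\gamma$ alluded to above: if $\gamma(\sigma_{AB}^{\otimes n})=c_n>1$, one can instead apply the expansion to the rescaled operator $c_n^{-1}\sigma_{AB}^{\otimes n}$, which is feasible, and absorb the extra $\tfrac{1}{n}\log_2 c_n$ into the $O(\log n/n)$ remainder by the scaling property (Property~1 of Theorem~\ref{thm:other-properties}), provided $\log_2 c_n=O(\log n)$. A secondary technicality is that Theorem~\ref{thm:second-order-Dmin} assumes $V(\rho_{AB}\Vert\sigma_{AB})>0$; the degenerate case can be dealt with by a standard perturbation argument, replacing $\sigma_{AB}$ by $(1-\delta)\sigma_{AB}+\delta\tau_{AB}$ for a suitably chosen $\tau_{AB}$ (still $\gamma$-feasible) and letting $\delta\to 0$ after the expansion.
\end{IEEEproof}
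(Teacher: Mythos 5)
Your proposal follows essentially the same route as the paper's proof: apply Theorem~\ref{thm:distillable-randomness-with-gamma-set} to $\rho_{AB}^{\otimes n}$, upper-bound $\Gamma^{\varepsilon}_{\min,F}(A^n;B^n)_{\rho^{\otimes n}}$ by choosing the tensor power $\sigma_{AB}^{\otimes n}$ of an optimizer $\sigma_{AB}\in\Pi_\gamma$ for \eqref{def:second-order-term}, and then invoke the second-order expansion of Theorem~\ref{thm:second-order-Dmin}. The two technicalities you flag---feasibility of $\sigma_{AB}^{\otimes n}$ via the (sub)multiplicativity of $\gamma$ from \cite{lami2022upper}, and the degenerate case $V(\rho_{AB}\Vert\sigma_{AB})=0$ excluded by the hypotheses of Theorem~\ref{thm:second-order-Dmin}---are left implicit in the paper, and your treatment of them is a reasonable supplement rather than a deviation.
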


\begin{IEEEproof}
   Applying \cref{thm:distillable-randomness-with-gamma-set} and choosing $\sigma_{AB}$ to be an optimum in \eqref{def:second-order-term}, we find that
\begin{align}
      &\frac{1}{n}    R^\varepsilon_{\leftrightarrow}(\rho^{\otimes n}_{AB}) \notag   \\
     & \leq \frac{1}{n} \Gamma^\varepsilon_{\min,F}\! \left(A^n; B^n\right)_{\rho^{\otimes n}} \\ 
     & \leq  \frac{1}{n} D^\varepsilon_{\min,F}(\rho_{AB}^{\otimes n} \Vert \sigma_{AB}^{\otimes n}) \\ 
     & =\Gamma(A;B)_\rho + \sqrt{\frac{1}{n}V_{\Gamma}^\varepsilon(A;B)_\rho}\Phi^{-1}(\varepsilon) +  O\!\left( \frac{\log n}{n} \right).
\end{align}
The second inequality follows from the definition of $\Gamma^\varepsilon_{\min,F}$, and the last equality follows from the second-order expansion from \cref{thm:second-order-Dmin}.
\end{IEEEproof}

\medskip
In the above proof, once we restrict the second argument in $\Gamma^\varepsilon_{\min,F}\! \left(A^n; B^n\right)_{\rho^{\otimes n}}$ to be a tensor-power state, it follows from \cite[Lemma~63]{ChannelcodingRateYury} that the choices we have made are optimal.

\begin{remark}[Computational efficiency]
  We note here that we can further relax the upper bound in \cref{thm:Upper-Bound-Second-Order-general} to find an efficiently computable upper bound, by following an approach similar to that discussed in \cite[Section~VI]{lami2022upper}.
\end{remark}

Now let us recall the definition of the mutual information variance of a bipartite state $\omega_{AB}$ \cite[Section~2.2]{DTW14} as
\begin{equation}
        V(A;B)_\omega \coloneqq V(\omega_{AB} \Vert \omega_A \otimes \omega_B),
    \end{equation}
    where the relative entropy variance $V$ is defined in \eqref{eq:rel-ent-var}.

\begin{proposition}[Upper bound for cq states] \label{prop:Upper-Bound-Second-Order-CR}
     Fix $\varepsilon \in (0,1)$. The LOCC-assisted distillable randomness of a cq state $\rho_{XB}$ is bounded from above as follows: 
    \begin{multline} 
        \frac{1}{n}    R^\varepsilon_{\leftrightarrow}(\rho^{\otimes n}_{XB})   \leq \\ I(X;B)_\rho +  \sqrt{\frac{1}{n} V(X;B)_\rho} \ \Phi^{-1}(\varepsilon) 
        + O\!\left( \frac{\log n}{n} \right).
    \end{multline}
\end{proposition}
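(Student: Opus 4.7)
The plan is to specialize the general LOCC-assisted upper bound from \cref{thm:Upper-Bound-Second-Order-general} to the cq case by making a judicious tensor-power choice of the reference PSD operator, then invoking the second-order expansion of the smooth $F$-min-relative entropy. Concretely, I would begin with \cref{thm:distillable-randomness-with-gamma-set} applied to $\rho_{XB}^{\otimes n}$ to obtain
\begin{equation}
\frac{1}{n} R^\varepsilon_{\leftrightarrow}(\rho_{XB}^{\otimes n}) \leq \frac{1}{n}\Gamma^\varepsilon_{\min,F}(X^n;B^n)_{\rho^{\otimes n}}.
\end{equation}
Since the infimum defining $\Gamma^\varepsilon_{\min,F}$ ranges over all PSD operators $\sigma$ with $\gamma(\sigma)\le 1$, any feasible choice yields an upper bound.

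The natural candidate is $\sigma_{XB}^{(n)} = (\rho_X \otimes \rho_B)^{\otimes n}$. Feasibility follows because the product of (sub)normalized states satisfies $\gamma(\rho_X \otimes \rho_B) \leq 1$ (as used in the discussion following the definition of $I^\varepsilon_{\min,F}$, via Propositions~4 and 6 of \cite{lami2022upper}), and $\gamma$ behaves well under tensor products so the $n$-fold power is also feasible. This choice yields
\begin{equation}
\frac{1}{n}\Gamma^\varepsilon_{\min,F}(X^n;B^n)_{\rho^{\otimes n}} \leq \frac{1}{n} D^\varepsilon_{\min,F}\!\left(\rho_{XB}^{\otimes n}\,\Vert\,(\rho_X\otimes\rho_B)^{\otimes n}\right).
\end{equation}

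Now I would apply \cref{thm:second-order-Dmin} to the right-hand side, after verifying the two hypotheses: the support condition $\supp(\rho_{XB})\subseteq\supp(\rho_X\otimes\rho_B)$ (which holds automatically for any bipartite state, cq or otherwise), and the non-degeneracy $V(\rho_{XB}\Vert\rho_X\otimes\rho_B)>0$. This gives
\begin{multline}
\frac{1}{n} D^\varepsilon_{\min,F}\!\left(\rho_{XB}^{\otimes n}\,\Vert\,(\rho_X\otimes\rho_B)^{\otimes n}\right) = D(\rho_{XB}\Vert\rho_X\otimes\rho_B) \\
+ \sqrt{\frac{V(\rho_{XB}\Vert\rho_X\otimes\rho_B)}{n}}\,\Phi^{-1}(\varepsilon) + O\!\left(\frac{\log n}{n}\right).
\end{multline}
Identifying the first- and second-order coefficients as $I(X;B)_\rho$ and $V(X;B)_\rho$ respectively, by the very definitions in \eqref{eq:smooth-min-MI} and just before the proposition, delivers the claimed bound.

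The only genuine obstacle is the non-degeneracy requirement $V(X;B)_\rho > 0$. In the degenerate case $V(X;B)_\rho = 0$, the middle term vanishes and I would handle the statement via a perturbation argument: replace $\rho_X\otimes\rho_B$ with the smoothed operator $(1-\delta)(\rho_X\otimes\rho_B) + \delta\,\tau_{XB}$ for a full-rank cq state $\tau_{XB}$ (which remains in the feasible set by convexity of $\{\sigma:\gamma(\sigma)\leq 1\}$, itself a consequence of the convexity of $\gamma$), apply the scaling and monotonicity properties of $D^\varepsilon_{\min,F}$ from \cref{thm:other-properties} to absorb the $\delta$-dependence into the $O(\log n/n)$ remainder, and then take $\delta\to 0$. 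Modulo this technicality, the derivation is a direct specialization of \cref{thm:Upper-Bound-Second-Order-general}, recovering the proposition.
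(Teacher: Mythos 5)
Your proposal is correct and follows essentially the same route as the paper: apply \cref{thm:distillable-randomness-with-gamma-set}, choose the feasible tensor-power operator $(\rho_X\otimes\rho_B)^{\otimes n}$ (feasible since $\gamma(\rho_X\otimes\rho_B)=1$ by Proposition~4 of \cite{lami2022upper}), and invoke \cref{thm:second-order-Dmin}. Your added attention to the hypotheses of \cref{thm:second-order-Dmin} (the support condition and the non-degeneracy $V(X;B)_\rho>0$, with a perturbation fallback) is a reasonable extra precaution that the paper's own proof leaves implicit.
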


\begin{IEEEproof}
Applying \cref{thm:distillable-randomness-with-gamma-set}, consider that
\begin{align}
     & \frac{1}{n}    R^\varepsilon_{\leftrightarrow}(\rho^{\otimes n}_{XB})   \notag \\
     & \leq \frac{1}{n} \Gamma^\varepsilon_{\min,F}\! \left(X^n;B^n \right)_{\rho_{XB}^{\otimes n}} \\ 
     & \leq \frac{1}{n} D^\varepsilon_{\min,F}\! \left(\rho^{\otimes n}_{XB} \Vert (\rho_X \otimes \rho_B)^{\otimes n} \right)  \\
     & = D(\rho_{XB} \Vert \rho_X \otimes \rho_B) \notag \\ 
    & \quad +  \sqrt{\frac{1}{n} V(\rho_{XB} \Vert \rho_X \otimes \rho_B)} \ \Phi^{-1}(\varepsilon) + O\!\left( \frac{\log n}{n} \right) \\ 
    & = I(X;B)_\rho +\sqrt{\frac{1}{n} V(X;B)_{\rho}} \ \Phi^{-1}(\varepsilon)   + O\!\left( \frac{\log n}{n} \right),
\end{align}
where the second inequality follows because $\gamma(\rho_X \otimes \rho_B)=1$ for product states \cite[Proposition~4]{lami2022upper}, the first equality by applying \cref{thm:second-order-Dmin}, and the last equality from the definitions of mutual information and mutual information variance. 
\end{IEEEproof}

\begin{proposition}[Lower bound for uniform cq states] \label{prop:second-order-expansion-CR-some-cq-lower}
Fix $\varepsilon \in (0,1)$. For a cq state of the form 
\begin{equation}
    \rho_{XB} \coloneqq \frac{1}{M} \sum_{x=1}^M |x \rangle\!\langle x| \otimes \rho_B^x,
\end{equation}
 the 1W-LOCC-assisted distillable randomness of $\rho_{XB}$ satisfies the following:
    \begin{multline} 
        \frac{1}{n}    R^\varepsilon(\rho^{\otimes n}_{XB}) \geq  \\
        I(X;B)_\rho + \sqrt{\frac{1}{n} V(X;B)_\rho} \ \Phi^{-1}(\varepsilon) + O\!\left( \frac{\log n}{n} \right).
    \end{multline} 
\end{proposition}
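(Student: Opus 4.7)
The plan is to combine the one-shot lower bound of \cref{prop:second-order-lower-bound-special-cq} with the known second-order expansion of the smooth min-relative entropy (equivalently, the smooth min-mutual information defined in \eqref{eq:smooth-min-MI}) applied to the i.i.d.\ tensor-power state $\rho_{XB}^{\otimes n}$. First I would observe that $\rho_{XB}^{\otimes n}$ still has the uniform cq form required by \cref{prop:second-order-lower-bound-special-cq}: indeed, it can be rewritten as
\begin{equation}
\rho_{X^n B^n}^{\otimes n} = \frac{1}{M^n}\sum_{\vec{x}\in\{1,\ldots,M\}^n} |\vec{x}\rangle\!\langle \vec{x}|_{X^n}\otimes \rho_{B^n}^{\vec{x}},
\end{equation}
where $\rho_{B^n}^{\vec{x}} \coloneqq \rho_B^{x_1}\otimes\cdots\otimes\rho_B^{x_n}$, so the proposition is directly applicable at blocklength~$n$.

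Next, I would choose a smoothing parameter $\eta_n$ vanishing with $n$, specifically $\eta_n \coloneqq 1/\sqrt{n}$, which lies in $(0,\varepsilon)$ for all $n$ sufficiently large. Then \cref{prop:second-order-lower-bound-special-cq} applied to $\rho_{XB}^{\otimes n}$ yields
\begin{equation}
R^\varepsilon\!\left(\rho_{XB}^{\otimes n}\right) \geq \left\lfloor I^{\varepsilon - \eta_n}_{\min}(X^n;B^n)_{\rho^{\otimes n}} - \log_2\!\left(\frac{4\varepsilon}{\eta_n^2}\right)\right\rfloor.
\end{equation}
Since $\log_2(4\varepsilon/\eta_n^2) = \log_2(4\varepsilon n) = O(\log n)$ and the floor drops at most one unit, both contributions are absorbed into the $O(\log n / n)$ term after dividing by $n$.

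It then suffices to invoke the second-order expansion of the smooth min-relative entropy from \eqref{eq:hypothesis-testing-second-order}, applied to $\rho_{XB}$ and the product state $\rho_X \otimes \rho_B$, which gives
\begin{equation}
\frac{1}{n}I^{\varepsilon - \eta_n}_{\min}(X^n;B^n)_{\rho^{\otimes n}} = I(X;B)_\rho + \sqrt{\frac{1}{n} V(X;B)_\rho}\,\Phi^{-1}(\varepsilon - \eta_n) + O\!\left(\frac{\log n}{n}\right),
\end{equation}
provided $V(X;B)_\rho > 0$ (the degenerate case can be handled separately or simply excluded by hypothesis). Finally, a Taylor expansion of $\Phi^{-1}$ about $\varepsilon$ with the choice $\eta_n = 1/\sqrt{n}$, as invoked in the proof of \cref{thm:second-order-Dmin}, gives $\Phi^{-1}(\varepsilon - \eta_n) = \Phi^{-1}(\varepsilon) + O(1/\sqrt{n})$, so the resulting perturbation of the variance term contributes only $O(1/n)$, again absorbed into $O(\log n / n)$. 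Assembling the pieces yields the claimed lower bound.

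There is no substantial obstacle here; the argument is a direct concatenation of (i) the one-shot position-based-coding lower bound and (ii) the Tomamichel--Hayashi/Li second-order expansion of the smooth min-relative entropy. The only mild care required is the bookkeeping of the $\log_2(4\varepsilon/\eta_n^2)$ penalty and the Taylor expansion of $\Phi^{-1}$, both of which collapse into the $O(\log n / n)$ remainder for the chosen $\eta_n = 1/\sqrt{n}$.
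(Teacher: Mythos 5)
Your proposal is correct and follows essentially the same route as the paper: apply the one-shot position-based-coding bound of \cref{prop:second-order-lower-bound-special-cq} to $\rho_{XB}^{\otimes n}$ with $\eta = 1/\sqrt{n}$, expand $I_{\min}^{\varepsilon-\eta}(X^n;B^n)_{\rho^{\otimes n}}$ via the second-order expansion \eqref{eq:hypothesis-testing-second-order} of the smooth min-relative entropy relative to $\rho_X\otimes\rho_B$, and absorb the $\log_2(4\varepsilon/\eta^2)$ penalty, the floor, and the Taylor shift of $\Phi^{-1}$ into the $O(\log n/n)$ remainder. Your explicit remark that $\rho_{XB}^{\otimes n}$ retains the uniform cq form, and your caveat about $V(X;B)_\rho>0$, are both sound and merely make explicit what the paper leaves implicit.
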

\begin{IEEEproof}
   First by  applying \cref{prop:second-order-lower-bound-special-cq} for the state $\rho_{XB}^{\otimes n}$ and choosing $\eta = 1/\sqrt{n}$, consider that
   \begin{align}
       & \frac{1}{n}    R^\varepsilon(\rho^{\otimes n}_{XB})  \notag \\ 
       & \geq \frac{1}{n} I_{\min}^{\varepsilon -\eta}(X^n;B^n)_{\rho^{\otimes n}} - \frac{1}{n} \log \!\left( \frac{4 \varepsilon}{\eta^2}  \right) -\frac{1}{n}\\
       &= D(\rho_{XB} \Vert \rho_X \otimes \rho_B)  + \sqrt{\frac{1}{n} V(\rho_{XB} \Vert \rho_X \otimes \rho_B)} \ \Phi^{-1}(\varepsilon-\eta) \notag \\
       & \qquad \qquad + O\!\left( \frac{\log n}{n} \right) - \frac{1}{n} \log \!\left( \frac{4 \varepsilon}{\eta^2}  \right) \\
 & = I(X;B)_\rho + \sqrt{\frac{1}{n} V(X;B)_\rho} \ \Phi^{-1}(\varepsilon) + O\!\left( \frac{\log n}{n} \right),
   \end{align}
   where the first equality follows from \eqref{eq:lower-bnd-1-shot-dist-rand-uniform} and the second equality by \eqref{eq:hypothesis-testing-second-order}. For the latter, we choose $n$ sufficiently large so that  $\eta=1/\sqrt{n} \in(0,\varepsilon)$  and invoke a standard step in \cite[Footnote~6]{tomamichel2013hierarchy} applied to $\Phi^{-1}(\varepsilon- \eta)$, which follows from Taylor's theorem: for $f$ continuously differentiable, $c$ is a positive constant, and $n \geq n_0$, the following equality holds
   \begin{equation}
       \sqrt{n} f(x- c/ \sqrt{n})=\sqrt{n} f(x) - c f'(a)
   \end{equation}
   for some $a \in [x-c/\sqrt{n},x]$ (note that  the reasoning we used to arrive at \eqref{eq:upperbound-second-order}, in the proof of \cref{thm:second-order-Dmin} is a special case of this argument).
    The last equality follows from the definitions of mutual information and mutual information variance.  
\end{IEEEproof}

\begin{theorem}
\label{thm:second-order-expansion-CR-some-cq}
Fix $\varepsilon \in (0,1)$. For a cq state of the form 
\begin{equation} \label{eq:cq-form}
    \rho_{XB} \coloneqq \frac{1}{M} \sum_{x=1}^M |x \rangle\!\langle x| \otimes \rho_B^x,
\end{equation}
 the 1W-LOCC-assisted and LOCC-assisted distillable randomness of $\rho_{XB}$ satisfy the following:
    \begin{multline} 
        \frac{1}{n}    R^\varepsilon(\rho^{\otimes n}_{XB}) = \frac{1}{n}    R_{\leftrightarrow}^\varepsilon(\rho^{\otimes n}_{XB})  \\ = I(X;B)_\rho + \sqrt{\frac{1}{n} V(X;B)_\rho} \ \Phi^{-1}(\varepsilon) + O\!\left( \frac{\log n}{n} \right).
    \end{multline}     
\end{theorem}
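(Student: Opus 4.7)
The plan is to sandwich the two quantities between matching upper and lower bounds that have already been established, and then invoke the elementary inequality \eqref{eq: inequality-1W-and-General} to collapse both sides to the same second-order expansion. Concretely, the 1W-LOCC-assisted distillable randomness is always a lower bound on the LOCC-assisted distillable randomness, so
\begin{equation}
    R^{\varepsilon}(\rho_{XB}^{\otimes n}) \leq R_{\leftrightarrow}^{\varepsilon}(\rho_{XB}^{\otimes n}).
\end{equation}
Therefore it suffices to show that these two quantities agree up to the claimed $O(\log n / n)$ correction by bounding the larger one from above and the smaller one from below by the same second-order expression.

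For the upper bound on $R_{\leftrightarrow}^{\varepsilon}(\rho_{XB}^{\otimes n})$, I would directly invoke \cref{prop:Upper-Bound-Second-Order-CR}, which yields
\begin{multline}
    \frac{1}{n} R_{\leftrightarrow}^\varepsilon(\rho_{XB}^{\otimes n}) \\
    \leq I(X;B)_\rho + \sqrt{\tfrac{1}{n} V(X;B)_\rho} \, \Phi^{-1}(\varepsilon) + O\!\left(\tfrac{\log n}{n}\right).
\end{multline}
For the matching lower bound on $R^{\varepsilon}(\rho_{XB}^{\otimes n})$, I would invoke \cref{prop:second-order-expansion-CR-some-cq-lower}, which applies precisely to uniform cq states of the form \eqref{eq:cq-form} and gives
\begin{multline}
    \frac{1}{n} R^\varepsilon(\rho_{XB}^{\otimes n}) \\
    \geq I(X;B)_\rho + \sqrt{\tfrac{1}{n} V(X;B)_\rho} \, \Phi^{-1}(\varepsilon) + O\!\left(\tfrac{\log n}{n}\right).
\end{multline}

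Chaining these three inequalities, namely the lower bound, the inequality $R^{\varepsilon} \leq R_{\leftrightarrow}^{\varepsilon}$, and the upper bound, forces both $\tfrac{1}{n} R^{\varepsilon}(\rho_{XB}^{\otimes n})$ and $\tfrac{1}{n} R_{\leftrightarrow}^{\varepsilon}(\rho_{XB}^{\otimes n})$ to coincide with $I(X;B)_\rho + \sqrt{\tfrac{1}{n} V(X;B)_\rho} \, \Phi^{-1}(\varepsilon)$ up to $O(\log n / n)$. Since the preceding propositions have already done the substantive work (the upper bound via \cref{thm:distillable-randomness-with-gamma-set}, the fact that $\gamma(\rho_X \otimes \rho_B) = 1$, and the second-order expansion of $D^{\varepsilon}_{\min,F}$ in \cref{thm:second-order-Dmin}; the lower bound via position-based coding with square-root measurements and the second-order expansion of $D^{\varepsilon}_{\min}$ in \eqref{eq:hypothesis-testing-second-order}), there is no real obstacle here: the proof is a short sandwich argument that exploits the uniformity assumption on the classical marginal $\rho_X$, which is what lets the achievability half match the converse half. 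The only subtlety worth noting is that the uniformity of $\rho_X$ is essential, since the achievability argument in \cref{prop:second-order-lower-bound-special-cq} relies on the classical communication cost equaling the distillable randomness of the registers Alice sends over; without uniformity one would generally lose in the subtraction step that defines the net distillable rate.
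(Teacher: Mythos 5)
Your proposal is correct and follows essentially the same route as the paper: the paper's proof also invokes \eqref{eq: inequality-1W-and-General} to relate the two quantities, applies \cref{prop:Upper-Bound-Second-Order-CR} for the upper bound, and matches it with the lower bound from \cref{prop:second-order-expansion-CR-some-cq-lower}. Your additional remark on why uniformity of $\rho_X$ is essential is a sound observation consistent with the paper's discussion following the theorem.
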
 

\begin{IEEEproof}
   First by applying \eqref{eq: inequality-1W-and-General}, we get 
   \begin{equation}
       \frac{1}{n}    R^\varepsilon(\rho^{\otimes n}_{XB})  \leq \frac{1}{n}    R_{\leftrightarrow}^\varepsilon(\rho^{\otimes n}_{XB}).
   \end{equation}
   Then, by applying \cref{prop:Upper-Bound-Second-Order-CR}, we arrive at the desired upper bound. By obtaining a matching lower bound from \cref{prop:second-order-expansion-CR-some-cq-lower}, we conclude the proof. 
\end{IEEEproof}

\medskip We suspect that \cref{thm:second-order-expansion-CR-some-cq} holds more generally for all cq states. To establish this finding, it seems that one would need to devise a protocol that has comparable performance to that given in the proof of \cref{prop:second-order-lower-bound-special-cq}.

\begin{remark}[Impact of feedback on distillable randomness]
\cref{thm:second-order-expansion-CR-some-cq} indicates that feedback does not improve the distillable randomness of a cq state of the form in \eqref{eq:cq-form}, even up to the second order for this class of states. This finding is in distinction to the findings of \cite{WNA20} for channel coding, in which it was shown that feedback can improve the classical communication rate up to the second order for channels with compound dispersion. The example in \cite[Remark~1]{WNA20} provides an interesting case study. If we choose the uniform distribution over the six channel input symbols, this leads to a fixed bipartite classical state shared between Alice and Bob, for which its distillable randomness cannot be improved by feedback, even up to the second order. However, in the channel coding setting, the sender can adjust the input distribution based on feedback from the receiver, and this is the mechanism underlying the improved performance found in~\cite{WNA20}.
\end{remark}

\begin{figure}
    \centering
    \includegraphics[width=\linewidth]{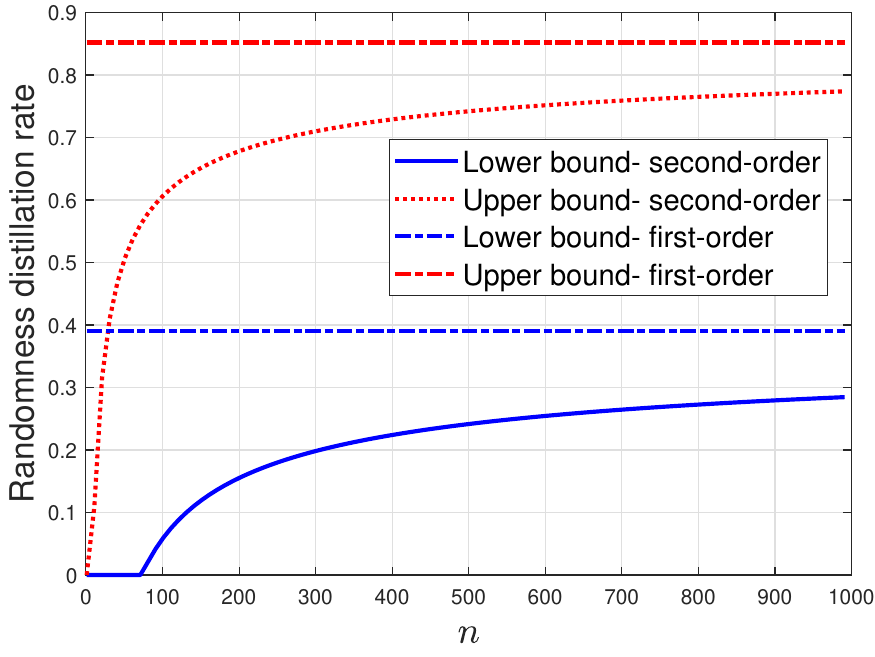}
    \caption{For fixed $\varepsilon=0.0001$, the plots depicts upper and lower bounds on the randomness distillation rate of an isotropic state $\rho_{AB}= (1-p)\Phi^d_{AB} +p \frac{I_{AB}}{d^2}$ with $p=0.3$ and $d=2$. The horizontal lines depict asymptotic values of the upper and lower bounds, while the curves depict the lower bound from \cref{prop:second-order-expansion-CR-some-cq-lower} and the upper bound from \cref{thm:Upper-Bound-Second-Order-general}.}
    \label{fig:secondOrder-isotropic}
\end{figure}

We now evaluate the upper bound from \cref{thm:Upper-Bound-Second-Order-general} and the lower bound from \cref{prop:second-order-expansion-CR-some-cq-lower} for a particular example. Suppose that Alice and Bob share an isotropic state of the form $\rho^{(d,p)}_{AB} \coloneqq (1-p)\Phi^d_{AB} + p \frac{I_{AB}}{d^2}$,
where
\begin{equation}
    \Phi^d_{AB} \coloneqq \frac{1}{d} \sum_{i,j} |i\rangle\!\langle j|_A \otimes |i\rangle\!\langle j|_B.
\end{equation}
One scheme for obtaining a lower bound on the distillable randomness of this state, as considered in \cite{lami2022upper}, is for Alice to  measure her system $A$ in the computational basis. Note that this measurement procedure has the same effect as applying a completely dephasing channel on system $A$. The resulting state is of the form in \eqref{eq:cq-form}, and so then \cref{prop:second-order-expansion-CR-some-cq-lower} applies for obtaining a lower bound on its distillable randomness. In \cref{fig:secondOrder-isotropic}, we show how the upper bound (from \cref{thm:Upper-Bound-Second-Order-general}) and the lower bound (from \cref{prop:second-order-expansion-CR-some-cq-lower}) on the LOCC-assisted distillable randomness vary within the finite $n$ regime for fixed $\varepsilon=0.0001$, when choosing $\rho_{AB}^{(d,p)}$ with $p=0.3$ and $d=2$.

\subsection{Applications to General Resource Theories}

\label{sec:apps-gen-res}

In this section, we discuss how our approach can be generalized beyond randomness distillation, to the distillation of mixed states in a general resource theory (see \cite{chitambar2019quantum} for a review). Let us consider the following general scenario: let~$\mathbb{O}$ denote the set of free channels, and let $\mathbb{F}$ denote the set of free states. Suppose that the goal of a protocol is to start from an arbitrary state $\rho$ and apply a free channel $\mathcal{M} \in \mathbb{O}$, in order to approximately distill a state from the set $\{\tau^d\}_{d \in \mathbb{Z}^+}$, where $d$ denotes the amount of the resource being distilled (a concrete example of such a set is the set of maximally classical correlated states, defined from \eqref{eq:max-class-corr-def}).  The one-shot distillable resource of $\rho$ is then defined for $\varepsilon\in[0,1]$ as follows:
\begin{equation}
    G^\varepsilon(\rho) \coloneqq \sup_{\mathcal{M} \in \mathbb{O}} \{\log_2 d : F(\mathcal{M}(\rho), \tau^d) \geq 1-\varepsilon   \}.
\end{equation} 

 Now suppose that the following generalization of \eqref{eq: gamma 1/d bound} holds:
\begin{equation}\label{eq:general-d-bound}
    \sup_{\sigma \in \mathbb{F}} F(\tau^d , \sigma) \leq \frac{1}{d}.
\end{equation}
Then the ideas presented in  \cref{Sec: Randomness Distillation} can be extended to this more general scenario. 
In particular, we arrive at the following upper bound on the one-shot distillable resource: 
\begin{equation}
  G^\varepsilon(\rho) \leq   \inf_{\sigma \in \mathbb{F}} D^\varepsilon_{\min,F}(\rho \Vert \sigma),
\end{equation}
which follows because
\begin{align}
    \log_2 d  & \leq \inf_{\sigma' \in \mathbb{F}} -\log_2 F(\tau^d , \sigma') \\
    & \leq \inf_{\sigma' \in \mathbb{F}} D^\varepsilon_{\min,F}\!\left(\cM(\rho) \Vert \sigma'\right) \\
    & \leq \inf_{\sigma \in \mathbb{F}} D^\varepsilon_{\min,F}\!\left(\cM(\rho) \Vert \cM(\sigma)\right) \\
    & \leq \inf_{\sigma \in \mathbb{F}} D^\varepsilon_{\min,F}(\rho \Vert \sigma),
\end{align}
where the first inequality follows from \eqref{eq:general-d-bound}, the second from the definition of the smooth $F$-min-relative entropy, the third inequality from the assumption that~$\mathcal{M}$ is a free channel (thus preserving the set of free states), and the last inequality from the data-processing inequality for the smooth $F$-min-relative entropy (\cref{thm:data-processing}). 

We can also obtain a second-order upper bound on the asymptotic distillable resource, by following the approach from \cref{sec:2nd-order-expansions}. Indeed,  let us 
 define the following quantities: 
\begin{equation} \label{def:first-order-term-gen}
    D_{\mathbb{F}}(\rho) \coloneqq \inf_{\sigma\in \mathbb{F} } D(\rho \Vert \sigma),
\end{equation}
where $D(\rho \Vert \sigma)$ is the quantum relative entropy from~\eqref{eq:relative-entropy-def}.
Then, denote $\Pi_{\mathbb{F}} \subseteq \mathbb{F} $ as the set of states achieving the infimum in $D_{\mathbb{F}}(\rho)$. From that, we define the following variance quantity: 
\begin{equation}
\label{def:second-order-term-gen}
       V_{\mathbb{F}}^\varepsilon(\rho) \coloneqq 
  \begin{cases} \inf_{\sigma \in \Pi_{\mathbb{F}} } V(\rho \Vert \sigma)& \mbox{if} \ \varepsilon \geq \frac{1}{2},\\
 \sup_{\sigma \in \Pi_{\mathbb{F}}}  V(\rho \Vert \sigma) & \mbox{if} \ \varepsilon < \frac{1}{2}.
\end{cases}
\end{equation}
 Fix $\varepsilon \in (0,1)$.   Then the distillable resource of a  state $\rho$ is bounded from above as follows: 
    \begin{equation} 
         \frac{1}{n}    G^\varepsilon(\rho^{\otimes n})   \leq 
         D_{\mathbb{F}}(\rho) + \sqrt{\frac{1}{n}V_{\mathbb{F}}^\varepsilon(\rho)} \ \Phi^{-1}(\varepsilon) +  O\!\left( \frac{\log n}{n} \right).
    \end{equation}

\section{Computational Analysis} \label{Sec:Computational-analysis}

In this section, we provide techniques based on semi-definite programs (SDPs) to quantify the smooth $F$-min-relative entropy and other related quantities, including the smooth max-relative entropy and smooth conditional min-entropy. 

First, we present a bilinear program to evaluate the smooth $F$-min-relative entropy, which follows from the SDP formulation of fidelity in \cite{watrous2012simpler}. 

\begin{proposition}
\label{prop:compute-smooth-min-relative-entropy}
Given a state $\rho$, a PSD operator $\sigma$, and $\varepsilon\in(0,1)$, the smooth $F$-min-relative entropy can be written in terms of the following
optimization:
\begin{equation}
    D^\varepsilon_{\min,F}(\rho \Vert \sigma) = -2 \log_2 a^\star,
\end{equation}
where 
\begin{align}
a^\star 
& =\frac{1}{2}\inf_{\substack{\widetilde{\rho}\geq0,\\Y,Z\geq0,\\X\in
\mathcal{L}(\mathcal{H})}}\left\{
\begin{array}
[c]{c}%
\operatorname{Tr}[Y\widetilde{\rho}]+\operatorname{Tr}[Z\sigma]:\\%
\begin{bmatrix}
Y & I\\
I & Z
\end{bmatrix}
\geq0,\\
\operatorname{Re}[\operatorname{Tr}[X]]\geq\sqrt{1-\varepsilon},\\%
\begin{bmatrix}
\widetilde{\rho} & X\\
X^\dag & \rho
\end{bmatrix}
\geq0,\\
\operatorname{Tr}[\widetilde{\rho}]\leq1
\end{array}
\right\}  .
\label{eq:bilinear-prog-smooth-f-min}
\end{align}
\end{proposition}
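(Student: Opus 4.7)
The plan is to reduce the claim to Watrous's semi-definite program formulations of the (root) fidelity from \cite{watrous2012simpler}, which, for PSD operators $P$ and $Q$, state that
\begin{equation}
\sqrt{F(P,Q)} = \sup_X \left\{ \operatorname{Re}[\operatorname{Tr}[X]] : \begin{bmatrix} P & X \\ X^\dag & Q \end{bmatrix} \geq 0 \right\}
\end{equation}
and, by strong duality,
\begin{equation}
\sqrt{F(P,Q)} = \frac{1}{2} \inf_{Y,Z \geq 0} \left\{ \operatorname{Tr}[YP] + \operatorname{Tr}[ZQ] : \begin{bmatrix} Y & I \\ I & Z \end{bmatrix} \geq 0 \right\}.
\end{equation}
These two forms linearize $\sqrt{F}$ from opposite directions, and the idea is to apply the primal formulation to the smoothing constraint and the dual formulation to the objective.

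First, I would use the primal SDP to handle the smoothing constraint. Since $F(\widetilde{\rho}, \rho) \geq 1-\varepsilon$ is equivalent to $\sqrt{F(\widetilde{\rho}, \rho)} \geq \sqrt{1-\varepsilon}$, the primal SDP implies that this inequality holds if and only if there exists an operator $X$ satisfying $\operatorname{Re}[\operatorname{Tr}[X]] \geq \sqrt{1-\varepsilon}$ together with the block PSD constraint $\begin{bmatrix} \widetilde{\rho} & X \\ X^\dag & \rho \end{bmatrix} \geq 0$. This accounts for the $X$-variable and the second block-matrix constraint in \eqref{eq:bilinear-prog-smooth-f-min}.

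Next, I would apply the dual SDP to the objective: for every fixed $\widetilde{\rho} \in \mathcal{D}_{\leq}$,
\begin{equation}
\sqrt{F(\widetilde{\rho}, \sigma)} = \frac{1}{2} \inf_{Y,Z \geq 0} \left\{ \operatorname{Tr}[Y\widetilde{\rho}] + \operatorname{Tr}[Z\sigma] : \begin{bmatrix} Y & I \\ I & Z \end{bmatrix} \geq 0 \right\}.
\end{equation}
Plugging this into the outer minimization over $\widetilde{\rho}$ and merging the nested infima into a single joint infimum over $(\widetilde{\rho}, X, Y, Z)$, the program in \eqref{eq:bilinear-prog-smooth-f-min} becomes
\begin{equation}
a^\star = \inf_{\widetilde{\rho} \in \mathcal{D}_{\leq}} \left\{ \sqrt{F(\widetilde{\rho}, \sigma)} : F(\widetilde{\rho}, \rho) \geq 1-\varepsilon \right\}.
\end{equation}
Squaring and taking $-\log_2$ then yields $-2 \log_2 a^\star = D^\varepsilon_{\min,F}(\rho \Vert \sigma)$, using monotonicity of $\sqrt{\cdot}$ and $\log_2$ on the non-negative reals.

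The reason the result is a \emph{bilinear} program, rather than a pure SDP, is precisely that the term $\operatorname{Tr}[Y \widetilde{\rho}]$ in the objective couples two matrix variables, $Y$ and $\widetilde{\rho}$, that are optimized jointly. The main subtlety to get right is the appeal to strong duality for the fidelity SDP: one must verify that $\sqrt{F(\widetilde{\rho}, \sigma)}$ can be legitimately replaced by its dual expression uniformly in $\widetilde{\rho}$ before absorbing it into the outer minimization. This follows directly from the analysis in \cite{watrous2012simpler}, where Slater's condition is satisfied for both the primal and the dual of the fidelity SDP. Beyond this justification, the derivation is a direct substitution.
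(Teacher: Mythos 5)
Your proposal is correct and follows essentially the same route as the paper's proof: the constraint $F(\widetilde{\rho},\rho)\geq 1-\varepsilon$ is linearized via the primal (supremum) SDP for root fidelity, the objective $\sqrt{F}(\widetilde{\rho},\sigma)$ is replaced by its dual (infimum) SDP, and the nested infima are merged into the joint bilinear program, with the final step being the identity $D^\varepsilon_{\min,F}(\rho\Vert\sigma)=-\log_2 (a^\star)^2$. Your remark on justifying the uniform-in-$\widetilde{\rho}$ substitution of the dual fidelity SDP via Slater's condition is the same implicit appeal to \cite{watrous2012simpler} made in the paper.
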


\begin{IEEEproof}
    From the definition of the smooth $F$-min relative entropy, consider that
    \begin{equation}
    D^\varepsilon_{\min,F}(\rho \Vert \sigma) = -\log_2 (a^\star)^2,  
    \label{eq:to-compute-smooth-min-relative-entropy}
    \end{equation}
    where
    \begin{equation}
        a^\star  \coloneqq \inf_{\widetilde{\rho}\in\mathcal{D}_{\leq}}\left\{  \sqrt{F}(\widetilde{\rho
},\sigma):\sqrt{F}(\widetilde{\rho},\rho)\geq\sqrt{1-\varepsilon}\right\}.
\label{eq:a-star-def-orig}
    \end{equation}
    Recall that the root fidelity 
    $
       \sqrt{F}(\rho,\sigma) \coloneqq \left\| \sqrt{\rho} \sqrt{\sigma}\right\|_1  
    $
    has the following primal and dual SDP characterizations \cite{watrous2012simpler} (see also \cite[Proposition~6.6]{khatri2020principles}): 
    \begin{align}
         \sqrt{F}(\rho,\sigma) 
         & =\sup_{X\in
\mathcal{L}(\mathcal{H})}\left\{
\operatorname{Re}[\operatorname{Tr}[X]]:%
\begin{bmatrix}
\rho & X\\
X^\dag & \sigma
\end{bmatrix}
\geq0
\right\} \\ 
&= \frac{1}{2}\inf_{Y,Z \geq 0}\left\{
\Tr[Y \rho]+ \Tr[Z \sigma]:%
\begin{bmatrix}
Y & I \\
I & Z
\end{bmatrix}
\geq0
\right\}.
\label{eq:dual-SDP-fid-inf}
\end{align}
We then find the following expression for the  negative root fidelity:
\begin{align}
 \!\!\!\!\!\!\!\! -\sqrt{F}(\rho,\widetilde{\rho}) &  =-\sup_{X\in\mathcal{L}(\mathcal{H}%
)}\left\{  \operatorname{Re}[\operatorname{Tr}[X]]:%
\begin{bmatrix}
\rho & X\\
X^{\dag} & \widetilde{\rho}%
\end{bmatrix}
\geq0\right\}  \\
&  =\inf_{X\in\mathcal{L}(\mathcal{H})}\left\{  -\operatorname{Re}%
[\operatorname{Tr}[X]]:%
\begin{bmatrix}
\rho & X\\
X^{\dag} & \widetilde{\rho}%
\end{bmatrix}
\geq0\right\} . \label{eq:dualSDP_fidelity-negative}
\end{align}
With these expressions in hand, consider from \eqref{eq:a-star-def-orig} and \eqref{eq:dual-SDP-fid-inf} that
\begin{align}
 a^\star 
& =
\frac{1}{2}\inf_{\widetilde{\rho}%
\geq0,Y,Z \geq 0}\left\{
\begin{array}
[c]{c}%
\Tr[Y\widetilde{\rho}] +\Tr[Z \rho] : \\
\sqrt{F}(\widetilde{\rho},\rho) \geq \sqrt{1-\varepsilon}, \\
\begin{bmatrix}
Y & I\\
I & Z%
\end{bmatrix}
\geq0,\\
\Tr[\widetilde{\rho}]\leq 1
\end{array}
\right\}  \\
 &  =
\frac{1}{2}\inf_{\widetilde{\rho}%
\geq0,Y,Z \geq 0}\left\{
\begin{array}
[c]{c}%
\Tr[Y\widetilde{\rho}] +\Tr[Z \rho] : \\
-\sqrt{F}(\widetilde{\rho},\rho) \leq  -\sqrt{1-\varepsilon}, \\
\begin{bmatrix}
Y & I\\
I & Z%
\end{bmatrix}
\geq0,\\
\Tr[\widetilde{\rho}]\leq 1
\end{array}
\right\} \\
& = \frac{1}{2}\inf_{\substack{\widetilde{\rho}\geq0,\\Y,Z\geq0,\\X\in
\mathcal{L}(\mathcal{H})}}\left\{
\begin{array}
[c]{c}%
\operatorname{Tr}[Y\widetilde{\rho}]+\operatorname{Tr}[Z\sigma]:\\%
\begin{bmatrix}
Y & I\\
I & Z
\end{bmatrix}
\geq0,\\
-\operatorname{Re}[\operatorname{Tr}[X]]\leq-\sqrt{1-\varepsilon},\\%
\begin{bmatrix}
\widetilde{\rho} & X\\
X^\dag & \rho
\end{bmatrix}
\geq0,\\
\operatorname{Tr}[\widetilde{\rho}]\leq1
\end{array}
\right\} .
\end{align}
In the last line, we replaced the inequality $-\sqrt{F}(\widetilde{\rho},\rho) \leq -\sqrt{1-\varepsilon}$ with the optimization in  \eqref{eq:dualSDP_fidelity-negative}, to conclude the proof. 
\end{IEEEproof}

Note that the optimization problem in \eqref{eq:bilinear-prog-smooth-f-min} is not an SDP, but it is rather a bilinear program, due to the bilinear term $\Tr[Y\widetilde{\rho}]$ in the objective function in \eqref{eq:bilinear-prog-smooth-f-min}. Thus, to estimate the smooth $F$-min relative entropy, we can employ the seesaw or mountain-climbing algorithm \cite{SeeSawkonno1976cutting}, which results in a lower bound on the smooth $F$-min-relative entropy.

We apply the seesaw algorithm as follows. Set $k=0$. For fixed $\widetilde{\rho}=\widetilde{\rho}_k$, the optimization in \eqref{eq:bilinear-prog-smooth-f-min} is an SDP with the additional constraint $\widetilde{\rho}=\widetilde{\rho}_k$. By solving that SDP, we can find a $Y$ that achieves the optimum for the objective function, which we denote as $Y_k$. Next, by fixing $Y=Y_k$, we solve the respective SDP and find the optimum $\widetilde{\rho}_{k+1}$. Then, this iterative process is continued for a fixed number of iterations. In a finite (yet not necessarily polynomial) number of iterations, the algorithm converges to the optimum value \cite{SeeSawkonno1976cutting}. In practice, it is guaranteed that we arrive at an upper bound on \eqref{eq:bilinear-prog-smooth-f-min} by this method, which will in turn provide a lower bound on the smooth $F$-min-relative entropy. Note that one can employ the more advanced algorithm from \cite{Bilinear_huber2019jointly} as well for this purpose.

We use the seesaw method described above to investigate the tightness of the lower bound from \cref{prop:connect-smooth-max}. In particular, we obtain various upper bounds for the smooth $F$-min-relative entropy by varying $\delta$ in 
\begin{equation}
\label{eq:connection-Dmax-delta}
    D^\varepsilon_{\min,F} (\rho  \Vert \sigma) 
\leq  D_{\max}^{1-\varepsilon-\delta}(\rho\Vert\sigma)+ \log_{2} \!\left(
\frac{1}{1-f(\varepsilon,\delta)}\right), 
\end{equation} 
where $f(\varepsilon,\delta)$ is given in \eqref{eq:f(eps,delta)}. 

To compute the upper bound in \eqref{eq:connection-Dmax-delta}, it is required to compute the smooth max-relative entropy. For this purpose, we derive an SDP\ for the smooth max-relative entropy with
fidelity smoothing, which may be of independent interest. We consider two variants of that: $\widehat{D}_{\max}^{\varepsilon}(\rho\Vert\sigma)$ with smoothing over normalized states and ${D}_{\max}^{\varepsilon}(\rho\Vert\sigma)$ with smoothing over sub-normalized states. Then, we use the latter variant to compute the right-hand side of \eqref{eq:connection-Dmax-delta}.

\begin{proposition} \label{prop: SDP for smooth max normalized}
The smooth max-relative entropy with fidelity smoothing over normalized states has the following
primal and dual SDP characterizations for a state $\rho$ and PSD operator~$\sigma$:
\begin{align}
\label{eq:SDP-for-smooth-max-fidelity}
& \widehat{D}_{\max}^{\varepsilon}(\rho\Vert\sigma) \cr
& =
\log_{2}\inf_{\widetilde{\rho}%
\geq0,\lambda\geq0,X\in\mathcal{L}(\mathcal{H})}\left\{
\begin{array}
[c]{c}%
\lambda:\widetilde{\rho}\leq\lambda\sigma,\operatorname{Tr}[\widetilde{\rho
}]=1,\\
\operatorname{Re}[\operatorname{Tr}[X]]\geq\sqrt{1-\varepsilon},\\%
\begin{bmatrix}
\rho & X\\
X^{\dag} & \widetilde{\rho}%
\end{bmatrix}
\geq0
\end{array}
\right\}  \\
& = \log_{2}\sup_{W,\nu,Z\geq0,\mu
\in\mathbb{R}}\left\{
\begin{array}
[c]{c}%
\mu+2\nu\sqrt{1-\varepsilon}-\operatorname{Tr}[Z\rho]:\\
\operatorname{Tr}[W\sigma]\leq1,\\%
\begin{bmatrix}
Z & \nu I\\
\nu I & W-\mu I
\end{bmatrix}
\geq0
\end{array}
\right\}  .
\end{align}

\end{proposition}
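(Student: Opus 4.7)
The plan is to start from the definition in \eqref{eq:smooth-max-normalized-to-hypothesis} and substitute in the SDP characterizations of the two quantities involved. First I would recall the primal SDP for the max-relative entropy, namely $D_{\max}(\widetilde{\rho}\Vert\sigma) = \log_2 \inf\{\lambda \geq 0 : \widetilde{\rho} \leq \lambda \sigma\}$, so that after moving the logarithm outside the infimum the optimization variable $\lambda$ appears linearly. Second, I would rewrite the fidelity constraint $F(\widetilde{\rho},\rho) \geq 1-\varepsilon$ as $\sqrt{F}(\widetilde{\rho},\rho) \geq \sqrt{1-\varepsilon}$ and invoke Watrous's primal SDP for root fidelity \cite{watrous2012simpler} (already used in the proof of \cref{prop:compute-smooth-min-relative-entropy}):
\begin{equation}
\sqrt{F}(\rho,\widetilde{\rho}) = \sup_{X \in \mathcal{L}(\mathcal{H})}\!\left\{ \operatorname{Re}[\Tr[X]] : \begin{bmatrix} \rho & X \\ X^\dag & \widetilde{\rho}\end{bmatrix} \geq 0\right\}.
\end{equation}
Since $\sqrt{F}(\widetilde{\rho},\rho) \geq \sqrt{1-\varepsilon}$ holds if and only if there exists some feasible $X$ achieving $\operatorname{Re}[\Tr[X]] \geq \sqrt{1-\varepsilon}$, we may promote $X$ to an additional primal variable and append the linear constraint $\operatorname{Re}[\Tr[X]]\geq\sqrt{1-\varepsilon}$ together with the block-matrix PSD constraint. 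Combined with $\Tr[\widetilde{\rho}] = 1$ (enforcing the normalization defining $\widehat{D}_{\max}^{\varepsilon}$), this yields the primal expression in \eqref{eq:SDP-for-smooth-max-fidelity}.

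For the dual, I would proceed by the standard Lagrangian route. Introduce PSD multipliers $W$ for $\widetilde{\rho}\leq \lambda \sigma$ and $Z$ (together with the block structure) for the $2\times 2$ PSD constraint, a scalar $\nu\geq 0$ for $\operatorname{Re}[\Tr[X]]\geq \sqrt{1-\varepsilon}$, and a real scalar $\mu$ for the equality $\Tr[\widetilde{\rho}]=1$. Collecting the terms linear in each primal variable, the coefficient of $\lambda$ forces $\Tr[W\sigma] \leq 1$, the coefficient of $X$ forces the off-diagonal block of the $(Z,W-\mu I)$ matrix to equal $\nu I$ (the real part condition being absorbed since $X$ ranges over all of $\mathcal{L}(\mathcal{H})$), and the coefficient of $\widetilde{\rho}$ forces the lower-right block to equal $W-\mu I$. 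The constant terms collapse to $\mu + 2\nu \sqrt{1-\varepsilon} - \Tr[Z\rho]$, which is the claimed dual objective. The block PSD constraint $\begin{bmatrix} Z & \nu I \\ \nu I & W-\mu I \end{bmatrix}\geq 0$ then arises from combining the two off-diagonal and one diagonal conditions on the multipliers of the $2\times 2$ constraint.

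Finally, I would verify strong duality using Slater's condition: choosing $\widetilde{\rho}=\rho$, $\lambda$ sufficiently large so that $\rho < \lambda \sigma$ (assuming $\operatorname{supp}(\rho)\subseteq \operatorname{supp}(\sigma)$; the degenerate case can be handled separately), and $X$ a small perturbation of a Uhlmann-optimizer strictly satisfying $\operatorname{Re}[\Tr[X]] > \sqrt{1-\varepsilon}$, one obtains a strictly feasible primal point, so Slater's condition holds and the primal equals the dual. The main obstacle I anticipate is the careful bookkeeping in the Lagrangian computation: tracking which primal variable each dual multiplier pairs with, handling the real-part operation on $\Tr[X]$ cleanly (most naturally by writing $X = X_1 + iX_2$ with $X_1, X_2$ Hermitian or by symmetrizing), and ensuring that the equality constraint $\Tr[\widetilde{\rho}]=1$ produces an unconstrained real multiplier $\mu$ rather than a nonnegative one, which is what ultimately distinguishes the normalized-state SDP here from the sub-normalized variant treated in \cref{prop:SDP-for-smooth-max-with-sub-normalized}.
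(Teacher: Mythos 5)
Your plan follows essentially the same route as the paper: the primal is obtained exactly as you describe, by substituting the Watrous root-fidelity SDP for the constraint $\sqrt{F}(\widetilde{\rho},\rho)\geq\sqrt{1-\varepsilon}$ and promoting $X$ to a primal variable, and the dual is obtained by the standard mechanical duality computation (the paper writes the problem in standard conic form and applies the $\sup\{\Tr[AZ]:\Phi(Z)\leq B\}$ / $\inf\{\Tr[BY]:\Phi^{\dag}(Y)\geq A\}$ recipe, which is your Lagrangian bookkeeping in different notation; in particular your observation that the equality $\Tr[\widetilde{\rho}]=1$ yields an unconstrained real $\mu$ appears in the paper as $\mu=\mu_{1}-\mu_{2}$ with $\mu_{1},\mu_{2}\geq0$).

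The one place where your plan as written would not go through is the verification of strong duality. You propose to exhibit a \emph{strictly} feasible primal point. But the primal contains the equality constraint $\Tr[\widetilde{\rho}]=1$, which in the standard conic form is encoded as the pair of inequalities $\Tr[\widetilde{\rho}]\geq1$ and $\Tr[\widetilde{\rho}]\leq1$; these can never both hold strictly, so the primal is never strictly feasible in that formulation. Moreover, even setting that aside, your choices $\rho<\lambda\sigma$ and a strictly positive block matrix $\begin{bmatrix}\rho & X\\ X^{\dag}&\widetilde{\rho}\end{bmatrix}>0$ require $\sigma$ and $\rho$ to have full rank, which is not assumed. The paper sidesteps both issues by checking Slater's condition on the \emph{dual} side: it exhibits a merely feasible primal point ($\widetilde{\rho}=\rho$, $\lambda=2^{\widehat{D}_{\max}(\rho\Vert\sigma)}$, $X=\sqrt{1-\varepsilon}\,\rho$) together with a strictly feasible dual point ($\mu,\nu>0$ with $\mu+\nu<1/\Tr[\sigma]$, $W=(\mu+\nu)I$, $Z=\nu I$), which requires no support assumptions. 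You should adopt that version of the Slater argument; the rest of your outline matches the paper's proof.
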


\begin{IEEEproof}
    See Appendix~\ref{proof: SDP for smooth max normalized}.
\end{IEEEproof}

\medskip

Following a similar approach, we obtain an SDP for ${D}_{\max}^{\varepsilon}(\rho\Vert\sigma) $. 

\begin{proposition} \label{prop:SDP-for-smooth-max-with-sub-normalized}
The smooth max-relative entropy with fidelity smoothing over sub-normalized states has the following
primal and dual SDP characterizations for a state $\rho$, PSD operator~$\sigma$, and $\varepsilon\in(0,1)$:
\begin{align}
&{D}_{\max}^{\varepsilon}(\rho\Vert\sigma) \cr
& =
\log_{2}\inf_{\widetilde{\rho}%
\geq0,\lambda\geq0,X\in\mathcal{L}(\mathcal{H})}\left\{
\begin{array}
[c]{c}%
\lambda:\widetilde{\rho}\leq\lambda\sigma,\operatorname{Tr}[\widetilde{\rho
}] \leq 1,\\
\operatorname{Re}[\operatorname{Tr}[X]]\geq\sqrt{1-\varepsilon},\\%
\begin{bmatrix}
\rho & X\\
X^{\dag} & \widetilde{\rho}%
\end{bmatrix}
\geq0
\end{array}
\right\}  ,\\
&  = \log_{2}\sup_{W,\nu,Z,\mu \geq0}\left\{
\begin{array}
[c]{c}%
-\mu+2\nu\sqrt{1-\varepsilon}-\operatorname{Tr}[Z\rho]:\\
\operatorname{Tr}[W\sigma]\leq1,\\%
\begin{bmatrix}
Z & \nu I\\
\nu I & W+\mu I
\end{bmatrix}
\geq0
\end{array}
\right\}  .
\end{align}
\end{proposition}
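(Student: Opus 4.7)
The plan is to mirror the derivation used for \cref{prop: SDP for smooth max normalized}, with one structural modification: the normalization constraint $\operatorname{Tr}[\widetilde{\rho}] = 1$ is relaxed to the inequality $\operatorname{Tr}[\widetilde{\rho}] \leq 1$. This single change flows through the SDP and dictates every difference in the dual formulation.

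First, I would establish the primal form by unfolding definitions. Starting from \eqref{eq:smooth-dmax-def}, I would substitute the definition $D_{\max}(\widetilde{\rho}\Vert\sigma) = \log_2 \inf\{\lambda \geq 0 : \widetilde{\rho} \leq \lambda \sigma\}$, rewrite the fidelity constraint in its root form $\sqrt{F}(\widetilde{\rho},\rho) \geq \sqrt{1-\varepsilon}$, and plug in the primal SDP characterization of root fidelity from \cite{watrous2012simpler} (the same one invoked in the proof of \cref{prop:compute-smooth-min-relative-entropy}). The condition $\widetilde{\rho} \in \mathcal{D}_{\leq}$ expands to $\widetilde{\rho} \geq 0$ and $\operatorname{Tr}[\widetilde{\rho}] \leq 1$, producing the stated primal directly. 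Up to the single constraint $\operatorname{Tr}[\widetilde{\rho}] \leq 1$ versus $\operatorname{Tr}[\widetilde{\rho}] = 1$, this is identical to the primal in \cref{prop: SDP for smooth max normalized}.

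Second, I would derive the dual via Lagrangian duality. Introduce multipliers $W \geq 0$ for the constraint $\lambda \sigma - \widetilde{\rho} \geq 0$, a scalar $\mu \geq 0$ for $1 - \operatorname{Tr}[\widetilde{\rho}] \geq 0$, a scalar $2\nu \geq 0$ for $\operatorname{Re}[\operatorname{Tr}[X]] - \sqrt{1-\varepsilon} \geq 0$, and a PSD block multiplier $\bigl[\begin{smallmatrix} Z & \nu I \\ \nu I & Y \end{smallmatrix}\bigr] \geq 0$ for the block-matrix fidelity constraint (with off-diagonal block $\nu I$ forced by the stationarity condition in $X$, exactly as in the normalized case). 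Collecting the coefficients of $\lambda$, $\widetilde{\rho}$, and $X$ and requiring them to vanish (so that the inner infimum over the primal variables is finite) produces the dual constraints $\operatorname{Tr}[W\sigma] \leq 1$ and $Y = W + \mu I$ (the latter replacing $Y = W - \mu I$ from \cref{prop: SDP for smooth max normalized}, because the sign attached to $\mu$ flips when we go from an equality to an inequality constraint). The constant term $-\mu$ from the multiplier $\mu(1 - \operatorname{Tr}[\widetilde{\rho}])$ enters the dual objective with a minus sign, as recorded in the statement.

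Third, I would verify strong duality via Slater's condition. A strictly feasible primal point is obtained by choosing $\widetilde{\rho} = (1-\delta)\rho$ for small $\delta > 0$, taking $\lambda$ sufficiently large so that $\widetilde{\rho} < \lambda \sigma$ strictly on the support of $\sigma$, and setting $X = (1-\delta')\sqrt{\rho}\sqrt{\widetilde{\rho}}$ (or an analogous perturbation) so that the block matrix is strictly positive and $\operatorname{Re}[\operatorname{Tr}[X]] > \sqrt{1-\varepsilon}$; note that $\operatorname{Tr}[\widetilde{\rho}] < 1$ is strict here, which is precisely what the inequality formulation enables. This yields zero duality gap. The main obstacle I anticipate is the careful sign bookkeeping around $\mu$: because the inequality $\operatorname{Tr}[\widetilde{\rho}] \leq 1$ pins $\mu \geq 0$ (rather than $\mu \in \mathbb{R}$ as in \cref{prop: SDP for smooth max normalized}), both the objective term and the block-matrix inequality flip the sign of $\mu$ relative to the normalized proof, and I want to confirm these signs by checking that weak duality $a^\star_{\text{primal}} \geq a^\star_{\text{dual}}$ holds directly from the Lagrangian before invoking Slater's condition.
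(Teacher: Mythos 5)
Your route is the same as the paper's: the paper proves \cref{prop:SDP-for-smooth-max-with-sub-normalized} by deferring to the proof of \cref{prop: SDP for smooth max normalized} and dropping the constraint $\operatorname{Tr}[\widetilde{\rho}]\geq 1$, and your account of how that single change propagates through the dual---the multiplier for the trace constraint becomes one-sided ($\mu\geq 0$), the objective picks up $-\mu$, and the block constraint changes from $W-\mu I$ to $W+\mu I$---is exactly the adaptation intended. The primal derivation (unfolding $D_{\max}$, encoding $F(\widetilde{\rho},\rho)\geq 1-\varepsilon$ via the Watrous SDP for the root fidelity) and the sign bookkeeping are both correct.

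The one step that would fail as written is your Slater certificate on the primal side. You propose a strictly feasible primal point in which the $2\times 2$ block matrix with diagonal blocks $\rho$ and $\widetilde{\rho}$ is strictly positive definite and $\widetilde{\rho}<\lambda\sigma$ holds strictly; neither is achievable when $\rho$ or $\sigma$ is rank deficient (for instance, $\rho$ pure in dimension at least two forces that block matrix to be singular for every feasible $X$ and $\widetilde{\rho}$), so no strictly feasible primal point exists in general. The repair is the one the paper uses for the normalized variant: exhibit an ordinary feasible primal point (the choices $\widetilde{\rho}=\rho$, $X=\sqrt{1-\varepsilon}\,\rho$, and $\lambda=2^{D_{\max}(\rho\Vert\sigma)}$ carry over verbatim, since $\operatorname{Tr}[\rho]\leq 1$ is automatic) together with a \emph{strictly} feasible dual point built from multiples of the identity, and invoke Slater's condition from the dual side. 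With that substitution your argument goes through and matches the paper's.
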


\begin{IEEEproof}
    We omit the proof since it is similar to the proof of \cref{prop: SDP for smooth max normalized}, except that we remove the constraint $\Tr[\tilde{\rho}] \geq 1$. 
\end{IEEEproof}

\medskip
Utilizing the methods explained above (specifically in Propositions~\ref{prop:compute-smooth-min-relative-entropy} and \ref{prop:SDP-for-smooth-max-with-sub-normalized}), Figure~\ref{fig:UpperBoundLowerBound_D_2} shows the tightness of the lower bound on the smooth $F$-min-relative entropy of two random quantum states in a Hilbert space of dimension two, by employing the seesaw method for ten iterations. 
In addition, Figure~\ref{fig:UpperBoundLowerBound_D_4} presents the obtained results under the same setting, for two random quantum states in a Hilbert space of dimension four.

Due to the close link between smooth max-relative entropy and smooth conditional min-entropy, our SDP results apply to this latter quantity as well. We highlight this observation below, as it may be of independent interest for future work.

\begin{figure}
    \centering
    \includegraphics[width=\linewidth]{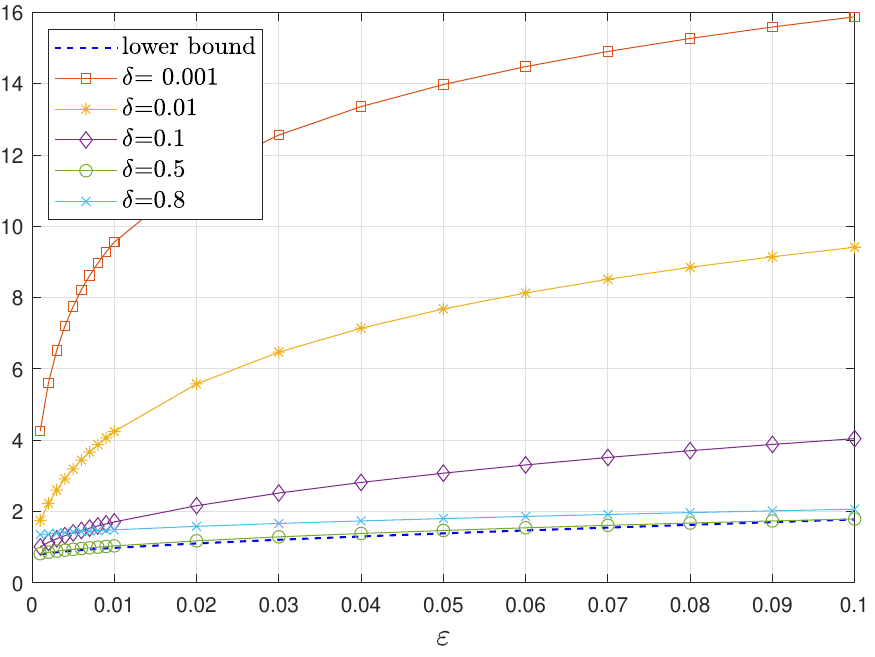}
    \caption{This plot shows the tightness of the lower bound on $D^\varepsilon_{\min,F}(\rho \Vert \sigma)$ obtained from running the seesaw algorithm for 10 iterations for each $\varepsilon$, where $\rho,\sigma$ are random quantum states in a Hilbert space of dimension two.  
    By changing $\delta$, the upper bound $D_{\max}^{1-\varepsilon-\delta}(\rho\Vert\sigma)+ \log_{2} \!\left(
\frac{1}{1-f(\varepsilon,\delta)}\right)$ is also shown. }
    \label{fig:UpperBoundLowerBound_D_2}
\end{figure}

\begin{figure}
    \centering
    \includegraphics[width=\linewidth]{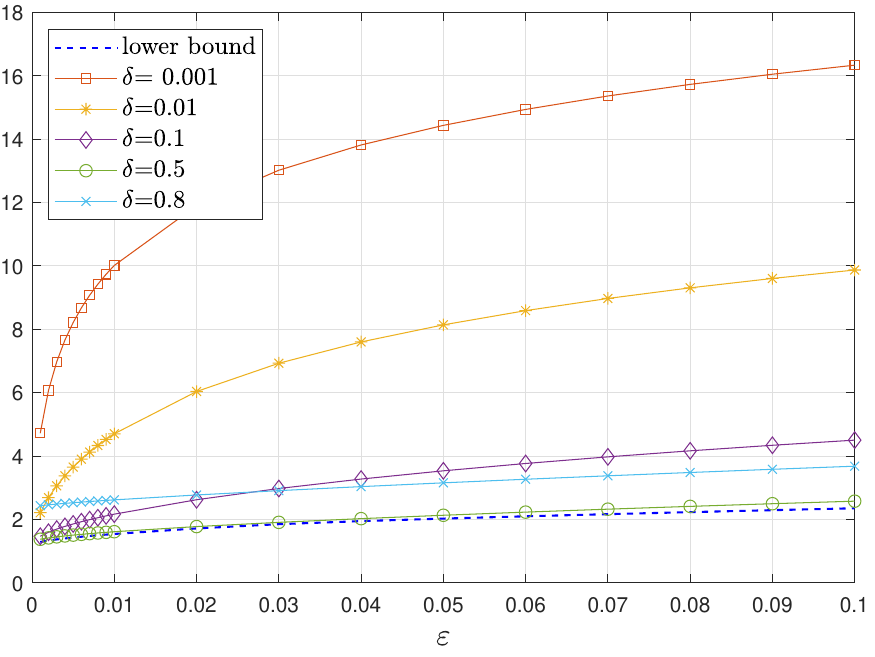}
    \caption{This plot shows the tightness of the lower bound on $D^\varepsilon_{\min,F}(\rho \Vert \sigma)$ obtained from running the seesaw algorithm for ten iterations for each $\varepsilon$, where $\rho,\sigma$ are random quantum states in a Hilbert space of dimension four.
    By changing $\delta$, the upper bound $D_{\max}^{1-\varepsilon-\delta}(\rho\Vert\sigma)+ \log_{2} \!\left(
\frac{1}{1-f(\varepsilon,\delta)}\right)$ is also shown. }
    \label{fig:UpperBoundLowerBound_D_4}
\end{figure}

\begin{remark}[Conditional smooth min-entropy]
   By Definitions~11 and 12 of \cite{tomamichel2010duality}, for $\rho_{AB} \in \cD(\cH_{AB})$, the smooth conditional min-entropy is defined as 
\begin{multline}
    H^\varepsilon_{\min}(A|B)_\rho \coloneqq  \\ \max_{\widetilde{\rho}_{AB} \in \cB^\varepsilon(\rho_{AB})} \max_{\sigma_B \in \cD(\cH_B)} - D_{\max}(\widetilde{\rho}_{AB} \Vert I_A \otimes \sigma_B),
\end{multline}
where 
\begin{equation}
    \cB^\varepsilon(\rho) \coloneqq \{ \tau \in \cD_{\leq}: P(\tau,\rho) \leq \varepsilon \},
\end{equation}
with $P(\tau,\rho) \coloneqq \sqrt{1-F(\tau,\rho)}$.
Note that we can equivalently define smooth conditional  min-entropy by 
\begin{equation} \label{eq: Hmin smooth using Dmax}
     H^\varepsilon_{\min}(A|B)_\rho \coloneqq - \min_{\sigma_B \in \cD(\cH_B)} D^{\varepsilon^2}_{\max}(\rho_{AB} \Vert I_A \otimes \sigma_B),
\end{equation}
with the association of smooth max-relative entropy in \eqref{eq:smooth-dmax-def}.

An SDP\ for $ H_{\min}^{\varepsilon}(A|B)_{\rho}$ was first given in \cite{schaffner2008robust} if a purification of
$\rho_{AB}$ is available. In this work, \cref{prop:SDP-for-smooth-max-with-sub-normalized}  together with \eqref{eq: Hmin smooth using Dmax} lead to an SDP for smooth conditional min-entropy directly in terms of the state $\rho_{AB}$, as presented in the next corollary.
\end{remark}

\begin{corollary}\label{cor: SDP for smooth conditional min entropy}
    Let $\rho_{AB}$ be a state and fix $\varepsilon\in(0,1)$. The smooth conditional min-entropy has the following SDP characterization:
    \begin{align}
& H_{\min}^{\varepsilon}(A|B)_{\rho} \cr
&  =-\log_{2} \hspace{-2mm}\inf_{\substack{\widetilde{\rho}_{AB}\geq 0, \\ S_{B}\geq0, \\ X_{AB}\in\mathcal{L}%
(\mathcal{H})}}\left\{
\begin{array}
[c]{c}%
\operatorname{Tr}[S_{B}]:\widetilde{\rho}_{AB}\leq I_{A}\otimes S_{B}%
, \\ \operatorname{Tr}[\widetilde{\rho}_{AB}] \leq 1,\\
\operatorname{Re}[\operatorname{Tr}[X_{AB}]]\geq\sqrt{1-\varepsilon^2},\\%
\begin{bmatrix}
\rho_{AB} & X_{AB}\\
X_{AB}^{\dag} & \widetilde{\rho}_{AB}%
\end{bmatrix}
\geq0
\end{array}
\right\} \label{eq: second primal smooth min entropy conditional} \\
& = -\log_2 \sup_{\substack{W_{AB}\geq0,\\
Z_{AB} \geq 0,\\ \nu\geq0,\mu \geq 0}}\left\{
\begin{array}
[c]{c}%
-\mu+2\nu\sqrt{1-\varepsilon^2}-\operatorname{Tr}[Z_{AB}\rho_{AB}]:\\
\operatorname{Tr}_{A}[W_{AB}]\leq I_{B},\\%
\begin{bmatrix}
Z_{AB} & \nu I_{AB}\\
\nu I_{AB} & W_{AB}+\mu I_{AB}%
\end{bmatrix}
\geq0
\end{array}
\right\} \nonumber .\\
&  \label{eq:dual-of-smooth-conditional-entropy-main}
\end{align}
\end{corollary}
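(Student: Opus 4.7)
The plan is to derive both the primal and dual SDPs by combining the identity
\begin{equation*}
    H^\varepsilon_{\min}(A|B)_\rho = -\min_{\sigma_B \in \mathcal{D}(\mathcal{H}_B)} D^{\varepsilon^2}_{\max}(\rho_{AB} \Vert I_A \otimes \sigma_B)
\end{equation*}
from \eqref{eq: Hmin smooth using Dmax} with the SDP characterizations in Proposition~\ref{prop:SDP-for-smooth-max-with-sub-normalized}. First I would substitute the primal SDP for $D^{\varepsilon^2}_{\max}(\rho_{AB}\Vert I_A\otimes\sigma_B)$ into the identity above, turning the right-hand side into a joint minimization over $\sigma_B\in\mathcal{D}(\mathcal{H}_B)$, $\lambda\geq 0$, $\widetilde{\rho}_{AB}\geq 0$, and $X_{AB}\in\mathcal{L}(\mathcal{H})$. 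Then I would perform the change of variables $S_B\coloneqq\lambda\sigma_B$, which sends the constraint $\widetilde{\rho}_{AB}\leq\lambda(I_A\otimes\sigma_B)$ to $\widetilde{\rho}_{AB}\leq I_A\otimes S_B$ and the objective $\lambda$ to $\operatorname{Tr}[S_B]$, while absorbing the pair $(\sigma_B,\lambda)$ into a single variable $S_B\geq 0$. Since $\rho_{AB}$ is a state, the degenerate case $\operatorname{Tr}[S_B]=0$ forces $\widetilde{\rho}_{AB}=0$ and hence violates the fidelity constraint when $\varepsilon<1$, so the substitution is bijective on the feasible set; taking $-\log_2$ then yields the primal \eqref{eq: second primal smooth min entropy conditional}.

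For the dual \eqref{eq:dual-of-smooth-conditional-entropy-main}, I would apply SDP duality directly to the primal obtained above. Introducing Lagrange multipliers $W_{AB}\geq 0$ for $I_A\otimes S_B-\widetilde{\rho}_{AB}\geq 0$, $\mu\geq 0$ for $\operatorname{Tr}[\widetilde{\rho}_{AB}]\leq 1$, $\nu\geq 0$ for the fidelity lower bound, and a $2\times 2$ PSD block dual variable with diagonal blocks $Z_{AB}$, $U_{AB}$ and off-diagonal block $Y_{AB}$ (paired with the block PSD constraint of the primal), the Lagrangian becomes affine in each primal variable. Minimization over $S_B\geq 0$ forces $\operatorname{Tr}_A[W_{AB}]\leq I_B$; minimization over $\widetilde{\rho}_{AB}\geq 0$ forces $U_{AB}\leq W_{AB}+\mu I_{AB}$, which may be saturated without loss of generality because raising $U_{AB}$ only weakens the block PSD constraint on the dual variable; and minimization over the unconstrained $X_{AB}$ forces the coefficient of $X_{AB}$ in the Lagrangian to vanish, yielding $Y_{AB}=-\tfrac{\nu}{2}I_{AB}$. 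Conjugating the remaining block dual variable by $\operatorname{diag}(I,-I)$ flips the sign of the off-diagonal blocks (preserving positivity), and a relabeling $\nu\mapsto 2\nu$ produces the factors of two, matching the dual \eqref{eq:dual-of-smooth-conditional-entropy-main} after taking $U_{AB}=W_{AB}+\mu I_{AB}$.

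The main technical step will be verifying Slater's condition to guarantee strong duality, so that the primal and dual values agree. A strictly feasible primal point can be written down explicitly: for small $\delta\in(0,\varepsilon^2)$, take $\widetilde{\rho}_{AB}=(1-\delta)\rho_{AB}$, which gives $F(\widetilde{\rho}_{AB},\rho_{AB})=1-\delta>1-\varepsilon^2$ and $\operatorname{Tr}[\widetilde{\rho}_{AB}]<1$; choose $S_B=cI_B$ with $c$ sufficiently large that $I_A\otimes S_B>\widetilde{\rho}_{AB}$; and pick $X_{AB}$ from the SDP representation of the root fidelity so that $\operatorname{Re}[\operatorname{Tr}[X_{AB}]]>\sqrt{1-\varepsilon^2}$ while the block PSD constraint holds strictly. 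With Slater verified, standard SDP duality theory then delivers the claimed equality between \eqref{eq: second primal smooth min entropy conditional} and \eqref{eq:dual-of-smooth-conditional-entropy-main}.
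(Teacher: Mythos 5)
Your derivation of the primal is the same as the paper's: substitute the primal SDP of Proposition~\ref{prop:SDP-for-smooth-max-with-sub-normalized} into \eqref{eq: Hmin smooth using Dmax} and absorb $(\lambda,\sigma_B)$ into $S_B=\lambda\sigma_B$. Your dual derivation, phrased as Lagrangian duality, is computationally equivalent to the paper's standard-form adjoint calculation in Appendix~\ref{proof:dual-of-smooth-conditional-min-entropy}, including the elimination of the lower-right dual block by saturating $U_{AB}=W_{AB}+\mu I_{AB}$, the sign flip of the off-diagonal blocks by conjugation with $\operatorname{diag}(I,-I)$, and the rescaling of $\nu$.

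The one step that would fail is your Slater verification. You propose a \emph{strictly} feasible primal point with $\widetilde{\rho}_{AB}=(1-\delta)\rho_{AB}$ and an $X_{AB}$ for which the block constraint
\begin{equation*}
\begin{bmatrix}
\rho_{AB} & X_{AB}\\
X_{AB}^{\dag} & \widetilde{\rho}_{AB}
\end{bmatrix}
>0
\end{equation*}
holds strictly. This is impossible whenever $\rho_{AB}$ is rank deficient (e.g., a pure state), since a positive definite block matrix must have a positive definite upper-left block, and $\rho_{AB}$ is fixed data in the primal. Hence the primal feasible set has empty interior for singular $\rho_{AB}$ and Slater cannot be certified on that side. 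The paper avoids this by exhibiting only a \emph{feasible} primal point (namely $\widetilde{\rho}_{AB}=\rho_{AB}$, $S_B=d_B\rho_B$, $X_{AB}=\sqrt{1-\varepsilon^2}\,\rho_{AB}$) together with a \emph{strictly} feasible point for the dual (maximization) program, which suffices for strong duality regardless of the rank of $\rho_{AB}$. Your argument can be repaired either by switching the Slater check to the dual side in this way, or by restricting all operators to $\operatorname{supp}(\rho_{AB})$ before invoking interior-point feasibility; as written, however, the claimed strictly feasible primal point does not exist in general.
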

\begin{IEEEproof}
    We can write
    \begin{multline}
 H_{\min}^{\varepsilon}(A|B)_{\rho} \\
 =-\log_{2}\inf_{\substack{\widetilde{\rho}_{AB}\geq 0, \\ \sigma_{B}\geq0, \\ \lambda\geq
0, \\ X_{AB}\in\mathcal{L}(\mathcal{H})}}\left\{
\begin{array}
[c]{c}%
\lambda:\widetilde{\rho}_{AB}\leq\lambda I_{A}\otimes\sigma_{B}%
, \\ \operatorname{Tr}[\widetilde{\rho}_{AB}] \leq 1,\\
\operatorname{Re}[\operatorname{Tr}[X_{AB}]]\geq\sqrt{1-\varepsilon^2},\\%
\begin{bmatrix}
\rho_{AB} & X_{AB} \\
X_{AB}^{\dag} & \widetilde{\rho}_{AB}%
\end{bmatrix}
\geq0,\\
\operatorname{Tr}[\sigma_{B}]=1
\end{array}
\right\} \quad \label{eq: primal smooth min entropy conditional} 
\end{multline}
    as a direct result of the SDP for smooth max-relative entropy in \cref{prop:SDP-for-smooth-max-with-sub-normalized} and its connection to smooth conditional min-entropy in \eqref{eq: Hmin smooth using Dmax}. The characterization in \eqref{eq: second primal smooth min entropy conditional} follows from the substitution $S_B = \lambda \sigma_B$ in \eqref{eq: primal smooth min entropy conditional}.
    The dual SDP for $H^\varepsilon_{\min}(A| B)_\rho$  in \eqref{eq:dual-of-smooth-conditional-entropy-main} is proved in Appendix~\ref{proof:dual-of-smooth-conditional-min-entropy}.
\end{IEEEproof}

\section{Conclusion}

\label{Sec:Conclusion}

In conclusion, we provided a comprehensive study of the fidelity-based smooth min-relative entropy. In particular, we proved some of its basic properties, including a proof of data processing independent from that of \cite[Theorem~3]{RT22}, which is of interest for several operational tasks. With the use of the derived relationships between smooth $F$-min-relative entropy and other quantum information-theoretic quantities, we derived the second-order asymptotics of the smooth $F$-min-relative entropy and all smooth sandwiched R\'enyi relative entropies. We explored applications where this quantity arises, with a focus on randomness distillation, establishing an upper bound on the one-shot distillable randomness that improves upon the prior one from \cite{lami2022upper}. Furthermore, we obtained a second-order expansion of this upper bound, as well as the precise second-order asymptotics of the distillable randomness of particular classical--quantum states.
Lastly, we designed methods to estimate the smooth $F$-min-relative entropy and showed that the estimates are sufficiently tight for some examples. To that end, we presented SDPs to compute the smooth max-relative entropy and smooth conditional min-entropy, which may be of independent interest. 

Some future research directions are as follows. It would be interesting to analyze the third-order asymptotics
and large deviations associated with the smooth $F$-min-relative entropy. It might also be possible to devise efficient computational methods for it and tighter connections to other information-theoretic quantities. To that end, understanding which parameter $\varepsilon_1$ achieves the tightest bound in \cref{prop:connect-smooth-max} when $\varepsilon_2$ is fixed is a possible direction. {We have shown here that the smooth $F$-min-relative entropy is the core quantity underlying an upper bound on the one-shot distillable randomness, but  finding an information-theoretic task that provides an operational interpretation of the smooth $F$-min-relative entropy itself is an open research question.} 

Another interesting future direction is to follow the observation made after \eqref{eq:smooth-min-relative-entropy} in the introduction. Indeed, we see that the main difference between the hypothesis testing relative entropy and the smooth $F$-min relative entropy is the replacement of the quantities $\operatorname{Tr}[\Lambda \rho]$ and $\operatorname{Tr}[\Lambda \sigma]$ with $F(\tilde{\rho}, \rho)$ and $F(\tilde{\rho},\sigma)$, respectively. Since $\operatorname{Tr}[\Lambda \rho]$ and $\operatorname{Tr}[\Lambda \sigma]$ are related to Type~I and Type~II error probabilities in quantum hypothesis testing, one could consider a variant of quantum hypothesis testing in which the ``error probabilities'' are then related to $F(\tilde{\rho}, \rho)$ and $F(\tilde{\rho},\sigma)$. Specifically, one could consider  $1- F(\tilde{\rho}, \rho)$ to be analogous to a Type~I error probability and $F(\tilde{\rho},\sigma)$ to be analogous to a Type~II error probability. Under this perspective, \cref{thm:second-order-Dmin} demonstrates that we have already determined the second-order asymptotics of this variant of the traditional asymmetric quantum hypothesis testing task and that they are the same as the second-order asymptotics in the standard setting of asymmetric quantum hypothesis testing. What remains open is to determine the asymptotics of a variant of symmetric hypothesis testing. That is, for fixed $\lambda\in(0,1)$, what is the following quantity equal to?
\begin{equation}
    \lim_{n\to \infty} - \frac{1}{n} \ln p_e^n \label{eq:sym-err-exp-conj}
\end{equation}
where
\begin{equation}
    p_e^n \coloneqq \inf_{\tilde{\rho}^{(n)} \in \mathcal{D}_{\leq}} \lambda(1-F(\tilde{\rho}^{(n)}, \rho^{\otimes n})) + (1-\lambda) F(\tilde{\rho}^{(n)}, \sigma^{\otimes n}).
\end{equation}
Based on known results in symmetric quantum hypothesis testing \cite{nussbaum_2009,audenaert_2007} and the aforementioned coincidence for the asymmetric setting, one might guess that \eqref{eq:sym-err-exp-conj} would be equal to the quantum Chernoff divergence, but this remains an intriguing open question for future work.

\section*{Acknowledgment}
The authors acknowledge helpful discussions with Mario Berta, Dhrumil Patel, Soorya Rethinasamy, and Aaron Wagner.

 \bibliographystyle{ieeetr}
\bibliography{reference}

\appendices

\section{Supplementary Lemmas}

\label{app:supp-lems}

\begin{lemma}
\label{lem:gentle-meas-alt-proof}
Let $\rho$ be a state, $\Lambda$ a measurement operator, and set $\widetilde{\rho} \coloneqq \frac{\sqrt{\Lambda}\rho\sqrt{\Lambda}}{\operatorname{Tr}[\Lambda\rho]} $. Then
\begin{equation}
    F(\widetilde{\rho},\rho) \geq \operatorname{Tr}[\Lambda\rho].
\end{equation}
\end{lemma}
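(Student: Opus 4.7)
The plan is to invoke Uhlmann's theorem with a carefully chosen purification of $\widetilde{\rho}$. Let $|\psi_\rho\rangle_{SR}$ be a purification of $\rho$ on some reference system $R$. Applying the PSD operator $\sqrt{\Lambda} \otimes I_R$ to this purification yields the (unnormalized) vector $(\sqrt{\Lambda}\otimes I_R)|\psi_\rho\rangle_{SR}$, whose reduced state on $S$ is $\sqrt{\Lambda}\rho\sqrt{\Lambda}$. Rescaling by $1/\sqrt{\operatorname{Tr}[\Lambda\rho]}$ therefore yields a bona fide purification of $\widetilde{\rho}$.

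Next, I would apply Uhlmann's theorem, which states that the (root) fidelity between two states is at least the overlap of any two of their purifications. Using $|\psi_\rho\rangle$ as the purification of $\rho$ and the vector above as the purification of $\widetilde{\rho}$, this gives
\begin{equation}
    \sqrt{F}(\widetilde{\rho},\rho) \;\geq\; \frac{1}{\sqrt{\operatorname{Tr}[\Lambda\rho]}}\,\bigl|\langle\psi_\rho|(\sqrt{\Lambda}\otimes I_R)|\psi_\rho\rangle\bigr| \;=\; \frac{\operatorname{Tr}[\sqrt{\Lambda}\,\rho]}{\sqrt{\operatorname{Tr}[\Lambda\rho]}},
\end{equation}
where the equality uses that $\sqrt{\Lambda}\otimes I_R \geq 0$, so the inner product is already a nonnegative real equal to $\operatorname{Tr}[\sqrt{\Lambda}\rho]$. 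Squaring gives $F(\widetilde{\rho},\rho) \geq (\operatorname{Tr}[\sqrt{\Lambda}\rho])^2 / \operatorname{Tr}[\Lambda\rho]$.

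Finally, since $0 \leq \Lambda \leq I$, the scalar inequality $\sqrt{\lambda} \geq \lambda$ for $\lambda \in [0,1]$ applied spectrally gives $\sqrt{\Lambda} \geq \Lambda$, and hence $\operatorname{Tr}[\sqrt{\Lambda}\rho] \geq \operatorname{Tr}[\Lambda\rho] \geq 0$. Substituting this into the previous bound yields
\begin{equation}
    F(\widetilde{\rho},\rho) \;\geq\; \frac{(\operatorname{Tr}[\Lambda\rho])^2}{\operatorname{Tr}[\Lambda\rho]} \;=\; \operatorname{Tr}[\Lambda\rho],
\end{equation}
which is the desired conclusion. There is no significant obstacle: the only subtlety is recognizing the right purification of $\widetilde{\rho}$, after which Uhlmann's theorem and the operator monotonicity of the square root on $[0,I]$ do all the work.
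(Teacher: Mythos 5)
Your proof is correct. It takes a genuinely different route from the paper's: the paper evaluates $F(\widetilde{\rho},\rho)$ by direct algebraic manipulation of the formula $F(\widetilde{\rho},\rho)=\bigl(\operatorname{Tr}\bigl[\sqrt{\sqrt{\rho}\,\widetilde{\rho}\sqrt{\rho}}\bigr]\bigr)^{2}$, using the identity $\sqrt{\rho}\sqrt{\Lambda}\rho\sqrt{\Lambda}\sqrt{\rho}=\bigl(\sqrt{\rho}\sqrt{\Lambda}\sqrt{\rho}\bigr)^{2}$ to obtain the exact value $F(\widetilde{\rho},\rho)=\bigl(\operatorname{Tr}[\sqrt{\Lambda}\rho]\bigr)^{2}/\operatorname{Tr}[\Lambda\rho]$, whereas you reach the same expression only as a lower bound, by exhibiting the particular purification $(\sqrt{\Lambda}\otimes I_R)|\psi_\rho\rangle/\sqrt{\operatorname{Tr}[\Lambda\rho]}$ of $\widetilde{\rho}$ and invoking Uhlmann's theorem. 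Both arguments then close identically via the operator inequality $\sqrt{\Lambda}\geq\Lambda$ for $0\leq\Lambda\leq I$. The paper's computation buys slightly more (an exact formula for the fidelity, not just a bound), while your purification argument is arguably cleaner and requires no manipulation of operator square roots beyond the final monotonicity step; it is also the same technique the paper itself deploys in the companion Lemma~\ref{lem:alt-fid-flip-Lambda}. One cosmetic slip: in your displayed Uhlmann inequality the numerator should read $\operatorname{Tr}[\sqrt{\Lambda}\rho]$ rather than $\operatorname{Tr}[\sqrt{\Lambda}\,\rho]$ interpreted as anything else, but that is what you intend and what you use subsequently, so the argument stands.
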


\begin{IEEEproof}
Consider that
\begin{align}
F(\widetilde{\rho},\rho) &  =\left(  \operatorname{Tr}\!\left[  \sqrt{\sqrt
{\rho}\widetilde{\rho}\sqrt{\rho}}\right]  \right)  ^{2}\\
&  =\frac{1}{\operatorname{Tr}[\Lambda\rho]}\left(  \operatorname{Tr}\!\left[
\sqrt{\sqrt{\rho}\sqrt{\Lambda}\rho\sqrt{\Lambda}\sqrt{\rho}}\right]  \right)
^{2}\\
&  =\frac{1}{\operatorname{Tr}[\Lambda\rho]}\left(  \operatorname{Tr}\!\left[
\sqrt{\sqrt{\rho}\sqrt{\Lambda}\sqrt{\rho}\sqrt{\rho}\sqrt{\Lambda}\sqrt{\rho
}}\right]  \right)  ^{2}\\
&  =\frac{1}{\operatorname{Tr}[\Lambda\rho]}\left(  \operatorname{Tr}\!\left[
\sqrt{\left(  \sqrt{\rho}\sqrt{\Lambda}\sqrt{\rho}\right)  ^{2}}\right]
\right)  ^{2}\\
&  =\frac{1}{\operatorname{Tr}[\Lambda\rho]}\left(  \operatorname{Tr}\!\left[
\sqrt{\rho}\sqrt{\Lambda}\sqrt{\rho}\right]  \right)  ^{2}\\
&  =\frac{1}{\operatorname{Tr}[\Lambda\rho]}\left(  \operatorname{Tr}\!\left[
\sqrt{\Lambda}\rho\right]  \right)  ^{2}\\
&  \geq\frac{1}{\operatorname{Tr}[\Lambda\rho]}\left(  \operatorname{Tr}%
\left[  \Lambda\rho\right]  \right)  ^{2}\\
&  =\operatorname{Tr}[\Lambda\rho],
\end{align}
concluding the proof.
\end{IEEEproof}

\begin{lemma}
\label{lem:alt-fid-flip-Lambda}
Let $\rho$ be a state, $\Lambda$ a measurement operator, and set $\widetilde{\rho} \coloneqq \frac{\sqrt{\Lambda}\rho\sqrt{\Lambda}}{\operatorname{Tr}[\Lambda\rho]} $. Then
\begin{equation}
     F(\widetilde{\rho},\sigma) = \frac{1}{\operatorname{Tr}[\Lambda\rho]}F(\rho,\sqrt{\Lambda}\sigma\sqrt{\Lambda}).
\end{equation}
\end{lemma}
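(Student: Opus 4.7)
My plan is to work with the root fidelity $\sqrt{F}(\omega,\tau) = \operatorname{Tr}\!\bigl[\sqrt{\sqrt{\tau}\omega\sqrt{\tau}}\bigr]$, establish the claim at the level of square roots, and then square both sides at the end. This form is convenient because it is manifestly invariant under cyclic rearrangements via the identity $\|B\|_1 = \|B^\dagger\|_1$, which is what will ultimately do the work.

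First, I would substitute the definition $\widetilde{\rho} = \sqrt{\Lambda}\rho\sqrt{\Lambda}/\operatorname{Tr}[\Lambda\rho]$ into $\sqrt{F}(\widetilde{\rho},\sigma) = \operatorname{Tr}\!\bigl[\sqrt{\sqrt{\sigma}\widetilde{\rho}\sqrt{\sigma}}\bigr]$ and pull the positive scalar out of the square root to obtain
\begin{equation}
\sqrt{F}(\widetilde{\rho},\sigma) = \frac{1}{\sqrt{\operatorname{Tr}[\Lambda\rho]}}\,\operatorname{Tr}\!\left[\sqrt{\sqrt{\sigma}\sqrt{\Lambda}\,\rho\,\sqrt{\Lambda}\sqrt{\sigma}}\right].
\end{equation}
Next, I would set $B \coloneqq \sqrt{\sigma}\sqrt{\Lambda}\sqrt{\rho}$, so that $BB^\dagger = \sqrt{\sigma}\sqrt{\Lambda}\,\rho\,\sqrt{\Lambda}\sqrt{\sigma}$ and $B^\dagger B = \sqrt{\rho}\sqrt{\Lambda}\,\sigma\,\sqrt{\Lambda}\sqrt{\rho}$. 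The key step is the standard identity $\operatorname{Tr}\!\bigl[\sqrt{BB^\dagger}\bigr] = \|B\|_1 = \|B^\dagger\|_1 = \operatorname{Tr}\!\bigl[\sqrt{B^\dagger B}\bigr]$, which swaps the two expressions above.

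Applying this identity immediately gives
\begin{equation}
\operatorname{Tr}\!\left[\sqrt{\sqrt{\sigma}\sqrt{\Lambda}\,\rho\,\sqrt{\Lambda}\sqrt{\sigma}}\right] = \operatorname{Tr}\!\left[\sqrt{\sqrt{\rho}\,\sqrt{\Lambda}\sigma\sqrt{\Lambda}\,\sqrt{\rho}}\right] = \sqrt{F}\!\bigl(\rho,\sqrt{\Lambda}\sigma\sqrt{\Lambda}\bigr),
\end{equation}
the last equality just being the definition of root fidelity. Squaring both sides of the resulting equation then yields the claimed $F(\widetilde{\rho},\sigma) = F(\rho,\sqrt{\Lambda}\sigma\sqrt{\Lambda})/\operatorname{Tr}[\Lambda\rho]$.

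There is essentially no obstacle here; the proof is a one-line manipulation once the correct operator $B$ is identified. The only subtlety worth noting is that $\sqrt{\Lambda}$ and $\sqrt{\rho}$ need not be invertible, so one should use the polar-decomposition-free characterization $\|B\|_1 = \operatorname{Tr}\!\bigl[\sqrt{B^\dagger B}\bigr]$ directly rather than attempt to write $\sqrt{BB^\dagger}$ as a similarity transform of $\sqrt{B^\dagger B}$.
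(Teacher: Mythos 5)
Your argument is correct and coincides with the paper's own second ("alternative") derivation of this lemma, which likewise pulls the scalar out of the root fidelity and uses $\operatorname{Tr}[\sqrt{BB^\dagger}]=\|B\|_1=\|B^\dagger\|_1=\operatorname{Tr}[\sqrt{B^\dagger B}]$ with $B=\sqrt{\sigma}\sqrt{\Lambda}\sqrt{\rho}$ to swap $\rho$ and $\sigma$. (The paper's first proof instead goes through Uhlmann's theorem with purifications, but your route is the cleaner of the two and is exactly what the paper records as its alternative.)
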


\begin{IEEEproof}
Let $|\psi^{\sigma}\rangle$ purify $\sigma$, and let $|\psi^{\rho}\rangle$ purify $\rho$. Let $U$ denote a unitary acting on the purifying system.
Consider from Uhlmann's theorem \cite{uhlmann1976transition} that
\begin{align}
&  F(\widetilde{\rho},\sigma)\nonumber\\
&  =\sup_{U}\frac{1}{\operatorname{Tr}[\Lambda\rho]}\left\vert \langle
\psi^{\rho}|\left(  \sqrt{\Lambda}\otimes I\right)  \left(  I\otimes U\right)
|\psi^{\sigma}\rangle\right\vert ^{2}\\
&  =\frac{1}{\operatorname{Tr}[\Lambda\rho]}\sup_{U}\left\vert \langle
\psi^{\rho}|\sqrt{\Lambda}\otimes U|\psi^{\sigma}\rangle\right\vert ^{2}\\
&  =\frac{1}{\operatorname{Tr}[\Lambda\rho]}\sup_{U}\left\vert \langle\psi^{\rho}|\left(
I\otimes U\right)  \left(  \sqrt{\Lambda}\otimes I\right)  |\psi^{\sigma
}\rangle\right\vert ^{2}\\
&  =\frac{1}{\operatorname{Tr}[\Lambda\rho]}F(\rho,\sqrt{\Lambda}\sigma\sqrt{\Lambda}).
\end{align}
An alternative way of seeing this follows from%
\begin{align}
F(\widetilde{\rho},\sigma)  & =\left(  \operatorname{Tr}\!\left[  \sqrt
{\sqrt{\sigma}\widetilde{\rho}\sqrt{\sigma}}\right]  \right)  ^{2}\\
& =\frac{1}{\operatorname{Tr}[\Lambda\rho]}\left(  \operatorname{Tr}\!\left[
\sqrt{\sqrt{\sigma}\sqrt{\Lambda}\rho\sqrt{\Lambda}\sqrt{\sigma}}\right]
\right)  ^{2}\\
& =\frac{1}{\operatorname{Tr}[\Lambda\rho]}\left(  \operatorname{Tr}\!\left[
\sqrt{\sqrt{\sigma}\sqrt{\Lambda}\sqrt{\rho}\sqrt{\rho}\sqrt{\Lambda}%
\sqrt{\sigma}}\right]  \right)  ^{2}\\
& =\frac{1}{\operatorname{Tr}[\Lambda\rho]}\left(  \left\Vert \sqrt{\sigma
}\sqrt{\Lambda}\sqrt{\rho}\right\Vert _{1}\right)  ^{2}\\
& =\frac{1}{\operatorname{Tr}[\Lambda\rho]}\left(  \operatorname{Tr}\!\left[
\sqrt{\sqrt{\rho}\sqrt{\Lambda}\sqrt{\sigma}\sqrt{\sigma}\sqrt{\Lambda}%
\sqrt{\rho}}\right]  \right)  ^{2}\\
& =\frac{1}{\operatorname{Tr}[\Lambda\rho]}\left(  \operatorname{Tr}\!\left[
\sqrt{\sqrt{\rho}\sqrt{\Lambda}\sigma\sqrt{\Lambda}\sqrt{\rho}}\right]
\right)  ^{2}\\
& =\frac{1}{\operatorname{Tr}[\Lambda\rho]}F({\rho},\sqrt{\Lambda
}\sigma\sqrt{\Lambda}).
\end{align}
This concludes the proof.
\end{IEEEproof}

\section{Proof of Proposition~\ref{prop:connect-hypo-alt}}

\label{app:alt-proof-min-hypo-bnd}

The proof is quite similar to the proof of the Proposition~\ref{prop:connect-hypo}, but we give it here for completeness. Let $\Lambda$ be an arbitrary measurement operator satisfying
$\operatorname{Tr}[\Lambda\rho]\geq1-\varepsilon$. By the gentle measurement
lemma, we know that $\operatorname{Tr}[\Lambda\rho]\geq1-\varepsilon$ implies
that%
\begin{equation}
F(\widetilde{\rho},\rho)\geq1-\varepsilon(2-\varepsilon),
\label{eq:GML-fid-alt}
\end{equation}
where $\widetilde{\rho}=\sqrt{\Lambda}\rho\sqrt{\Lambda}$. 
To see the inequality in \eqref{eq:GML-fid-alt}, consider that
\begin{align}
F(\widetilde{\rho},\rho) &  =\left(  \operatorname{Tr}\!\left[  \sqrt{\sqrt
{\rho}\widetilde{\rho}\sqrt{\rho}}\right]  \right)  ^{2}\\
&  =\left(  \operatorname{Tr}\!\left[  \sqrt{\sqrt{\rho}\sqrt{\Lambda}\rho
\sqrt{\Lambda}\sqrt{\rho}}\right]  \right)  ^{2}\\
&  =\left(  \operatorname{Tr}\!\left[  \sqrt{\sqrt{\rho}\sqrt{\Lambda}\sqrt
{\rho}\sqrt{\rho}\sqrt{\Lambda}\sqrt{\rho}}\right]  \right)  ^{2}\\
&  =\left(  \operatorname{Tr}\!\left[  \sqrt{\left(  \sqrt{\rho}\sqrt{\Lambda
}\sqrt{\rho}\right)  ^{2}}\right]  \right)  ^{2}\\
&  =\left(  \operatorname{Tr}\!\left[  \sqrt{\rho}\sqrt{\Lambda}\sqrt{\rho
}\right]  \right)  ^{2}\\
&  =\left(  \operatorname{Tr}\!\left[  \sqrt{\Lambda}\rho\right]  \right)
^{2}\\
&  \geq\left(  \operatorname{Tr}\!\left[  \Lambda\rho\right]  \right)  ^{2}\\
&  \geq\left(  1-\varepsilon\right)  ^{2}\\
&  =1-\varepsilon\left(  2-\varepsilon\right)  .
\end{align}
Now we should relate $\operatorname{Tr}[\Lambda\sigma]$ to $F(\widetilde{\rho
},\sigma)$. Consider from Uhlmann's theorem \cite{uhlmann1976transition} that
\begin{align}
 F(\widetilde{\rho},\sigma)
&  =\sup_{U}\left\vert \langle\psi^{\rho}|\left(  \sqrt{\Lambda}\otimes
I\right)  \left(  I\otimes U\right)  |\psi^{\sigma}\rangle\right\vert ^{2}\\
&  =\sup_{U}\left\vert \langle\psi^{\rho}|\sqrt{\Lambda}\otimes U|\psi
^{\sigma}\rangle\right\vert ^{2}\\
&  =\sup_{U}\left\vert \langle\psi^{\rho}|\left(  I\otimes U\right)  \left(
\sqrt{\Lambda}\otimes I\right)  |\psi^{\sigma}\rangle\right\vert ^{2}\\
&  =F(\rho,\sqrt{\Lambda}\sigma\sqrt{\Lambda})\\
&  \leq\operatorname{Tr}[\Lambda\sigma].
\end{align}
The last inequality follows from data processing for fidelity under the trace
channel. Then%
\begin{align}
-\log_{2}\operatorname{Tr}[\Lambda\sigma] &  \leq-\log_{2}
F(\widetilde{\rho},\sigma)  \\
&  \leq D_{\min,F}^{\varepsilon\left(  2-\varepsilon\right)  }(\rho\Vert\sigma).
\end{align}
Since this holds for an arbitrary measurement operator satisfying
$\operatorname{Tr}[\Lambda\rho]\geq1-\varepsilon$, we conclude that%
\begin{equation}
D_{\min}^{\varepsilon}(\rho\Vert\sigma)\leq D_{\min,F}^{\varepsilon\left(
2-\varepsilon\right)  }(\rho\Vert\sigma),
\end{equation}
which is the desired statement.

\section{Properties of Max-Mutual Information}

\label{app:max-MI-props}

{In this appendix, we prove the equality in \eqref{eq:max-MI-rewrite}, we establish the bound
$I_{\max}(A;B)_{\rho}\leq2\log_{2}\min\left\{  d_{A},d_{B}\right\}  $, and we
prove the equality $I_{\max}(A;B)_{\Phi}=2\log_{2}d$ for a maximally entangled
state $\Phi_{AB}$ of Schmidt rank $d$. To begin, consider that%
\begin{align}
&  I_{\max}(A;B)_{\rho}\nonumber\\
&  =\inf_{\substack{\sigma_{A}\in\mathcal{D}(\mathcal{H}_{A}),\\\sigma_{B}%
\in\mathcal{D}(\mathcal{H}_{B})}}D_{\max}(\rho_{AB}\Vert\sigma_{A}%
\otimes\sigma_{B})\\
&  =\log_{2}\inf_{\substack{\lambda\geq0,\ \sigma_{A}\in\mathcal{D}%
(\mathcal{H}_{A}),\\\sigma_{B}\in\mathcal{D}(\mathcal{H}_{B})}}\left\{
\lambda:\rho_{AB}\leq\lambda\sigma_{A}\otimes\sigma_{B}\right\}  \\
&  =\log_{2}\inf_{K_{A},L_{B}\geq0}\{\operatorname{Tr}[K_{A}\otimes
L_{B}]:\rho_{AB}\leq K_{A}\otimes L_{B}\},
\end{align}
where the last equality follows from the substitution $\lambda\sigma
_{A}\otimes\sigma_{B}=K_{A}\otimes L_{B}$ and the fact that $\operatorname{Tr}%
[K_{A}\otimes L_{B}]=\operatorname{Tr}[\lambda\sigma_{A}\otimes\sigma
_{B}]=\lambda$. Now consider that, for every state $\rho_{AB}$, the operator
inequality $\rho_{AB}\leq d_{A}I_{A}\otimes\rho_{B}$ holds, because%
\begin{equation}
\frac{1}{d_{A}^{2}}\rho_{AB}\leq\frac{1}{d_{A}^{2}}\sum_{i=1}^{d_{A}^{2}}%
U_{A}^{i}\rho_{AB}\left(  U_{A}^{i}\right)  ^{\dag}=\frac{I_{A}}{d_{A}}%
\otimes\rho_{B},
\end{equation}
where $\left\{  U_{A}^{i}\right\}  _{i=1}^{d_{A}^{2}}$ is a set of
Heisenberg--Weyl unitaries. So this implies that the choices $K_{A}=d_{A}%
I_{A}$ and $L_{B}=\rho_{B}$ are feasible, leading to the claimed upper bound
$I_{\max}(A;B)_{\rho}\leq2\log_{2}\min d_{A}$. By a symmetric argument, the
following bound holds $I_{\max}(A;B)_{\rho}\leq2\log_{2}\min d_{B}$. Finally,
for a maximally entangled state, we have $I_{\max}(A;B)_{\Phi}\leq2\log_{2}d$
by the upper bound just derived. We also have%
\begin{align}
I_{\max}(A;B)_{\Phi}  & =\inf_{\substack{\sigma_{A}\in\mathcal{D}%
(\mathcal{H}_{A}),\\\sigma_{B}\in\mathcal{D}(\mathcal{H}_{B})}}D_{\max}%
(\Phi_{AB}\Vert\sigma_{A}\otimes\sigma_{B})\\
& \geq\inf_{\substack{\sigma_{A}\in\mathcal{D}(\mathcal{H}_{A}),\\\sigma
_{B}\in\mathcal{D}(\mathcal{H}_{B})}}D(\Phi_{AB}\Vert\sigma_{A}\otimes
\sigma_{B})\\
& =I(A;B)_{\Phi}\\
& =2\log_{2}d,
\end{align}
where the inequality follows from \eqref{eq:sandwiched-alpha-mono} and the second equality from \cite[Exercise~11.8.2]{Wbook17}. So we conclude that $I_{\max
}(A;B)_{\Phi}=2\log_{2}d$.}

\section{Proof of \cref{prop: SDP for smooth max normalized} } 

\label{proof: SDP for smooth max normalized}

First, let us verify that strong duality holds for the primal and dual SDPs. 
Consider the following feasible choices for the primal SDP: $\widetilde{\rho}=\rho$, $\lambda$ 
 such that $\rho \leq \lambda \sigma$ (one possible choice would be $\lambda= 2^{\widehat{D}_{\max}(\rho \Vert \sigma)}$), and $ X= \sqrt{1-\varepsilon} \  \rho$. 
  This follows because
  \begin{align}
  \begin{bmatrix}
\rho & X\\
X^{\dag} & {\widetilde{\rho}}%
\end{bmatrix} & = 
\begin{bmatrix}
\rho & \sqrt{1-\varepsilon} \  \rho\\
\sqrt{1-\varepsilon} \ \rho & {{\rho}}
\end{bmatrix} \\
& = \begin{bmatrix}
1 & \sqrt{1-\varepsilon} \\
\sqrt{1-\varepsilon}  & 1
\end{bmatrix} \otimes \rho\\
& \geq 0   .
 \end{align}
In addition, choosing $\mu$, $\nu$, $W$, and $Z$ to satisfy $\mu > 0$, $\nu >0$, $ \mu + \nu < 1/\Tr[\sigma]$, $W=(\mu+\nu) I$, and $Z = \nu I$ leads to strictly feasible choices for the dual program. Thus, strong duality holds, due to Slater's condition.

 Considering the known SDP for negative root fidelity \cite{watrous2012simpler}, we have 
\begin{align}
-\sqrt{F}(\rho,\widetilde{\rho}) &  =-\sup_{X\in\mathcal{L}(\mathcal{H}%
)}\left\{  \operatorname{Re}[\operatorname{Tr}[X]]:%
\begin{bmatrix}
\rho & X\\
X^{\dag} & \widetilde{\rho}%
\end{bmatrix}
\geq0\right\}  \\
&  =\inf_{X\in\mathcal{L}(\mathcal{H})}\left\{  -\operatorname{Re}%
[\operatorname{Tr}[X]]:%
\begin{bmatrix}
\rho & X\\
X^{\dag} & \widetilde{\rho}%
\end{bmatrix}
\geq0\right\} .
\end{align}
With that we find that%
\begin{align}
&\widehat{D}_{\max}^{\varepsilon}(\rho\Vert\sigma) \notag \\
&  =\log_{2}\inf_{\substack{\widetilde{\rho
}\geq0,\lambda\geq0, \\ X\in\mathcal{L}(\mathcal{H})}}\left\{
\begin{array}
[c]{c}%
\lambda:\widetilde{\rho}\leq\lambda\sigma,\operatorname{Tr}[\widetilde{\rho
}]=1,\\
-\operatorname{Re}[\operatorname{Tr}[X]]\leq-\sqrt{1-\varepsilon},\\%
\begin{bmatrix}
\rho & X\\
X^{\dag} & \widetilde{\rho}%
\end{bmatrix}
\geq0
\end{array}
\right\}  \\
&  =\log_{2}\inf_{\substack{\widetilde{\rho}\geq0,\lambda\geq0, \\ X\in\mathcal{L}%
(\mathcal{H})}}\left\{
\begin{array}
[c]{c}%
\lambda:\widetilde{\rho}\leq\lambda\sigma,\operatorname{Tr}[\widetilde{\rho
}]=1,\\
\operatorname{Re}[\operatorname{Tr}[X]]\geq\sqrt{1-\varepsilon},\\%
\begin{bmatrix}
\rho & X\\
X^{\dag} & \widetilde{\rho}%
\end{bmatrix}
\geq0
\end{array}
\right\}  .
\end{align}

Then, recall the standard form of dual and primal SDPs:
\begin{align}
\sup_{Z\geq0}\left\{  \operatorname{Tr}[AZ]:\Phi(Z)\leq B\right\},\\
\inf_{Y\geq0}\left\{  \operatorname{Tr}[BY]:\Phi^{\dag}(Y)\geq A\right\}  .
\end{align}
To that end, we find that%
\begin{align}
Y  &  =%
\begin{bmatrix}
\lambda & 0 & 0\\
0 & Z & X\\
0 & X^{\dag} & \widetilde{\rho}%
\end{bmatrix}
,\quad B=%
\begin{bmatrix}
1 & 0 & 0\\
0 & 0 & 0\\
0 & 0 & 0
\end{bmatrix}
,\\
\Phi^{\dag}(Y)  &  =%
\begin{bmatrix}
\lambda\sigma-\widetilde{\rho} & 0 & 0 & 0 & 0 & 0\\
0 & \operatorname{Tr}[\widetilde{\rho}] & 0 & 0 & 0 & 0\\
0 & 0 & -\operatorname{Tr}[\widetilde{\rho}] & 0 & 0 & 0\\
0 & 0 & 0 & \operatorname{Re}[\operatorname{Tr}[X]] & 0 & 0\\
0 & 0 & 0 & 0 & 0 & X\\
0 & 0 & 0 & 0 & X^{\dag} & \widetilde{\rho}%
\end{bmatrix}
,\\
A  &  =%
\begin{bmatrix}
0 & 0 & 0 & 0 & 0 & 0\\
0 & 1 & 0 & 0 & 0 & 0\\
0 & 0 & -1 & 0 & 0 & 0\\
0 & 0 & 0 & \sqrt{1-\varepsilon} & 0 & 0\\
0 & 0 & 0 & 0 & -\rho & 0\\
0 & 0 & 0 & 0 & 0 & 0
\end{bmatrix}
.
\end{align}
Setting%
\begin{equation}
Z=%
\begin{bmatrix}
W & 0 & 0 & 0 & 0 & 0\\
0 & \mu_{1} & 0 & 0 & 0 & 0\\
0 & 0 & \mu_{2} & 0 & 0 & 0\\
0 & 0 & 0 & \nu & 0 & 0\\
0 & 0 & 0 & 0 & Z_{1} & V\\
0 & 0 & 0 & 0 & V^{\dag} & Z_{2}%
\end{bmatrix}
,
\end{equation}
we find that%
\begin{align}
&  \operatorname{Tr}[Z\Phi^{\dag}(Y)]\cr
&  =\operatorname{Tr}[W(\lambda\sigma-\widetilde{\rho})]+\left(  \mu_{1}%
-\mu_{2}\right)  \operatorname{Tr}[\widetilde{\rho}]+\nu\operatorname{Re}%
[\operatorname{Tr}[X]] \nonumber \\ 
& \hspace{10mm} + \operatorname{Tr}\!\left[
\begin{bmatrix}
Z_{1} & V\\
V^{\dag} & Z_{2}%
\end{bmatrix}%
\begin{bmatrix}
0 & X\\
X^{\dag} & \widetilde{\rho}%
\end{bmatrix}
\right] \\
&  =\lambda\operatorname{Tr}[W\sigma]+\operatorname{Tr}[\left(  \left(
\mu_{1}-\mu_{2}\right)  I-W\right)  \widetilde{\rho}] \nonumber  
 \\ &\hspace{10mm} +\nu\operatorname{Re}%
[\operatorname{Tr}[X]]+\operatorname{Tr}\!\left[  VX^{\dag}+V^{\dag}%
X+Z_{2}\widetilde{\rho}\right] \\
&  =\lambda\operatorname{Tr}[W\sigma]+\operatorname{Tr}[\left(  \left(
\mu_{1}-\mu_{2}\right)  I-W+Z_{2}\right)  \widetilde{\rho}] \nonumber \\
& \hspace{10mm}+\operatorname{Re}%
[\operatorname{Tr}[\left(  \nu I+2V^{\dag}\right)  X]]\\
&  =\operatorname{Tr}\!\left[
\begin{bmatrix}
0 & 0\\
0 & \left(  \mu_{1}-\mu_{2}\right)  I-W+Z_{2}%
\end{bmatrix}%
\begin{bmatrix}
Z & X\\
X^{\dag} & \widetilde{\rho}%
\end{bmatrix}
\right]  \nonumber \\
&  +\operatorname{Tr}\!\left[
\begin{bmatrix}
0 & \frac{\nu}{2}I+V\\
\frac{\nu}{2}I+V^{\dag} & 0
\end{bmatrix}%
\begin{bmatrix}
Z & X\\
X^{\dag} & \widetilde{\rho}%
\end{bmatrix}
\right] +\lambda\operatorname{Tr}[W\sigma] 
\\
& = \operatorname{Tr}\!\left[
\begin{bmatrix}
0 & \frac{\nu}{2}I+V\\
\frac{\nu}{2}I+V^{\dag} & \left(  \mu_{1}-\mu_{2}\right)  I-W+Z_{2}%
\end{bmatrix}%
\begin{bmatrix}
Z & X\\
X^{\dag} & \widetilde{\rho}%
\end{bmatrix}
\right] \nonumber \\ 
& \qquad + \lambda\operatorname{Tr}[W\sigma].
\end{align}
This implies that%
\begin{equation}
\Phi(Z)=%
\begin{bmatrix}
\operatorname{Tr}[W\sigma] & 0\\
0 &
\begin{bmatrix}
0 & \frac{\nu}{2}I+V\\
\frac{\nu}{2}I+V^{\dag} & \left(  \mu_{1}-\mu_{2}\right)  I-W+Z_{2}%
\end{bmatrix}
\end{bmatrix}
\end{equation}
and we find that the dual SDP\ is given by%
\begin{align}
&  \sup_{Z\geq0}\left\{  \operatorname{Tr}[AZ]:\Phi(Z)\leq B\right\} \notag\\
& = \hspace{-3mm} \sup_{W,\mu_{1},\mu_{2},\nu\geq0}\left\{ \hspace{-2mm}
\begin{array}
[c]{c}%
\left(  \mu_{1}-\mu_{2}\right)  +\nu\sqrt{1-\varepsilon}-\operatorname{Tr}%
[Z_{1}\rho]:\\
\operatorname{Tr}[W\sigma]\leq1,\\%
\begin{bmatrix}
0 & \frac{\nu}{2}I+V\\
\frac{\nu}{2}I+V^{\dag} & \left(  \mu_{1}-\mu_{2}\right)  I-W+Z_{2}%
\end{bmatrix}
\leq0,\\%
\begin{bmatrix}
Z_{1} & V\\
V^{\dag} & Z_{2}%
\end{bmatrix}
\geq0
\end{array}
\hspace{-2mm}\right\} \\
&  =\sup_{W,\nu\geq0,\mu\in\mathbb{R}}\left\{
\begin{array}
[c]{c}%
\mu+\nu\sqrt{1-\varepsilon}-\operatorname{Tr}[Z_{1}\rho]:\\
\operatorname{Tr}[W\sigma]\leq1,\\%
\begin{bmatrix}
0 & \frac{\nu}{2}I+V\\
\frac{\nu}{2}I+V^{\dag} & \mu I-W+Z_{2}%
\end{bmatrix}
\leq0,\\%
\begin{bmatrix}
Z_{1} & V\\
V^{\dag} & Z_{2}%
\end{bmatrix}
\geq0
\end{array}
\right\} \\
&  =\sup_{W,\nu\geq0,\mu\in\mathbb{R}}\left\{
\begin{array}
[c]{c}%
\mu+2\nu\sqrt{1-\varepsilon}-\operatorname{Tr}[Z_{1}\rho]:\\
\operatorname{Tr}[W\sigma]\leq1,\\%
\begin{bmatrix}
0 & \nu I+V\\
\nu I+V^{\dag} & \mu I-W+Z_{2}%
\end{bmatrix}
\leq0,\\%
\begin{bmatrix}
Z_{1} & V\\
V^{\dag} & Z_{2}%
\end{bmatrix}
\geq0
\end{array}
\right\}
\end{align}
Consider that%
\begin{align}
\notag 
\begin{bmatrix}
0 & \nu I+V\\
\nu I+V^{\dag} & \mu I-W+Z_{2}%
\end{bmatrix}
&  \leq0 \label{eq: conversion of psd}\\
\Leftrightarrow\qquad%
\begin{bmatrix}
0 & V\\
V^{\dag} & Z_{2}%
\end{bmatrix}
&  \leq%
\begin{bmatrix}
0 & -\nu I\\
-\nu I & W-\mu I
\end{bmatrix}
\\
\Leftrightarrow\qquad%
\begin{bmatrix}
Z_{1} & V\\
V^{\dag} & Z_{2}%
\end{bmatrix}
&  \leq%
\begin{bmatrix}
Z_{1} & -\nu I\\
-\nu I & W-\mu I
\end{bmatrix}
.
\end{align}
Thus, we can eliminate the matrix variable $%
\begin{bmatrix}
Z_{1} & V\\
V^{\dag} & Z_{2}%
\end{bmatrix}
$, and the SDP reduces to
\begin{align}
&  \sup_{W,\nu,Z_{1}\geq0,\mu\in\mathbb{R}}\left\{
\begin{array}
[c]{c}%
\mu+2\nu\sqrt{1-\varepsilon}-\operatorname{Tr}[Z_{1}\rho]:\\
\operatorname{Tr}[W\sigma]\leq1,\\%
\begin{bmatrix}
Z_{1} & -\nu I\\
-\nu I & W-\mu I
\end{bmatrix}
\geq0
\end{array}
\right\} \notag \\
&  =\sup_{W,\nu,Z\geq0,\mu\in\mathbb{R}}\left\{
\begin{array}
[c]{c}%
\mu+2\nu\sqrt{1-\varepsilon}-\operatorname{Tr}[Z\rho]:\\
\operatorname{Tr}[W\sigma]\leq1,\\%
\begin{bmatrix}
Z & -\nu I\\
-\nu I & W-\mu I
\end{bmatrix}
\geq0
\end{array}
\right\} \\
&  =\sup_{W,\nu,Z\geq0,\mu\in\mathbb{R}}\left\{
\begin{array}
[c]{c}%
\mu+2\nu\sqrt{1-\varepsilon}-\operatorname{Tr}[Z\rho]:\\
\operatorname{Tr}[W\sigma]\leq1,\\%
\begin{bmatrix}
Z & \nu I\\
\nu I & W-\mu I
\end{bmatrix}
\geq0
\end{array}
\right\}  .
\end{align}

\section{SDP Dual of Smooth Conditional Min-Entropy}

\label{proof:dual-of-smooth-conditional-min-entropy}

In \cref{cor: SDP for smooth conditional min entropy}, we presented the primal SDP of the smooth conditional  min-entropy. 
The dual SDP for smooth conditional  min-entropy is as follows:
\begin{multline}
   H_{\min}^{{\varepsilon}}(A|B)_{\rho}= \\
   -\log_2 \sup_{\substack{W_{AB}\geq 0,\\Z_{AB} \geq 0,\\ \nu\geq0,\mu \geq 0}}\left\{
\begin{array}
[c]{c}%
-\mu+2\nu\sqrt{1-\varepsilon^2}-\operatorname{Tr}[Z_{AB}\rho_{AB}]:\\
\operatorname{Tr}_{A}[W_{AB}]\leq I_{B},\\%
\begin{bmatrix}
Z_{AB} & \nu I_{AB}\\
\nu I_{AB} & W_{AB}+\mu I_{AB}%
\end{bmatrix}
\geq0
\end{array}
\right\}  ,
\label{eq:dual-of-smooth-conditional-entropy}
\end{multline} 

Before proving the dual, let us verify that strong duality holds for the primal and dual SDPs. 
Consider the following feasible choices for the primal SDP: $\widetilde{\rho}_{AB}=\rho_{AB}$, $S_B = d_B \rho_B$, and $ X_{AB}= (\sqrt{1-\varepsilon^2}) \rho_{AB}$, where $d_B$ is the dimension of the $B$ system. This follows because the operator inequality $\rho_{AB} \leq I_A \otimes d_B \rho_B$ holds for every state $\rho_{AB}$ (see, e.g., just after Eq.~(34) in \cite{dupuis2014generalizedDmin}). Furthermore,
 \begin{align}
  \begin{bmatrix}
\rho_{AB} & X_{AB}\\
X_{AB}^{\dag} & {\widetilde{\rho}_{AB}}%
\end{bmatrix} & = 
\begin{bmatrix}
\rho_{AB} & \sqrt{1-\varepsilon^2} \rho_{AB}\\
\sqrt{1-\varepsilon^2} \rho_{AB} & {{\rho}_{AB}}
\end{bmatrix} \\
& = \begin{bmatrix}
1 & \sqrt{1-\varepsilon^2} \\
\sqrt{1-\varepsilon^2}  & 1
\end{bmatrix} \otimes \rho_{AB}\\
& \geq 0   .
 \end{align}
In addition, choosing $\mu$, $\nu$, $W_{AB}$, and $Z_{AB}$ to satisfy $\nu > \mu > 0$,  $d_A (\nu -\mu)< 1$, $W_{AB} = (\nu-\mu)I_{AB}$, $Z_{AB} = \nu I_{AB}$, and $W_{AB}=(\nu-\mu) I_{AB}$ leads to strictly feasible choices for the dual program. Thus, strong duality holds, due to Slater's condition.

Recall the primal of smooth conditional min-entropy given in \eqref{eq: primal smooth min entropy conditional}
and the standard form of SDPs:
\begin{align}
 \sup_{Z\geq0}\left\{  \operatorname{Tr}[AZ]:\Phi(Z)\leq B\right\} ,\\
 \inf_{Y\geq0}\left\{  \operatorname{Tr}[BY]:\Phi^{\dag}(Y)\geq A\right\}  .
\end{align}
In standard form, this SDP\ is given by%
\begin{align}
Y  &  =%
\begin{bmatrix}
S_{B} & 0 & 0\\
0 & W_{AB} & X_{AB}\\
0 & X_{AB}^{\dag} & \widetilde{\rho}_{AB}%
\end{bmatrix}
,\quad B=%
\begin{bmatrix}
I_{B} & 0 & 0\\
0 & 0 & 0\\
0 & 0 & 0
\end{bmatrix}
,\\
\Phi^{\dag}(Y)  &  =%
\begin{bmatrix}
L_{AB} & 0 & 0 & 0 & 0\\
0 & -\operatorname{Tr}[\widetilde{\rho}_{AB}] & 0 & 0 & 0\\
0 & 0 & \operatorname{Re}[\operatorname{Tr}[X_{AB}]] & 0 & 0\\
0  & 0 & 0 & 0 & X_{AB}\\
0 & 0 & 0 & X_{AB}^{\dag} & \widetilde{\rho}_{AB}%
\end{bmatrix}
,\\
\textnormal{where }  & L_{AB}=I_{A}\otimes S_{B}-\widetilde{\rho}_{AB}, \textnormal{ and} \nonumber \\
A  &  =%
\begin{bmatrix}
0  & 0 & 0 & 0 & 0\\
0 & -1 & 0 & 0 & 0\\
0  & 0 & \sqrt{1-\varepsilon^2} & 0 & 0\\
0 & 0 & 0 & -\rho_{AB} & 0\\
0 & 0 & 0 & 0 & 0
\end{bmatrix}
.
\end{align}
Setting%
\begin{equation}
Z=%
\begin{bmatrix}
Z_{AB}^{1}  & 0 & 0 & 0 & 0\\

0 & \mu & 0 & 0 & 0\\
0  & 0 & \nu & 0 & 0\\
0  & 0 & 0 & Z_{AB}^{2} & V_{AB}\\
0 & 0 & 0 & V_{AB}^{\dag} & Z_{AB}^{3}%
\end{bmatrix}
,
\end{equation}
we find that%
\begin{align}
&  \operatorname{Tr}[Z\Phi^{\dag}(Y)]\notag \\
&  =\operatorname{Tr}[Z_{AB}^{1}(I_{A}\otimes S_{B}-\widetilde{\rho}%
_{AB})]-\mu \operatorname{Tr}[\widetilde{\rho
}_{AB}] \nonumber \\
&\qquad +\operatorname{Tr}\!\left[
\begin{bmatrix}
Z_{AB}^{2} & V_{AB}\\
V_{AB}^{\dag} & Z_{AB}^{3}%
\end{bmatrix}%
\begin{bmatrix}
0 & X_{AB}\\
X_{AB}^{\dag} & \widetilde{\rho}_{AB}%
\end{bmatrix}
\right]  \notag\\
& \qquad +\nu\operatorname{Re}[\operatorname{Tr}[X_{AB}]] \\
&  =\operatorname{Tr}[\operatorname{Tr}_{A}[Z_{AB}^{1}]S_{B}%
]+\nu\operatorname{Re}[\operatorname{Tr}[X_{AB}]]+2\operatorname{Re}%
[\operatorname{Tr}[V_{AB}^{\dag}X_{AB}]] \nonumber \\
& \quad +\operatorname{Tr}[\left(  -\mu  I_{AB}-Z_{AB}%
^{1}+Z_{AB}^{3}\right)  \widetilde{\rho}_{AB}]\\
&  =\operatorname{Tr}[\operatorname{Tr}_{A}[Z_{AB}^{1}]S_{B}%
] \nonumber \\ 
&+\operatorname{Tr}\!\left[
\begin{bmatrix}
0 & L_{AB}\\
L_{AB}^{\dag} & -\mu
I_{AB}-Z_{AB}^{1}+Z_{AB}^{3}%
\end{bmatrix}%
\begin{bmatrix}
W_{AB} & X_{AB}\\
X_{AB}^{\dag} & \widetilde{\rho}_{AB}%
\end{bmatrix}
\right],
\end{align}
with the shorthand 
\begin{equation}
    L_{AB} \coloneqq \frac{\nu}{2}I_{AB}+V_{AB}.
\end{equation}
This implies that%
\begin{equation}
\Phi(Z)=%
\begin{bmatrix}
\operatorname{Tr}_{A}[Z_{AB}^{1}] & 0 & 0\\
0 & 0 & L_{AB}\\
0 & L_{AB}^{\dag} & -\mu
I_{AB}-Z_{AB}^{1}+Z_{AB}^{3}%
\end{bmatrix}
.
\end{equation}

Then we find that the dual SDP\ is given by%
\begin{align}
&  \sup_{Z\geq0}\left\{  \operatorname{Tr}[AZ]:\Phi(Z)\leq B\right\} \\
&  =\sup_{\substack{Z_{AB}^{1} \geq 0, \\ \mu \geq 0,\\ \nu\geq0}}\left\{
\begin{array}
[c]{c}%
-\mu+\nu\sqrt{1-\varepsilon^2}-\operatorname{Tr}[Z_{AB}^{2}\rho
_{AB}]:\\
\operatorname{Tr}_{A}[Z_{AB}^{1}]\leq I_{B},\\%
\begin{bmatrix}
0 & L_{AB}\\
L_{AB}^{\dag} & -\mu
I_{AB}-Z_{AB}^{1}+Z_{AB}^{3}%
\end{bmatrix}
\leq0\\%
\begin{bmatrix}
Z_{AB}^{2} & V_{AB}\\
V_{AB}^{\dag} & Z_{AB}^{3}%
\end{bmatrix}
\geq0
\end{array}
\right\} \\
&  =\sup_{\substack{Z_{AB}^{1} \geq 0,\\ \nu\geq0, \\ \mu \geq 0}}\left\{
\begin{array}
[c]{c}%
-\mu+\nu\sqrt{1-\varepsilon^2}-\operatorname{Tr}[Z_{AB}^{2}\rho_{AB}]:\\
\operatorname{Tr}_{A}[Z_{AB}^{1}]\leq I_{B},\\%
\begin{bmatrix}
0 & L_{AB}\\
L_{AB}^{\dag} & -\mu I_{AB}-Z_{AB}^{1}+Z_{AB}^{3}%
\end{bmatrix}
\leq0\\%
\begin{bmatrix}
Z_{AB}^{2} & V_{AB}\\
V_{AB}^{\dag} & Z_{AB}^{3}%
\end{bmatrix}
\geq0
\end{array}
\right\} \\
&  =\sup_{\substack{Z_{AB}^{1} \geq 0, \\ \nu\geq0, \\ \mu \geq 0}}\left\{
\begin{array}
[c]{c}%
-\mu+2\nu\sqrt{1-\varepsilon^2}-\operatorname{Tr}[Z_{AB}^{2}\rho_{AB}]:\\
\operatorname{Tr}_{A}[Z_{AB}^{1}]\leq I_{B},\\%
\begin{bmatrix}
0 & L'_{AB}\\
(L'_{AB})^{\dag} & -\mu I_{AB}-Z_{AB}^{1}+Z_{AB}^{3}%
\end{bmatrix}
\leq0\\%
\begin{bmatrix}
Z_{AB}^{2} & V_{AB}\\
V_{AB}^{\dag} & Z_{AB}^{3}%
\end{bmatrix}
\geq0
\end{array}
\right\}  ,
\end{align}
where 
\begin{equation}
    L'_{AB} \coloneqq \nu I_{AB}+V_{AB}.
\end{equation}
Then, as in derivations related to \eqref{eq: conversion of psd}, we find that%
\begin{align}%
\begin{bmatrix}
0 & \nu I_{AB}+V_{AB}\\
\nu I_{AB}+V_{AB}^{\dag} & -\mu I_{AB}-Z_{AB}^{1}+Z_{AB}^{3}%
\end{bmatrix}
 \leq0\\
\hspace{-10mm} \Leftrightarrow%
\begin{bmatrix}
Z_{AB}^{2} & V_{AB}\\
V_{AB}^{\dag} & Z_{AB}^{3}%
\end{bmatrix}
  \leq%
\begin{bmatrix}
Z_{AB}^{2} & -\nu I_{AB}\\
-\nu I_{AB} & Z_{AB}^{1}+\mu I_{AB}%
\end{bmatrix}
,
\end{align}
and we can again eliminate variables to reduce the SDP to%
\begin{align}
&  \sup_{Z_{AB}^{1},\nu\geq0,\mu \geq 0}\left\{
\begin{array}
[c]{c}%
-\mu+2\nu\sqrt{1-\varepsilon^2}-\operatorname{Tr}[Z_{AB}^{2}\rho_{AB}]:\\
\operatorname{Tr}_{A}[Z_{AB}^{1}]\leq I_{B},\\%
\begin{bmatrix}
Z_{AB}^{2} & -\nu I_{AB}\\
-\nu I_{AB} & Z_{AB}^{1}+\mu I_{AB}%
\end{bmatrix}
\geq0
\end{array}
\right\} \\
&  =\sup_{Z_{AB}^{1},\nu\geq0,\mu \geq 0}\left\{
\begin{array}
[c]{c}%
-\mu+2\nu\sqrt{1-\varepsilon^2}-\operatorname{Tr}[Z_{AB}^{2}\rho_{AB}]:\\
\operatorname{Tr}_{A}[Z_{AB}^{1}]\leq I_{B},\\%
\begin{bmatrix}
Z_{AB}^{2} & \nu I_{AB}\\
\nu I_{AB} & Z_{AB}^{1}+\mu I_{AB}%
\end{bmatrix}
\geq0
\end{array}
\right\}
\end{align}

We then do a final rewriting as%
\begin{equation}
\sup_{\substack{W_{AB},Z_{AB} \geq 0,\\ \nu\geq0,\mu \geq 0}}\left\{
\begin{array}
[c]{c}%
-\mu+2\nu\sqrt{1-\varepsilon^2}-\operatorname{Tr}[Z_{AB}\rho_{AB}]:\\
\operatorname{Tr}_{A}[W_{AB}]\leq I_{B},\\%
\begin{bmatrix}
Z_{AB} & \nu I_{AB}\\
\nu I_{AB} & W_{AB}+\mu I_{AB}%
\end{bmatrix}
\geq0
\end{array}
\right\}  .
\end{equation}

\section{Second-order Asymptotics of Smooth Sandwiched R\'enyi Relative Entropy} \label{APP:Second-order-smooth-renyi}

\begin{proposition}
Fix $\varepsilon\in(0,1)$. For a state $\rho$ and a PSD operator $\sigma$, the following second-order expansions hold:
\begin{enumerate}
    \item For $\alpha>1$:
  \begin{multline}
    \frac{1}{n} \widetilde{D}_{\alpha}^{\varepsilon}\!\left(\rho^{\otimes n} \Vert\sigma^{\otimes n}\right) = \\  D(\rho \Vert \sigma) -\sqrt{\frac{1}{n} V(\rho \Vert \sigma)} \ \Phi^{-1}(\varepsilon) + O\!\left( \frac{\log n}{n} \right).
\end{multline} 
    \item For $\alpha \in [1/2,1)$: 
    \begin{multline}
    \frac{1}{n} \widetilde{D}_{\alpha}^{\varepsilon}\!\left(\rho^{\otimes n} \Vert\sigma^{\otimes n}\right) = \\  D(\rho \Vert \sigma) + \sqrt{\frac{1}{n} V(\rho \Vert \sigma)} \ \Phi^{-1}(\varepsilon) + O\!\left( \frac{\log n}{n} \right).
\end{multline} 
\end{enumerate}
\end{proposition}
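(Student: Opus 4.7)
The plan is to combine three ingredients: the $\alpha$-monotonicity of the sandwiched R\'enyi relative entropy, the second-order expansion of $D_{\min,F}^{\varepsilon}$ proven in \cref{thm:second-order-Dmin}, and the cross-regime inequality of \cref{thm:connection-to-smooth-renyi} that links the smooth sandwiched R\'enyi quantity at $\alpha\in(1/2,1)$ to the one at $\beta=\alpha/(2\alpha-1)>1$. Two of the four required bounds fall out directly from $\alpha$-monotonicity. For $\alpha>1$, the pointwise inequality $\widetilde{D}_{\alpha}(\widetilde{\rho}\Vert\sigma)\leq D_{\max}(\widetilde{\rho}\Vert\sigma)$ combined with the infimum form of the smoothing gives $\widetilde{D}_{\alpha}^{\varepsilon}(\rho\Vert\sigma)\leq D_{\max}^{\varepsilon}(\rho\Vert\sigma)$; chaining this with \eqref{eq:smooth-max-with-sub-and-not-normalized}, \eqref{eq:smooth-max-to-hypothesis}, and the second-order expansion \eqref{eq:hypothesis-testing-second-order} of $D_{\min}^{1-\varepsilon}$ yields the desired upper bound $\tfrac{1}{n}\widetilde{D}_{\alpha}^{\varepsilon}(\rho^{\otimes n}\Vert\sigma^{\otimes n})\leq D(\rho\Vert\sigma)-\sqrt{V(\rho\Vert\sigma)/n}\,\Phi^{-1}(\varepsilon)+O(\log n/n)$. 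Symmetrically, for $\alpha\in[1/2,1)$, the pointwise inequality $\widetilde{D}_{1/2}(\widetilde{\rho}\Vert\sigma)\leq\widetilde{D}_{\alpha}(\widetilde{\rho}\Vert\sigma)$ with the supremum form of smoothing in \eqref{eq:smooth-Frenyi-def alpha less than 1} gives $\widetilde{D}_{\alpha}^{\varepsilon}(\rho\Vert\sigma)\geq\widetilde{D}_{1/2}^{\varepsilon}(\rho\Vert\sigma)=D_{\min,F}^{\varepsilon}(\rho\Vert\sigma)$, and \cref{thm:second-order-Dmin} provides the matching lower bound.

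For the two remaining bounds, I invoke \cref{thm:connection-to-smooth-renyi} with the parameter choice $\varepsilon_{1}+\varepsilon_{2}=1-\delta$, where $\delta=1/\sqrt{n}$. A short Taylor expansion of the expression defining $\varepsilon'$ shows that $1-\varepsilon'=\delta^{2}/(4\varepsilon(1-\varepsilon))+O(\delta^{3})$ in this regime, so the penalty $\tfrac{\beta}{\beta-1}\log_{2}(1/(1-\varepsilon'))$ is $O(\log n)$ and hence contributes only $O(\log n/n)$ after $1/n$ normalization. For the upper bound on $\widetilde{D}_{\alpha}^{\varepsilon}$ with $\alpha\in(1/2,1)$, I take $\varepsilon_{2}=\varepsilon$, $\varepsilon_{1}=1-\varepsilon-1/\sqrt{n}$, and $\beta=\alpha/(2\alpha-1)>1$, so that \cref{thm:connection-to-smooth-renyi} furnishes $\widetilde{D}_{\alpha}^{\varepsilon}(\rho^{\otimes n}\Vert\sigma^{\otimes n})\leq\widetilde{D}_{\beta}^{\varepsilon_{1}}(\rho^{\otimes n}\Vert\sigma^{\otimes n})+O(\log n)$; substituting the upper bound for $\beta>1$ just obtained and invoking the expansion $\Phi^{-1}(1-\varepsilon-1/\sqrt{n})=-\Phi^{-1}(\varepsilon)+O(1/\sqrt{n})$ (as in the step culminating at \eqref{eq:upperbound-second-order}) delivers the desired upper bound. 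Symmetrically, for the lower bound on $\widetilde{D}_{\alpha}^{\varepsilon}$ with $\alpha>1$, I take $\varepsilon_{1}=\varepsilon$, $\varepsilon_{2}=1-\varepsilon-1/\sqrt{n}$, and $\alpha'=\alpha/(2\alpha-1)\in(1/2,1)$, so that \cref{thm:connection-to-smooth-renyi} yields $\widetilde{D}_{\alpha}^{\varepsilon}(\rho^{\otimes n}\Vert\sigma^{\otimes n})\geq\widetilde{D}_{\alpha'}^{\varepsilon_{2}}(\rho^{\otimes n}\Vert\sigma^{\otimes n})-O(\log n)$, and the lower bound for $\alpha'\in[1/2,1)$ from the preceding paragraph supplies the remaining estimate. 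The boundary case $\alpha=1/2$ is immediate from the identity $\widetilde{D}_{1/2}^{\varepsilon}=D_{\min,F}^{\varepsilon}$ and \cref{thm:second-order-Dmin}.

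The principal bookkeeping obstacle is verifying that the shift $\delta=1/\sqrt{n}$ makes both the penalty $\log_{2}(1/(1-\varepsilon'))$ and the Taylor-expansion error in $\Phi^{-1}$ strictly third-order, so that neither pollutes the advertised $O(\log n/n)$ remainder; this parallels the estimates \eqref{eq: f function taylor} and the argument leading to \eqref{eq:upperbound-second-order} in the proof of \cref{thm:second-order-Dmin}, and requires no new technical ideas.
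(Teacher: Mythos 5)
Your proposal is correct and follows essentially the same route as the paper's proof in Appendix~\ref{APP:Second-order-smooth-renyi}: the two ``easy'' bounds come from $\alpha$-monotonicity (down to $D_{\min,F}^{\varepsilon}$ for $\alpha\in[1/2,1)$ and up to $D_{\max}^{\varepsilon}$ plus \eqref{eq:smooth-max-to-hypothesis} and \eqref{eq:hypothesis-testing-second-order} for $\alpha>1$), and the two ``hard'' bounds come from \cref{thm:connection-to-smooth-renyi} with the shift $\delta=1/\sqrt{n}$, the estimate \eqref{eq: f function taylor}, and the Taylor expansion of $\Phi^{-1}$. The only cosmetic difference is that for the lower bound at $\alpha>1$ you pass explicitly through $\widetilde{D}_{\alpha'}^{\varepsilon_2}$ with the conjugate exponent $\alpha'=\alpha/(2\alpha-1)$ before dropping to $D_{\min,F}^{\varepsilon_2}$, whereas the paper folds the $\alpha\to 1/2$ limit directly into the application of \cref{thm:connection-to-smooth-renyi}; the ingredients and error accounting are identical.
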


\begin{IEEEproof}
   \underline{Part (1):}

   For the upper bound, due to the $\beta$-monotonicity of sandwiched R\'enyi relative entropy for $\beta >1$, we have 
   \begin{align}
    & \frac{1}{n} \widetilde{D}_{\beta}^{\varepsilon}\!\left(\rho^{\otimes n} \Vert\sigma^{\otimes n}\right) \notag \\ 
    &\leq   \frac{1}{n} {D}_{\max}^{\varepsilon}\!\left(\rho^{\otimes n} \Vert\sigma^{\otimes n}\right) \\
    & \stackrel{(a)}\leq \frac{1}{n} {D}_{\min}^{1-\varepsilon}\!\left(\rho^{\otimes n} \Vert\sigma^{\otimes n}\right) + \frac{1}{n}\log_2\!\left( \frac{1}{1-\varepsilon}\right) \\ 
    & \stackrel{(b)} = D(\rho \Vert \sigma) + \sqrt{\frac{1}{n} V(\rho \Vert \sigma)} \ \Phi^{-1}(1-\varepsilon) + O\!\left( \frac{\log n}{n} \right) \\
    & \stackrel{(c)}= D(\rho \Vert \sigma) - \sqrt{\frac{1}{n} V(\rho \Vert \sigma)} \ \Phi^{-1}(\varepsilon) + O\!\left( \frac{\log n}{n} \right) \label{eq:-upperbound-smooth-renyi-higher}
   \end{align}
   where: (a) follows from \eqref{eq:smooth-max-to-hypothesis}; (b) from \eqref{eq:hypothesis-testing-second-order}; and (c) from $\Phi^{-1}(1-\varepsilon)=-\Phi^{-1}(\varepsilon) $.

   For the lower bound,
   we use \cref{thm:connection-to-smooth-renyi}  for $\varepsilon
,\delta\in\left(  0,1\right)  $ such that $\varepsilon+\delta\in\left(
0,1\right)$, as well as the $\alpha \to 1/2$ limit of the lower bound, by $\alpha$-monotonicity of $\widetilde{D}_{\alpha}$, and then consider 
\begin{multline}
 \frac{1}{n} \widetilde{D}_{\beta}^{\varepsilon}(\rho^{\otimes n}\Vert\sigma^{\otimes n}) \\
 \geq 
    \frac{1}{n}D^{1-\varepsilon-\delta}_{\min,F} (\rho^{\otimes n}  \Vert \sigma^{\otimes n}) - \frac{1} {n} \frac{\beta}{\beta-1}\log_{2} \!\left(
\frac{1}{1-f(\varepsilon,\delta)}\right), \label{eq:smooth-renyi-and-min-n-tensor}
\end{multline}
where $f(\varepsilon,\delta)$ is given in \eqref{eq:f(eps,delta)},
and we know from \eqref{eq: f function taylor} that $-\log\!\left(1-f(\varepsilon,\delta)\right)= O(\log n)$  for $\delta=1/ \sqrt{n}$ and sufficiently large $n$.

Now, using \cref{thm:second-order-Dmin}, consider that
\begin{align}
     &\frac{1}{n}D^{1-\varepsilon-\delta}_{\min,F} (\rho^{\otimes n}  \Vert \sigma^{\otimes n}) \notag \\
     &= D(\rho \Vert \sigma) + \sqrt{\frac{1}{n} V(\rho \Vert \sigma)} \ \Phi^{-1}(1-\varepsilon-\delta) + O\!\left( \frac{\log n}{n} \right) \\
     &= D(\rho \Vert \sigma) -\sqrt{\frac{1}{n} V(\rho \Vert \sigma)} \ \Phi^{-1}(\varepsilon+\delta) + O\!\left( \frac{\log n}{n} \right) \\
     & = D(\rho \Vert \sigma) -\sqrt{\frac{1}{n} V(\rho \Vert \sigma)} \ \Phi^{-1}(\varepsilon) + O\!\left( \frac{\log n}{n} \right),     
\end{align}
where the last equality holds from similar reasoning used to arrive at \eqref{eq:upperbound-second-order}, which was used in the proof of \cref{thm:second-order-Dmin} by the choice $\delta=1/ \sqrt{n}$.

Then, combining the above inequality with \eqref{eq:smooth-renyi-and-min-n-tensor}, we obtain the desired lower bound. Finally together with \eqref{eq:-upperbound-smooth-renyi-higher}, we complete the proof for the case $\beta >1$.

\medskip 
\underline{Part (2):}
For $\alpha \in [1/2,1)$, from the $\alpha$-monotonicity of the sandwiched R\'enyi relative entropy, we have 
\begin{align}
   &  \frac{1}{n} \widetilde{D}_{\alpha}^{\varepsilon}\!\left(\rho^{\otimes n} \Vert\sigma^{\otimes n}\right)   \notag \\
   & \geq  \frac{1}{n} {D}_{\min,F}^{\varepsilon}\!\left(\rho^{\otimes n} \Vert\sigma^{\otimes n}\right) \\
   & = D(\rho \Vert \sigma) + \sqrt{\frac{1}{n} V(\rho \Vert \sigma)} \ \Phi^{-1}(\varepsilon) + O\!\left( \frac{\log n}{n} \right) \label{eq: lowe bound smooth renyi lower},
\end{align}
where the equality follows from \cref{thm:second-order-Dmin}.

For the upper bound, using \cref{thm:connection-to-smooth-renyi}, consider that 
\begin{align}
    &\frac{1}{n} \widetilde{D}_{\alpha}^{\varepsilon}\!\left(\rho^{\otimes n} \Vert\sigma^{\otimes n}\right)  \notag \\
    &\leq \frac{1}{n} \widetilde{D}_{\beta}^{1-\varepsilon-\delta}\!\left(\rho^{\otimes n} \Vert\sigma^{\otimes n}\right) + \frac{1}{n}\frac{\beta}{\beta-1}\log\!\left( \frac{1}{1- f(\varepsilon,\delta)} \right) \\ 
    &= D(\rho \Vert \sigma) -\sqrt{\frac{1}{n} V(\rho \Vert \sigma)} \ \Phi^{-1}(1-\varepsilon-\delta) + O\!\left( \frac{\log n}{n} \right) \notag  \\ 
    & \qquad +\frac{1}{n}\frac{\beta}{\beta-1}\log\!\left( \frac{1}{1- f(\varepsilon,\delta)} \right),
\end{align}
where the last inequality follows from the second-order expansion obtained in Part (1). 
Now observing that $-\log\!\left(1-f(\varepsilon,\delta)\right)= O(\log n)$ for $\delta=1/ \sqrt{n}$ and sufficiently large $n$, and following the same reasoning, we used to arrive at \eqref{eq:upperbound-second-order}, in the proof of \cref{thm:second-order-Dmin}, we get the matching upper bound. 

Combining the obtained upper bound and \eqref{eq: lowe bound smooth renyi lower}, we conclude the proof. 
\end{IEEEproof}

\section{Moderate Deviation Analysis for Smooth \texorpdfstring{$F$}{F}-Min-Relative Entropy (Proof of \cref{prop:mod-dev-smooth-F-min})}

\label{App: moderate deviation}

In this appendix, we prove \cref{prop:mod-dev-smooth-F-min}. Recall again that a sequence $\{a_n\}_n$ is called a moderate sequence if $a_n \to 0$ and $\sqrt{n} a_n \to \infty$ when $n \to \infty$.
From \cite{chubb2017moderate}, we have the following scaling of the smooth min-relative entropy under moderate deviations, where $\varepsilon_n \coloneqq e^{-n a^2_n}$,
\begin{equation}\label{eq: hypothesis testing moderate}
   \frac{1}{n} D^{\varepsilon_n}_{\min}\!\left(\rho^{\otimes n} \Vert \sigma^{\otimes n }\right) =
    D(\rho \Vert \sigma) - \sqrt{2 V(\rho \Vert \sigma)} \ a_n +o(a_n).
\end{equation}

\begin{IEEEproof}[Proof of \cref{prop:mod-dev-smooth-F-min}]
 For the lower bound, we employ \cref{prop:connect-hypo} to find that
    \begin{align}
  & \frac{1}{n}D^{\varepsilon_n}_{\min,F} (\rho^{\otimes n} \Vert \sigma^{\otimes n})     \cr
  & \geq \frac{1}{n} D^{\varepsilon_n}_{\min}(\rho^{\otimes n} \Vert \sigma^{\otimes n})
  - \frac{1}{n}\log_{2} \!\left(\frac{1}{1-\varepsilon_n}\right) \\ 
  & = D(\rho \Vert \sigma) - \sqrt{2 V(\rho\Vert \sigma)} \ a_n + o(a_n),
\end{align}
where the last equality holds by \eqref{eq: hypothesis testing moderate}.
Then, we arrive at 
\begin{equation}
    \frac{1}{n}D^{\varepsilon_n}_{\min,F} (\rho^{\otimes n} \Vert \sigma^{\otimes n}) \\
    \geq  D(\rho \Vert \sigma) -\sqrt{2V(\rho\Vert \sigma)} \ a_n + o(a_n), \label{eq: lower bound moderate D min}
\end{equation}
along with $\frac{1}{n} \log_2\!\left(\frac{1}{1-\varepsilon_n}\right)=o\!\left(\frac{1}{n}\right) $ leading to $\frac{1}{n} \log_2\!\left(\frac{1}{1-\varepsilon_n}\right)=o(a_n)$ since $na^2_n \to \infty$.

For the upper bound, similar to the proof of \cref{thm:second-order-Dmin}, specifically using the relationship derived in \eqref{eq: connecting with hypothesis eq}, we have 
\begin{align}
 &  \frac{1}{n}D^{\varepsilon_n}_{\min,F} (\rho^{\otimes n}  \Vert \sigma^{\otimes n}) \notag \\
&  \leq \frac{1}{n} D^{\varepsilon_n+ \delta}_{\min}(\rho^{\otimes n} \Vert \sigma^{\otimes n})  +  \frac{1}{n} \log_{2}\!\left(\frac{1}{\varepsilon_n+\delta}\right)  \notag \\
& \qquad +  \frac{1}{n}\log_{2} \!\left(  \frac{1}{1-f(\varepsilon_n,\delta)}  \right).
\end{align}
Since $\delta \in(0,1)$, by choosing $\delta=\varepsilon_n$, 
$ \frac{1}{n}\log_{2} \!\left(  \frac{1}{1-f(\varepsilon_n,\varepsilon_n)} \right)= o(a_n)$ and $\frac{1}{n}\log_2\!\left( \frac{1}{2\varepsilon_n }\right)=o(a_n)$. Also observe that 
\begin{equation}
    2 e^{-n a^2_n} = \exp\!\left( -n\left( a^2_n - \frac{\ln 2}{n} \right)\right)= e^{-n b^2_n},
\end{equation}
where $b_n \coloneqq a_n + o(a_n)$.
Collecting these observations together, we obtain the upper bound 
\begin{equation}
    \frac{1}{n}D^{\varepsilon_n}_{\min,F} (\rho^{\otimes n} \Vert \sigma^{\otimes n}) \\
    \leq   D(\rho \Vert \sigma) -\sqrt{2V(\rho\Vert \sigma)} \ a_n + o(a_n).\label{eq: upper bound moderate D min}
\end{equation}
We conclude the proof by combining \eqref{eq: lower bound moderate D min} and \eqref{eq: upper bound moderate D min}.
\end{IEEEproof}

\end{document}